\DeclareRobustCommand{\wron}[4]{W^{#1,#2}_{#3,#4}}
\newtheorem{prop}{Proposition}
\newtheorem{lemma}{Lemma}
\def\orho{\alpha}
\def\P{\mathbb{P}}
\def\E{\mathds{E}}
\def\R{\mathds{R}}
\def\C{\mathds{C}}
\def\s{\mathfrak{s}}
\def\m{\mathfrak{m}}
\def\I{\mathcal{I}}
\def\M{\mathcal{M}}
\def\U{\mathcal{U}}
\def\B{B}
\def\H{\mathsf{H}}
\def\h{\mathsf{h}}
\def\e{\mathrm{e}}
\def\d{\mathrm{d}}
\def\G{\mathcal{G}}
\def\F{\mathsf{F}}
\def\X{\mathsf{X}}
\def\indfunc{\mathds{1}}
\def\deq{\mathrel{\mathop:}=}
\title{Pricing Step Options under the CEV and other Solvable Diffusion Models}
\author{G. Campolieti, R.~Makarov, and K. Wouterloot\\
Department of Mathematics, Wilfrid Laurier University\\
Waterloo, Ontario, Canada\\ \texttt{gcampoli@wlu.ca}, \texttt{rmakarov@wlu.ca} and \texttt{kwouterloot@gmail.com}}
\date{January 25, 2013}
\begin{document}

\maketitle

\begin{abstract}
We consider a~special family of occupation-time derivatives, namely proportional step options introduced by
Linetsky in [Math. Finance, 9, 55--96 (1999)]. We develop new closed-form spectral expansions for pricing such
options under a~class of nonlinear volatility diffusion processes which includes the constant-elasticity-of-variance (CEV) model
as an example. In particular, we derive a~general analytically exact expression for the resolvent kernel (i.e. Green's function) of such
processes with killing at an exponential stopping time (independent of the process) of occupation above or below a~fixed level. Moreover, we
succeed in Laplace inverting the resolvent kernel and thereby derive newly closed-form spectral expansion formulae for the transition probability
density of such processes with killing. The spectral expansion formulae
are rapidly convergent and easy-to-implement as they are based simply on knowledge of a~pair of fundamental solutions for an underlying solvable diffusion process.
We apply the spectral expansion formulae to the pricing of proportional step options for four specific families of solvable nonlinear diffusion asset price
models that include the CEV diffusion model and three other multi-parameter state-dependent local volatility confluent hypergeometric diffusion processes.
\end{abstract}

\section{Introduction} \label{sec:1}
Consider a~ continuous-time stochastic asset (e.g. stock) price process $\mathbf{S}=\{ S_t \}_{t\geq 0}$.
We recall that the occupation times $A_T ^{L,\pm} \equiv A_{T,\mathbf{S}}^{L,\pm}$ of the process $\mathbf{S}$
below a~given level $L>0$, during the time interval $[0,T]$, is defined by
\begin{equation*}
A_T ^{L,-} \deq \int_0 ^T \indfunc_{S_t \leq L}\,\d t
\end{equation*}
and, in a~similar way, the occupation time for staying above level $L$ is
\begin{equation*}
A_T ^{L,+} \deq T - A_T ^{L,-} = \int_0 ^T \indfunc_{S_t > L}\,\d t .
\end{equation*}

Occupation time derivatives were introduced as a~more flexible alternative to standard barrier and lookback options. Many types of occupation time options have been proposed such as the step, $\alpha$-quantile, Parisian, and corridor options (e.g. see \cite{Dass95,FT01,Linet99}).  A family of proportional step options was introduced in \cite{Linet99} as a~flexible alternative to knock-out barrier options. Step options do not lose their value when the barrier is reached. Instead, the payoff function of such options depends continuously on the time that the underlying price spends below or above a~ given (barrier) level. For a~proportional step option, its payoff at maturity is defined as the payoff of a~vanilla option discounted by a~factor
$\exp \left( -\orho A_T \right)$, where $A_T$ is an occupation time. For maturity time $T>0$ and given level $L>0$, the payoff functions of the proportional "down-and-out" ($f_\text{step}^-$) and "up-and-out" ($f_\text{step}^+$) step options for the asset price process $\{ S_t \}_{t\geq 0}$ take the form:
\begin{equation}
\label{StepPayoff}
  f_\text{step}^\pm  \equiv \exp \left( -\orho A_{T,\mathbf{S}}^{L,\pm} \right) f(S_T).
\end{equation}
Here, $\orho > 0$ is a~parameter and $f(S_T)$ is a~payoff of a~vanilla option, e.g. $(S_T - K)^+$ for a~European call or $(K - S_T)^+$ for a~put for
a given strike $K>0$. We note that for $\orho = 0$ the proportional step option is simply a~vanilla European option with payoff $f(S_T)$. For any occupation time
$A_T \ge 0$, the factor $\e^{-\orho A_T}$ is non-increasing in $\orho$. Recall the maximum $M_T = \sup\{S_t: 0\le t \le T\}$ and minimum
$m_T = \inf\{S_t: 0\le t \le T\}$ of the process up to time $T$. By continuity of the process $\mathbf{S}$ on $\R_+$, for $S_0 < L$ we have
$\e^{-\orho A_T^{L,+}} \ge \indfunc_{M_T < L}$ and $\e^{-\orho A_T^{L,+}} \to \indfunc_{M_T < L}$ a.s., as $\orho\to\infty$. Similarly,
for $S_0 > L$, $\e^{-\orho A_T^{L,-}} \ge \indfunc_{m_T > L}$ and $\e^{-\orho A_T^{L,-}} \to \indfunc_{m_T > L}$ a.s., as $\orho\to\infty$.
We hence observe that proportional step options may be regarded as an interesting alternative to knock-out barrier options with less risky payoffs but
approach their knock-out barrier counterparts as the discount penalty factor $\orho\to\infty$.

Closed form pricing and hedging formulae have been found for many types of occupation time derivatives under the assumption that the underlying asset price follows a~geometric Brownian motion (e.g. see \cite{Dass95,FT01,Hugo99,Linet99}).  However, analytical pricing of occupation time options is a~non-trivial problem for nonlinear diffusion asset price models. Recently, a~Laplace transform-based approach to price occupation time derivatives under Kou's double exponential jump diffusion model was presented in \cite{CCW10}. In \cite{LK07}, the double Laplace transform of the joint probability density function (PDF) of the asset value and occupation time was obtained for the CEV diffusion model.

In this paper, we present an analytical method for pricing proportional step options under a~class of solvable diffusion models.  The models considered
include the constant elasticity of variance (CEV) diffusion model and other confluent hypergeometric processes. Our approach uses a~closed-form spectral
expansion for the transition density of the process with killing at an exponential stopping time of occupation for the process above or below a~fixed level.
This is essentially the Feynman-Kac theorem combined with an analytical inversion method for the Laplace transform of the Green's function. The Feynman-Kac
theorem has been used in the past to find an expression for joint probability distribution of $S_T$ and $A_T^{L,\pm}$, assuming that $\{ S_t \}_{t\geq 0}$
follows a~Brownian motion. It has also been used to generate analytical prices for occupation time options assuming that the underlying asset follows a~GBM
in \cite{Hugo99,Linet99}, as well as Kou's model in \cite{CCW10}.
In \cite{LK07}, the Feynman-Kac approach is proposed for deriving the double Laplace transform of the joint probability density function (PDF) of $S_T$ and $A_T^{L,\pm}$,
assuming that the underlying asset follows a~CEV process.  In principle, this method can be used to price proportional step options and any other
options whose payoff functions depend only on $S_T$ and $A_T^{L,\pm}$, and it can easily be generalized to a~class of solvable diffusions.
%
In this paper, we follow the approach similar to that of \cite{LK07}. In contrast to \cite{LK07}, we are able to derive computationally tractable pricing formulae for proportional step options. By applying the Feynman-Kac formula, we obtain the Laplace transform of the transition PDF of the asset price process with killing at an exponential stopping time of occupation for the process above or below a~fixed level.
The residue theorem allows us to invert the Laplace transform. The resulting pricing formula is given in the form of an integral of a~spectral series expansion.

Our pricing method for proportional step options does not rely on computing the joint PDF of $S_T$ and $A_T^{L,\pm}$.
Although, such a joint PDF is also obtainable by a single Laplace inversion of the above mentioned transition PDF.
Moreover, the approach in this paper can be readily generalized
to other step options with a~different structure of the occupation time (e.g. corridor options) as the the relevant transition densities follow by the same spectral expansion methodology presented in this paper.
For example, it can be applied to the case of a~double-barrier proportional step option where the payoff depends on the occupation time of the underlying process in between two barriers $L_1$ and $L_2$ with $L_1<L_2$ (see \cite{Linet02}). Such an occupation time is given by
\begin{equation*}
A_{T,\mathbf{S}}^{L_1,L_2} = \int_0 ^T \indfunc_{L_1 <S_t < L_2}\,\d t.
\end{equation*}
We also note that the spectral expansion approach in this paper is generally applicable to pricing step options in the presence of knock-out barriers for the solvable models considered.

The organization of our paper is as follows. In Section~\ref{sec:2}, we present the general framework for the no-arbitrage evaluation of step options and their deltas under solvable diffusion models. This section also contains our main result on the Green's functions and the corresponding closed-form spectral expansions for the transition probability densities with killing at an exponential stopping time of occupation above or below a~fixed level. In Section~\ref{sec:4}, we give explicit analytical expressions for the CEV model \cite{Cox75}
and the other solvable nonlinear local volatility models \cite{CM10}.
Section~\ref{sec:5} presents the methodology for computing the step options.
In particular, the actual implementation of the spectral expansions is given in
Subsection~\ref{subsec:5.1} and a~Monte Carlo bridge sampling approximation approach is presented in Subsection~\ref{subsec:5.2}.
Numerical results for pricing and hedging proportional step options under the CEV model and other classes of confluent hypergeometric diffusion models are presented in Section~\ref{subsec:5.3} and some conclusions are drawn in Section~\ref{sec:7}.
Supporting proofs of our main theoretical results are contained in the Appendix.

\section{Analytical Pricing of Step Options} \label{sec:2}
Consider a~filtered probability space $(\Omega,\mathcal{F},\{\mathcal{F}_t\}_{t\geq 0},\mathbb{P})$ and an asset price process $\mathbf{S}=\{S_t\}_{t\geq 0}$,
as a~non-negative diffusion process started at $S_0=S\in\R_+$ adapted to its natural filtration $\{\mathcal{F}_t\}_{t\geq 0}$ and
having the infinitesimal generator
\begin{equation}
   \label{Generator}
   (\mathcal{G}_{ _{\mathbf{S}}} f)(S) \deq \frac{1}{2}\,\sigma^2(S) f''(S) + (\mathrm{r}-\mathrm{q}) S f'(S)
\end{equation}
acting on a~bounded twice continuously differentiable function $f:\R_+\to\R$. Here, $\mathrm{r}\geq 0$ and $\mathrm{q}\geq 0$ are
respectively the constant risk-free interest rate and the constant dividend yield rate.
We will denote $\mathbb{P}$ as an equivalent martingale measure (i.e. risk-neutral probability measure) with discounted process
$\{\e^{-(\mathrm{r}-\mathrm{q})t} S_t\}_{t\geq 0}$ as a~$\mathbb{P}$-martingale. In all pricing models considered in this paper this property holds true.
The asset price obeys the stochastic differential equation
$\d S_t = (\mathrm{r}-\mathrm{q})\d t + \sigma(S_t)\d W_t$, where $\{W_t\}_{t\geq 0}$
is a~standard $\mathbb{P}$--Brownian motion. In the sequel, we shall simply set the dividend yield $\mathrm{q}=0$.

The boundary behavior of the $\mathbf{S}$-process at the origin and infinity depends on the growth behavior of $\sigma(S)$ as $S\to 0$ and $S\to\infty$,
respectively. For the models considered in this paper, $S=\infty$ is a~natural boundary. The point $S=0$ can be either natural, or a~regular boundary
(which we shall specify as killing) or an exit boundary. If the equity price process hits the zero boundary before the maturity date, the equity goes
to bankruptcy and a~derivative on the equity becomes worthless. Therefore, for a~model with default, the payoff in equation (\ref{StepPayoff}) takes the form
\[ f_\text{step}^\pm =\e^{-\rho A_{T,\mathbf{S}}^{L,\pm}} f(S_T) \indfunc_{T<\tau_0},\]
where $\tau_0$ is the first hitting time at zero. We note that this is equivalent to defining the payoff by equation (\ref{StepPayoff})
where $f(\partial)\equiv 0$ and the $\mathbf{S}$-process is given by the cemetery state $\partial$ upon hitting the origin, i.e.
$S_T \equiv \partial$ for $T \ge \tau_0$.

It is a~typical situation when a~solvable model can be obtained from another solvable underlying process, say a~diffusion $\mathbf{X}$,
by simply applying a~change of variables. Such a~transformation simplifies our formulation and it proves convenient to work directly with the
underlying diffusion $\mathbf{X}=\{X_t\}_{t\geq 0}$ instead of the asset price process $\mathbf{S}$.
We assume that the asset price process $\mathbf{S}$ is given by a~strictly increasing smooth
$\mathcal{C}^2(\I)$ map $\F:\I\to\R_+$ of the process $\mathbf{X}$,
i.e. $S_t=\F(X_t), t\ge 0$. The process $\mathbf{X}$ is taken to be a~one-dimensional time-homogeneous regular diffusion on
an interval $\I\equiv (l,r)$, $-\infty\leq l<r\leq\infty$. We denote the inverse map by $\X\equiv \F^{-1}$.
Since $\F$ is strictly increasing, the origin $S=0$ and the point at infinity $S=\infty$
have the same boundary classification as the respective left, $l$, and right, $r$, endpoints of $\mathbf{X}$.
Moreover, the occupation times for processes $\mathbf{S}$ and $\mathbf{X}$ are simply related:
\begin{equation*}
A_{t,\mathbf{S}}^{L,+} = A_{t,\mathbf{X}}^{\ell,+} \deq \int_0 ^t \indfunc_{X_u > \ell}\,\d u\,\,; \,\,
A_{t,\mathbf{S}}^{L,-} = A_{t,\mathbf{X}}^{\ell,-} \deq \int_0 ^t \indfunc_{X_u \le \ell}\,\d u,
\end{equation*}
where $\F(\ell) = L$, $\ell = \X(L)$ for any level $\ell\in\I$, i.e. $l < \ell < r$. The payoff in equation (\ref{StepPayoff}) is then equivalently
given in terms of the occupation times for $\mathbf{X}$ and its terminal value:
$f_\text{step}^\pm  = \exp ( -\orho A_{T,\mathbf{X}}^{\ell,\pm} )h(X_T)$ where $h(x)\deq f(\F(x))$. We note also that the event corresponding to the
$\mathbf{S}$-process hitting zero is equivalent to the $\mathbf{X}$-process hitting the left boundary $l$, at which time the process
$\mathbf{X}$ is sent to the cemetery state. Hence, for an exit or regular killing boundary $l$ we have $X_T \equiv \partial$ for $T \ge \tau_0$ and
we set $h\equiv 0$.

Throughout we respectively denote the (risk-neutral) expectation operator and probability measure for $\mathbf{X}$ started at $X_0 = x$
by $\E_x$ and $\P_{x}$. By the map we have the spot $S \equiv S_0 = \F(x)$ and $x = \X(S)$.
The no-arbitrage prices $V_\text{step}^\pm$ of proportional step options (with constant interest rate and zero dividend) take the form
\begin{align}
V_\text{step}^\pm (S,T)
   = \e^{-\mathrm{r}T}\E\left[f_\text{step}^\pm \mid S_0 = S\right] &= \e^{-\mathrm{r}T} \E_x \left[  \e^{-\orho A_{T,\mathbf{X}}^{\ell,\pm} } h(X_T) \right]
   \nonumber\\
   &= \e^{-\mathrm{r}T} \int_l^r h(y) \,\tilde{p}_\alpha^{\ell,\pm}(T;x,y) \,\d y.
   \label{step_price}
\end{align}
Hence, the problem of pricing proportional step options is reduced to evaluating an integral involving the transition PDF
$\tilde{p}_\alpha^{\ell,+}$ for the process $\mathbf{X}$ additionally killed according to its
($\alpha$-proportional) occupation time {\it above} level $\ell$ or
$\tilde{p}_\alpha^{\ell,-}$ for killing according to its occupation time {\it below} level $\ell$.

Let $\tau\sim \text{Exp}(\lambda)$ be
an exponentially distributed stopping time with parameter $\lambda>0$, independent of the process $\mathbf{X}$. For $\lambda > 0$,
the Green's functions give us expressions for the expectations of important related functionals involving the process
at the exponentially stopped time $\tau$:
\begin{align}
\E_x\left[ \e^{-\alpha A_{\tau,\mathbf{X}}^{\ell,+}}\,;\,X_\tau \in \d y\right]
\equiv \E_x\left[ \exp\left(-\alpha\int_0^\tau \indfunc_{X_u \ge \ell}\, \d u\right)\,;\,X_\tau \in \d y\right]
 = \lambda\widetilde{G}_\alpha^{\ell,+}(x,y,\lambda)\,\d y \,,\label{expectations-for-stopping1}\\
\E_x\left[ \e^{-\alpha A_{\tau,\mathbf{X}}^{\ell,-}}\,;\,X_\tau \in \d y\right]
\equiv \E_x\left[ \exp\left(-\alpha\int_0^\tau \indfunc_{X_u \le \ell}\, \d u\right)\,;\,X_\tau \in \d y\right]
 = \lambda\widetilde{G}_\alpha^{\ell,-}(x,y,\lambda)\,\d y\,.\label{expectations-for-stopping2}
\end{align}
The transition PDFs, $\tilde{p}_\alpha^{\ell,\pm}(t;x,y)$, $x,y\in\I, t\ge 0$, are given by the Laplace inverse (with respect to complex $\lambda$)
of the respective Green's functions:
\begin{align}\label{PDF-Green-Laplace}
\tilde{p}_\alpha^{\ell,\pm}(t;x,y)\,\d y  =  \E_x\left[ \e^{-\alpha A_{t,\mathbf{X}}^{\ell,\pm}}\,;\,X_t \in \d y\right]
&= {\mathcal L}_\lambda^{-1}\left(\lambda^{-1}\E_x\left[ \e^{-\alpha A_{\tau,\mathbf{X}}^{\ell,\pm}}\,;\,X_\tau \in \d y\right]\right)(t)
\nonumber \\
&= {\mathcal L}_\lambda^{-1}\big(\widetilde{G}_\alpha^{\ell,\pm}(x,y,\lambda)\big)(t) \,\d y\,.
\end{align}
The Green's functions $\widetilde{G}_\alpha^{\ell,\pm}(x,y,\lambda)
= \int_0^\infty \e^{-\lambda t} \tilde{p}_\alpha^{\ell,\pm}(t;x,y)\,\d t
 \deq {\mathcal L}_t\big(\tilde{p}_\alpha^{\ell,\pm}(t;x,y) \big)(\lambda)$ are given by the Laplace transform of the respective transition PDFs.
Note that, throughout our paper, the Green's functions and transition PDFs are defined {\it with respect to the Lebesgue measure}.
In what follows we firstly derive closed-form analytical expressions for $\widetilde{G}_\alpha^{\ell,\pm}(x,y,\lambda)$,
and hence for the expectations in (\ref{expectations-for-stopping1}) and (\ref{expectations-for-stopping2}).
This is the result in Lemma \ref{Lemma_Greenfunc}.
Secondly, we proceed to analytically invert the Laplace transform in (\ref{PDF-Green-Laplace}) and obtain closed-form spectral expansions
for $\tilde{p}_\alpha^{\ell,\pm}(t;x,y)$ in Proposition \ref{propn_spectral}.
We note that our results are valid for quite general diffusions.
The transition PDFs are later used to compute the option prices given by the integral in (\ref{step_price}).

At this point we remark on the important connection between the above transition PDF and the joint PDF of
the occupation time $A_{t,\mathbf{X}}^{\ell,\pm} \in [0,t]$ and the terminal value $X_t\in\I$ for the process started at $X_0=x$.
In particular, using the joint PDF (for either occupation time above or below level $\ell$)
\[\P_x\left(A_{t,\mathbf{X}}^{\ell,\pm}\in \d u \,,\, X_t\in\d y\right) := p_{A_t,X_t}^{\ell,\pm}(u,y|x) \d u\,\d y
\]
the expectation in (\ref{PDF-Green-Laplace}) takes the equivalent form
\[ \E_x\left[ \e^{-\alpha A_{t,\mathbf{X}}^{\ell,\pm}}\,;\,X_t \in \d y\right] =
\left( \int_0^t \e^{-\alpha u} p_{A_t,X_t}^{\ell,\pm}(u,y|x)\,\d u\right)\,\d y
\equiv {\mathcal L}_{u} \left(p_{A_t,X_t}^{\ell,\pm}(u,y|x) \right)\!(\alpha)\,\d y. \]
Hence, $p_{A_t,X_t}^{\ell,\pm}(u,y|x)$ and $\tilde{p}_\alpha^{\ell,\pm}(t;x,y)$ are Laplace transforms of one another. In particular,
\begin{equation} \label{SingleLaplacetransformPDF}
  p_{A_t,X_t}^{\ell,\pm}(u,y|x) = {\mathcal L}_\alpha^{-1} \left(  \tilde{p}_\alpha^{\ell,\pm}(t;x,y) \right)(u).
\end{equation}
Alternatively, this joint PDF can also be expressed as a double inverse Laplace transform of the Green's function:
\begin{equation} \label{DblLaplacetransformGreen}
 p_{A_t,X_t}^{\ell,\pm}(u,y|x)
  = {\mathcal L}_\alpha^{-1} \left( {\mathcal L}_\lambda^{-1}\big(\widetilde{G}_\alpha^{\ell,\pm}(x,y,\lambda)\big)(t)\right)(u).
\end{equation}
Since we are able to analytically invert the Laplace transform with respect to $\lambda$,
the joint density is given by a single Laplace inverse of the transition PDF $\tilde{p}_\alpha^{\ell,\pm}(t;x,y)$ as in (\ref{SingleLaplacetransformPDF}).
This single Laplace inversion operation can be readily performed numerically. Being given the joint PDF $p_{A_t,X_t}$, the no-arbitrage price of a general derivative contract whose payoff is a function of the occupation time and the terminal asset price, $f(A_{T,\mathbf{S}},S_T)$, can be computed as follows:
\begin{align}
V(S,T)
   &= \e^{-\mathrm{r}T}\E\left[f(A_{T,\mathbf{S}},S_T) \mid S_0 = S\right] = \e^{-\mathrm{r}T} \E_x \left[ f(A_{T,\mathbf{X}},\F(X_T)) \right]
   \nonumber\\
   &= \e^{-\mathrm{r}T} \int_0^T\int_l^r f(u,y) \,p_{A_T,X_T}(u,y|x) \,\d u \,\d y
   \label{general_price}\\
   &= \e^{-\mathrm{r}T} \int_0^T\int_l^r f(u,y) \,{\mathcal L}_\alpha^{-1} \left(  \tilde{p}_\alpha^{\ell,\pm}(T;x,y) \right)\!(u) \,\d u \,\d y. \nonumber
\end{align}
Here we also note that this double integral reduces to the single integral in (\ref{step_price}) in the particular case of pricing step options.
Clearly, equation (\ref{step_price}) gives the more direct method for computing step options, both analytically and numerically.
If we only had the double Laplace transform of the PDF $p_{A_t,X_t}$, the calculation of the derivative price would involve possibly a two-dimensional integral  in the double Laplace inverse to obtain $p_{A_t,X_t}$ and then the two-dimensional integral over time and space.


We now proceed with the details of the diffusion processes that we consider in this paper. We begin by assuming that the underlying diffusion $\mathbf{X}$ has the infinitesimal generator
\begin{equation}
   \label{Generator_X}
   (\G f)(x) \deq \frac{1}{2} b^2(x) f''(x)+a(x)f'(x)
\end{equation}
acting on a~bounded twice continuously differentiable function $f:\I\to\R$.
The (infinitesimal) drift and diffusion coefficient functions are assumed smooth with continuous
$a(x), a'(x), b(x) > 0, b''(x)$ on the open interval $\I$. The diffusion $\mathbf{X}$ hence has smooth
scale and speed density functions respectively defined by
\begin{equation}
\s(x)\deq \exp\left(-\int^x\frac{2a(z)}{b^2(z)}dz\right)
   \mbox{ \ and \ } \m(x)\deq\frac{2}{b^2(x)\s(x)}. \label{smx}
   \end{equation}
We recall that the Green's function $G(x,y,\lambda)$ for process $\mathbf{X}$ (with generator in (\ref{Generator_X})) has the standard form
(e.g. see \cite{BS02}):
\begin{equation}
G(x,y,\lambda) = ({\mathcal W}_\lambda)^{-1}\m(y)\psi_\lambda(x \wedge y)\phi_\lambda(x\vee y),
\label{greenfunc}
\end{equation}
$x \wedge y\deq\min\{x,y\}$ and $x\vee y \deq\max\{x,y\}$. The pair of functions $\{\psi_\lambda,\phi_\lambda\}$
solve $\G \varphi(x) = \lambda \varphi(x)$. Their Wronskian is given by
$W[\phi_\lambda,\psi_\lambda](x) = {\mathcal W}_\lambda \,\s(x)$ with ${\mathcal W}_\lambda$ depending only on $\lambda$. [Throughout, the
Wronskian of two functions is denoted by $W[f,g](x)\deq f(x)g'(x) - g(x)f'(x)$.] These functions are uniquely characterized
(within a~multiplicative constant) by requiring that, for real values of $\lambda>0$,
$\psi_\lambda$ and $\phi_\lambda$ are respectively increasing and decreasing functions on $\I$
and by additionally posing boundary conditions at regular (non-singular) boundaries of $X$ (see~\cite{BS02}).
For a~regular left boundary $l$, $\psi_\alpha(l+) = 0$ if $l\notin \I$
is specified as killing or $\frac{1}{\s(l+)}\frac{d\psi_\alpha(l+)}{dx}=0$
if $l$ is specified as reflecting and included in the state space.
If $l$ is a~singular boundary, the functions have the following boundary properties:
if $l$ is entrance($\equiv$entrance-not-exit), then
$\psi_\alpha(l+) > 0, \frac{1}{\s(l+)}\frac{d\psi_\alpha(l+)}{dx}=0$;
if $l$ is exit($\equiv$exit-not-entrance), then $\psi_\alpha(l+) = 0,
\frac{1}{\s(l+)}\frac{d\psi_\alpha(l+)}{dx}>0$;
if $l$ is a~natural boundary, then $\psi_\alpha(l+) =0,
\frac{1}{\s(l+)}\frac{d\psi_\alpha(l+)}{dx}=0$.
Analogous conditions hold for the right boundary:
$\psi_\alpha(r-) > 0, \frac{1}{\s(r-)}\frac{d\psi_\alpha(r-)}{dx}=0$ if $r$ is entrance;
$\psi_\alpha(r-) = 0$, $\frac{1}{\s(r-)}\frac{d\phi_\alpha(r-)}{dx} < 0$ if $r$ is exit;
$\psi_\alpha(r-) = 0$, $\frac{1}{\s(r-)}\frac{d\phi_\alpha(r-)}{dx} = 0$ if $r$ is a~natural boundary.

In what follows we shall deal with Wronskians of fundamental solutions with differing values of the eigenvalue parameter. Hence,
it is convenient to adopt a~slightly more compact notation as follows. Let $\lambda, \gamma \in\C$, then we define:
\begin{equation}\label{Wronskians}
\wron{\phi}{\phi}{\lambda}{\gamma}(x) \deq W[\phi_\lambda,\phi_\gamma](x),\quad \wron{\phi}{\psi}{\lambda}{\gamma}(x) \deq W[\phi_\lambda,\psi_\gamma](x),\quad \wron{\psi}{\psi}{\lambda}{\gamma}(x) \deq W[\psi_\lambda,\psi_\gamma](x),
\end{equation}
where $\wron{\psi}{\phi}{\gamma}{\lambda}(x) = - \wron{\phi}{\psi}{\lambda}{\gamma}(x)$ and for $\gamma=\lambda$
we recover the above Wronskian $\wron{\phi}{\psi}{\lambda}{\lambda}(x) = {\mathcal W}_\lambda \,\s(x)$.

\begin{lemma} \label{Lemma_Greenfunc}
Let $\mathbf{X}$ have the Green's function in (\ref{greenfunc}) with
$\psi_\lambda$ and $\phi_\lambda$ specified as above. Then, the Green's function
$\widetilde{G} = \widetilde{G}_\alpha^{\ell,+}(x,y,\lambda)$, solving
$(\tilde{\G}^+_x -\lambda)\tilde{G} = -\delta (x - y)$ with infinitesimal generator $\tilde{\G}^+_x \deq
(\G - \alpha\indfunc_{x\ge\ell})$, $\alpha\in\R$, and having boundary conditions as in $G$, is given by:
\begin{equation}\label{Greenfunc_plus}
\widetilde{G}_\alpha^{\ell,+}(x,y,\lambda) = \left\{\begin{array}{ll}
 G(x,y,\lambda) + \displaystyle\frac{\wron{\phi}{\phi}{\lambda}{\lambda+\alpha}(\ell)}{\wron{\phi}{\psi}{\lambda+\alpha}{\lambda}(\ell)}
  \m(y){\psi_\lambda(x)\psi_\lambda(y) \over {\mathcal W}_\lambda}, & x \leq \ell, \,y \leq \ell,\\\\
    \displaystyle{\s(\ell)\over \wron{\phi}{\psi}{\lambda+\alpha}{\lambda}(\ell)}
   \m(y)\psi_\lambda(x)\phi_{\lambda+\alpha}(y), & x \leq \ell, \,y \geq \ell,\\\\
    \displaystyle{\s(\ell)\over \wron{\phi}{\psi}{\lambda+\alpha}{\lambda}(\ell)}
   \m(y)\phi_{\lambda+\alpha}(x)\psi_\lambda(y), & x \geq \ell, \,y \leq \ell,\\\\
   G(x,y,\lambda+\alpha) + \displaystyle\frac{\wron{\psi}{\psi}{\lambda}{\lambda+\alpha}(\ell)}{\wron{\phi}{\psi}{\lambda+\alpha}{\lambda}(\ell)}
  \m(y){\phi_{\lambda+\alpha}(x)\phi_{\lambda+\alpha}(y) \over {\mathcal W}_{\lambda+\alpha}}, & x \geq \ell, \,y \geq \ell.
\end{array}\right.
\end{equation}
Silimarly, $\widetilde{G} = \widetilde{G}_\alpha^{\ell,-}(x,y,\lambda)$ solving
$(\tilde{\G}^-_x -\lambda)\tilde{G} = -\delta (x - y)$ with infinitesimal generator $\tilde{\G}^-_x \deq
(\G - \alpha\indfunc_{x\le \ell})$, $\alpha\in\R$, and having boundary conditions as in $G$, is given by:
\begin{equation}\label{Greenfunc_minus}
\widetilde{G}_\alpha^{\ell,-}(x,y,\lambda) = \left\{\begin{array}{ll}
   G(x,y,\lambda+\alpha) + \displaystyle\frac{\wron{\phi}{\phi}{\lambda+\alpha}{\lambda}(\ell)}{\wron{\phi}{\psi}{\lambda}{\lambda+\alpha}(\ell)}
  \m(y){\psi_{\lambda+\alpha}(x)\psi_{\lambda+\alpha}(y) \over {\mathcal W}_{\lambda+\alpha}}, & x \leq \ell, \,y \leq \ell, \\\\
    \displaystyle{\s(\ell)\over \wron{\phi}{\psi}{\lambda}{\lambda+\alpha}(\ell)}
   \m(y)\psi_{\lambda+\alpha}(x)\phi_\lambda(y), & x \leq \ell, \,y \geq \ell,\\\\
    \displaystyle{\s(\ell)\over \wron{\phi}{\psi}{\lambda}{\lambda+\alpha}(\ell)}
   \m(y)\phi_{\lambda}(x)\psi_{\lambda+\alpha}(y), & x \geq \ell, \,y \leq \ell,\\\\
    G(x,y,\lambda) + \displaystyle\frac{\wron{\psi}{\psi}{\lambda+\alpha}{\lambda}(\ell)}{\wron{\phi}{\psi}{\lambda}{\lambda+\alpha}(\ell)}
  \m(y){\phi_\lambda(x)\phi_\lambda(y) \over {\mathcal W}_\lambda}, &  x \geq \ell, \,y \geq \ell.
\end{array}\right.
\end{equation}
\end{lemma}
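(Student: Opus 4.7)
The defining equation $(\tilde{\G}^+_x - \lambda)\tilde{G} = -\delta(x-y)$ is an ODE in $x$ with a piecewise spectral coefficient (equal to $\lambda$ on $\{x<\ell\}$ and to $\lambda+\alpha$ on $\{x>\ell\}$) and a point source at $x=y$. I will construct $\tilde{G}$ by the standard recipe: express it piecewise as a linear combination of fundamental solutions of the homogeneous equation on each sub-interval cut out by $\ell$ and $y$, then pin down the coefficients by matching at the two interfaces. On $\{x<\ell\}$ the fundamental pair is $\{\psi_\lambda,\phi_\lambda\}$, and on $\{x>\ell\}$ it is $\{\psi_{\lambda+\alpha},\phi_{\lambda+\alpha}\}$. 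Because $\tilde{G}$ inherits the same boundary behavior at $l$ and $r$ as $G(\cdot,\cdot,\lambda)$, the piece near $l$ must be a multiple of $\psi_\lambda(x)$ and the piece near $r$ a multiple of $\phi_{\lambda+\alpha}(x)$. This fixes an ansatz in each of the four cases determined by whether $x$ and $y$ lie below or above $\ell$, with either three or four undetermined scalars.

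Two kinds of matching condition pin those scalars down. At the source $x=y$, integrating the ODE across a small neighborhood gives continuity of $\tilde G$ together with the jump
\[
\left.\frac{\partial\tilde G}{\partial x}\right|_{x=y^+} - \left.\frac{\partial\tilde G}{\partial x}\right|_{x=y^-} \;=\; -\frac{2}{b^2(y)} \;=\; -\m(y)\s(y).
\]
At the interface $x=\ell$ (with $\ell\ne y$) the coefficient $\alpha\indfunc_{x\ge\ell}$ is merely bounded and contributes no distributional term, so both $\tilde G$ and $\partial_x\tilde G$ are continuous across $\ell$. Each of the four cases therefore produces a square linear system for the unknown scalars, which I solve by Cramer's rule.

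The $2\times 2$ determinants that arise are exactly the Wronskians at $\ell$ of fundamental solutions with possibly different spectral parameters introduced in \eqref{Wronskians}, namely $\wron{\phi}{\phi}{\lambda}{\lambda+\alpha}(\ell)$, $\wron{\psi}{\psi}{\lambda}{\lambda+\alpha}(\ell)$, and $\wron{\phi}{\psi}{\lambda+\alpha}{\lambda}(\ell)$; denominators of the form $W[\phi_\mu,\psi_\mu](\ell)$ collapse to $\mathcal{W}_\mu\,\s(\ell)$ by the Abel-type identity used to define $\mathcal{W}_\mu$. Assembling the pieces reproduces the four cases of \eqref{Greenfunc_plus}: in each same-side region the $\psi(x\wedge y)\phi(x\vee y)$ structure coming from the source at $y$ reconstitutes $G(x,y,\lambda)$ or $G(x,y,\lambda+\alpha)$, while the $\ell$-matching contributes the stated Wronskian-ratio correction times either $\psi_\lambda(x)\psi_\lambda(y)$ or $\phi_{\lambda+\alpha}(x)\phi_{\lambda+\alpha}(y)$. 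In the opposite-side cases there is no direct source contribution and a single Wronskian ratio suffices, producing the factor $\s(\ell)/\wron{\phi}{\psi}{\lambda+\alpha}{\lambda}(\ell)$ after simplification by the Abel identity. Formula \eqref{Greenfunc_minus} follows by the identical argument with the killing now on $\{x\le\ell\}$, which simply interchanges the roles of $\lambda$ and $\lambda+\alpha$ across the interface. The main obstacle is purely bookkeeping: tracking Wronskian sign conventions and the $\lambda$-versus-$(\lambda+\alpha)$ assignment on each side of $\ell$, which is exactly the purpose of the compact notation \eqref{Wronskians}.
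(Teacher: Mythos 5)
Your proposal is correct and follows essentially the same route as the paper: piecewise fundamental solutions $\{\psi_\lambda,\phi_\lambda\}$ below $\ell$ and $\{\psi_{\lambda+\alpha},\phi_{\lambda+\alpha}\}$ above, $C^1$ matching at $\ell$ producing the mixed-parameter Wronskian ratios, and the standard source condition at $x=y$. The only (cosmetic) difference is that the paper first assembles the global fundamental pair $\tilde\psi_\lambda,\tilde\phi_\lambda$ and its Wronskian $\widetilde{\mathcal W}_\lambda\,\s(x)$ and then invokes the canonical formula $\widetilde G=\m(y)\tilde\psi_\lambda(x\wedge y)\tilde\phi_\lambda(x\vee y)/\widetilde{\mathcal W}_\lambda$, which absorbs your explicit jump condition $-\m(y)\s(y)$ at $x=y$.
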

\begin{proof} See Appendix \ref{subsect_A11}.
\end{proof}
\noindent Note that in the trivial case where $\alpha = 0$ we recover the Green's function for process $\mathbf{X}$ on $\I$, i.e.
$\widetilde{G}_0^{\ell,\pm}(x,y,\lambda) = G(x,y,\lambda)$. In the limit $\alpha\to\infty$, it can be shown that the above Green's
functions respectively recover those for the process killed at a~lower or upper level $\ell$
(see the remark just after the proof of Lemma \ref{Lemma_Greenfunc}).

An analytical inversion of the respective Laplace transforms
leads to closed-form spectral expansions for the transition PDFs. The Green's functions in equations (\ref{Greenfunc_plus}) and (\ref{Greenfunc_minus}) are
functions of complex $\lambda\in\C$ and the respective Laplace inverses are given by the Bromwich contour integral:
$\tilde{p}_\alpha^{\ell,\pm}(t;x,y) = {1\over 2\pi i}\int_{c - i\infty}^{c + i\infty} \e^{\lambda t} \widetilde{G}_\alpha^{\ell,\pm}(x,y,\lambda) \,\d \lambda$,
where all singularities of the respective Green's functions lie to the left of the Bromwich line $\text{Re}\,\lambda = c$.
The spectral expansion for $\tilde{p}_\alpha^{\ell,\pm}(t;x,y)$ can be obtained by appropriately closing the
Bromwich contour and applying Cauchy's Residue Theorem. The residue contributions from simple poles give rise to the discrete part of the
spectral expansion, while continuous parts of the spectral expansion can arise as integrals over branch cut discontinuities of $\widetilde{G}$
in the complex $\lambda$ plane to the left of the Bromwich line. The analytical form of the spectral expansion
of $\tilde{p}_\alpha^{\ell,\pm}(t;x,y)$ clearly depends upon the singularity structure of
$\widetilde{G}_\alpha^{\ell,\pm}(x,y,\lambda)$.

If the Green's function is a~meromorphic function that is analytic in $\lambda$ with the
exception of a~countable number of isolated simple poles then the corresponding transition PDF has a~discrete spectral expansion.
In this case we are in so-called {\it Spectral Category I}. We refer to \cite{Linet04}
for a~general discussion and summary of the possible spectral categories and their relation to the boundary classification of the endpoints
for a~one-dimensional time-homogeneous diffusion. The spectral category and the analytic properties of the
fundamental pair $\{\psi_\lambda,\phi_\lambda\}$ are intimately related
to the boundary classification. We remark that the boundary classification (i.e. natural, exit, entrance or regular) of the endpoints $l,r$ is the
same for process $\mathbf{X}$, with generator $\G$, as for the processes $\tilde{\mathbf{X}}^{\ell,\pm}$ defined by the respective
generators $\tilde{\G}^\pm$ in Lemma \ref{Lemma_Greenfunc}.
This fact is easily shown since all processes have the same above scale and speed measure and the additional killing rate,
$\alpha \indfunc_{x\ge \ell}$ or $\alpha \indfunc_{x\le \ell}$,
is a~piecewise constant function of $x$, and hence does not affect the Feller conditions. Moreover, the two Sturm-Liouville (SL)
operators $-\tilde{\G}^\pm$ and $-\G$ differ only by a~piecewise constant function $\alpha \indfunc_{x\ge \ell}$ or $\alpha \indfunc_{x\le \ell}$.
The so-called potential function in the Liouville normal form of the SL equation associated to the respective sets of operators
$-\tilde{\G}^\pm$ and $-\G$ also differs only by a~piecewise constant function.
It follows that an endpoint $e\in\{l,r\}$ is non-oscillatory (NONOSC) for the process $\mathbf{X}$ if and only if it is NONOSC
for processes $\tilde{\mathbf{X}}^{\ell,\pm}$. Moreover, an endpoint $e\in\{l,r\}$ that is O-NO (oscillatory/non-oscillatory)
for $\mathbf{X}$ will also be O-NO for processes $\tilde{\mathbf{X}}^{\ell,\pm}$ with spectral cutoff that may be shifted by the amount $\alpha$.

If both endpoints $\{l,r\}$ are NONOSC, the eigenspectrum of the SL operator is simple (non-negative for $\alpha \ge 0$) and purely discrete,
i.e. we are in Spectral Category I. Non-natural (regular, exit or entrance) boundaries are always NONOSC, while natural boundaries may
be NONOSC or O-NO with some spectral cutoff value. In what follows we shall assume that Spectral Category I holds where the endpoints $\{l,r\}$ of the
diffusion $\mathbf{X}$ are NONOSC. The Green's functions $\widetilde{G}_\alpha^{\ell,\pm}(x,y,\lambda)$ in Lemma \ref{Lemma_Greenfunc}
are meromorphic functions of $\lambda$ with simple poles at $\lambda = -\tilde{\lambda}_n, n =1,2,\ldots$, where the set of eigenvalues $\{\tilde{\lambda}_n\}_{n\ge 1}$ corresponding
to the SL operator $-\tilde{\G}^+$ (or $-\tilde{\G}^-$) form a~monotonically increasing sequence of real values (non-negative for $\alpha \ge 0$):
$\tilde{\lambda}_n \nearrow \infty$ as $n \nearrow \infty$.
The following result gives us the spectral expansions for $\tilde{p}_\alpha^{\ell,\pm}(t;x,y)$ in the general case of Spectral Category I.
\begin{prop} \label{propn_spectral}
Assume Spectral Category I with endpoints of the diffusion $\mathbf{X}$ as NONOSC. Then, the
Green functions $\widetilde{G}_\alpha^{\ell,\pm}(x,y,\lambda)$ in (\ref{Greenfunc_plus}) and (\ref{Greenfunc_minus})
are meromorphic functions of $\lambda$ and the corresponding transition PDFs (i.e. Laplace transform inverses) have the discrete spectral expansions
\begin{equation}\label{spec_exp}
\tilde{p}_\alpha^{\ell,\pm}(t;x,y) = \m(y)\sum_{n=1}^\infty \e^{-\tilde\lambda_n t}\tilde\phi_{n,\alpha}^{\ell,\pm}(x)\tilde\phi_{n,\alpha}^{\ell,\pm}(y).
\end{equation}
For $\tilde{p}_\alpha^{\ell,+}$:
\begin{equation}\label{spec_exp_a}
\tilde\phi_{n,\alpha}^{\ell,+}(x)\tilde\phi_{n,\alpha}^{\ell,+}(y) = \left\{\begin{array}{ll}
\left( \displaystyle\frac{\wron{\phi}{\,\phi}{-\tilde\lambda_n}{-\tilde\lambda_n +\alpha}(\ell)}
{C^{\ell,+}_{n,\alpha}\,{\mathcal W}_{\!-\tilde\lambda_n}}\right)
 \psi_{\!-\tilde\lambda_n}(x) \psi_{\!-\tilde\lambda_n}(y), & x \leq \ell, \,y \leq \ell,\\\\
 \displaystyle{ \frac{\s(\ell)}{C^{\ell,+}_{n,\alpha}}}
 \psi_{\!-\tilde\lambda_n}(x) \phi_{\!-\tilde\lambda_n + \alpha}(y), & x \leq \ell, \,y \geq \ell,\\\\
 \displaystyle{ \frac{\s(\ell)}{C^{\ell,+}_{n,\alpha}}}
 \phi_{\!-\tilde\lambda_n + \alpha}(x)\psi_{\!-\tilde\lambda_n}(y),  & x \geq \ell, \,y \leq \ell,\\\\
\left(\displaystyle\frac{\wron{\psi}{\,\psi}{-\tilde\lambda_n}{-\tilde\lambda_n +\alpha}(\ell)}
 {C^{\ell,+}_{n,\alpha}\,{\mathcal W}_{\!-\tilde\lambda_n + \alpha}}\right)
 \phi_{\!-\tilde\lambda_n + \alpha}(x) \phi_{\!-\tilde\lambda_n + \alpha}(y), & x \geq \ell, \,y \geq \ell.
\end{array}\right.
\end{equation}
where $C^{\ell,+}_{n,\alpha} \deq {d\over d\lambda}\wron{\phi}{\,\psi}{\lambda +\alpha}{\lambda}(\ell)\vert_{\lambda=-\tilde\lambda_n}$
with eigenvalues $\{\tilde\lambda_n\}_{n\ge 1}$ as the set of increasing simple zeros
solving $\wron{\phi}{\,\psi}{-\tilde\lambda_n +\alpha}{-\tilde\lambda_n}(\ell) = 0$.

\noindent For $\tilde{p}_\alpha^{\ell,-}$:
\begin{equation}\label{spec_exp_b}
\tilde\phi_{n,\alpha}^{\ell,-}(x)\tilde\phi_{n,\alpha}^{\ell,-}(y) = \left\{\begin{array}{ll}
\left( \displaystyle\frac{\wron{\phi}{\,\phi}{-\tilde\lambda_n +\alpha}{-\tilde\lambda_n}(\ell)}
{C^{\ell,-}_{n,\alpha}\,{\mathcal W}_{\!-\tilde\lambda_n + \alpha}}\right)
 \psi_{\!-\tilde\lambda_n + \alpha}(x) \psi_{\!-\tilde\lambda_n + \alpha}(y), & x \leq \ell, \,y \leq \ell,\\\\
 \displaystyle{ \frac{\s(\ell)}{C^{\ell,-}_{n,\alpha}}}
 \psi_{\!-\tilde\lambda_n +\alpha}(x) \phi_{\!-\tilde\lambda_n}(y), & x \leq \ell, \,y \geq \ell,\\\\
 \displaystyle{ \frac{\s(\ell)}{C^{\ell,-}_{n,\alpha}}}
 \phi_{\!-\tilde\lambda_n}(x)\psi_{\!-\tilde\lambda_n +\alpha}(y),  & x \geq \ell, \,y \leq \ell,\\\\
\left(\displaystyle\frac{\wron{\psi}{\,\psi}{-\tilde\lambda_n +\alpha}{-\tilde\lambda_n}(\ell)}
 {C^{\ell,-}_{n,\alpha}\,{\mathcal W}_{\!-\tilde\lambda_n}}\right)
 \phi_{\!-\tilde\lambda_n}(x) \phi_{\!-\tilde\lambda_n}(y), & x \geq \ell, \,y \geq \ell.
\end{array}\right.
\end{equation}
where $C^{\ell,-}_{n,\alpha} \deq {d\over d\lambda}\wron{\phi}{\,\psi}{\lambda}{\lambda +\alpha}(\ell)\vert_{\lambda=-\tilde\lambda_n}$
with eigenvalues $\{\tilde\lambda_n\}_{n\ge 1}$ as the set of increasing simple zeros
solving $\wron{\phi}{\,\psi}{-\tilde\lambda_n}{-\tilde\lambda_n +\alpha}(\ell) = 0$.
\end{prop}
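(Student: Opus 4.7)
The plan is to compute the transition PDFs by Laplace inversion of the Green's functions of Lemma~\ref{Lemma_Greenfunc} through the Bromwich integral
\[
\tilde{p}_\alpha^{\ell,\pm}(t;x,y) = \frac{1}{2\pi i}\int_{c-i\infty}^{c+i\infty} \e^{\lambda t}\,\widetilde{G}_\alpha^{\ell,\pm}(x,y,\lambda)\,\d\lambda,
\]
closing the Bromwich line by a large arc in the left half-plane $\text{Re}\,\lambda<c$ and summing residues. In Spectral Category I with both endpoints NONOSC, standard Sturm--Liouville/Weyl theory gives that $\psi_\lambda(x)$ and $\phi_\lambda(x)$ extend to entire functions of $\lambda\in\C$ for each fixed $x\in\I$. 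Consequently the denominators $\wron{\phi}{\psi}{\lambda+\alpha}{\lambda}(\ell)$ in (\ref{Greenfunc_plus}) and $\wron{\phi}{\psi}{\lambda}{\lambda+\alpha}(\ell)$ in (\ref{Greenfunc_minus}) are entire in $\lambda$, and their zeros are precisely the point eigenvalues of $-\tilde\G^\pm$: vanishing of $\wron{\phi}{\psi}{-\tilde\lambda_n+\alpha}{-\tilde\lambda_n}(\ell)$ is equivalent to the existence of a nontrivial $C^1$-matching at $\ell$ between $\psi_{-\tilde\lambda_n}$ on $\{x<\ell\}$ and $\phi_{-\tilde\lambda_n+\alpha}$ on $\{x>\ell\}$ compatible with the prescribed boundary behavior at $l,r$. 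By NONOSC these eigenvalues form an increasing real sequence $\tilde\lambda_n\nearrow+\infty$, so the Green's functions are meromorphic with simple poles at $\{-\tilde\lambda_n\}$.

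In the off-diagonal regions where $x$ and $y$ lie on opposite sides of $\ell$, a direct residue evaluation at $\lambda=-\tilde\lambda_n$ of the explicit single-term expressions in (\ref{Greenfunc_plus}) yields immediately the middle two lines of (\ref{spec_exp_a}); the case (\ref{spec_exp_b}) is analogous. In the diagonal regions the formulas contain a free Green's function $G(x,y,\lambda)$ (resp.\ $G(x,y,\lambda+\alpha)$) whose apparent poles sit at eigenvalues of $-\G$ rather than at $\{-\tilde\lambda_n\}$. A short computation exploiting the proportionality of $\psi$ and $\phi$ at any zero of $\mathcal{W}_\lambda$ shows that these spurious poles are exactly cancelled by the accompanying correction term --- as must occur since $\widetilde{G}_\alpha^{\ell,\pm}$ is the resolvent of the single operator $-\tilde\G^\pm$. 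The residue at $\lambda=-\tilde\lambda_n$ is then computed by invoking the Wronskian identity defining the eigenvalue, i.e.\ that $(\psi_{-\tilde\lambda_n}(\ell),\psi_{-\tilde\lambda_n}'(\ell))$ is parallel to $(\phi_{-\tilde\lambda_n+\alpha}(\ell),\phi_{-\tilde\lambda_n+\alpha}'(\ell))$; substituting this proportionality into the correction term re-factors the residue into the symmetric bilinear products $\psi_{-\tilde\lambda_n}(x)\psi_{-\tilde\lambda_n}(y)$ (resp.\ $\phi_{-\tilde\lambda_n+\alpha}(x)\phi_{-\tilde\lambda_n+\alpha}(y)$) with coefficients $\wron{\phi}{\phi}{-\tilde\lambda_n}{-\tilde\lambda_n+\alpha}(\ell)/(C^{\ell,+}_{n,\alpha}\mathcal{W}_{-\tilde\lambda_n})$ and $\wron{\psi}{\psi}{-\tilde\lambda_n}{-\tilde\lambda_n+\alpha}(\ell)/(C^{\ell,+}_{n,\alpha}\mathcal{W}_{-\tilde\lambda_n+\alpha})$, matching the remaining lines of (\ref{spec_exp_a}). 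The formulas (\ref{spec_exp_b}) then follow by the identical argument applied to (\ref{Greenfunc_minus}) with the roles of $\lambda$ and $\lambda+\alpha$ interchanged.

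It remains to justify closing the Bromwich contour and identifying the integral with the residue series (\ref{spec_exp}). Choosing a sequence of semicircular arcs $|\lambda|=R_N\to\infty$ threading between consecutive eigenvalues, the factor $\e^{\lambda t}$ is exponentially decaying for $t>0$ on the left half-plane, and the large-$|\lambda|$ WKB asymptotics of the fundamental solutions $\psi_\lambda,\phi_\lambda$ available under Category I yield uniform bounds on $\widetilde{G}_\alpha^{\ell,\pm}$ sufficient for the arc contributions to vanish as $N\to\infty$. Collecting the residues computed above then produces the spectral expansion (\ref{spec_exp}). The principal technical obstacle is the diagonal-region analysis --- simultaneously establishing the cancellation of the free-spectrum poles and executing the Wronskian-identity-based re-factorization into the symmetric $\tilde\phi_{n,\alpha}^{\ell,\pm}(x)\tilde\phi_{n,\alpha}^{\ell,\pm}(y)$ form --- while the large-$|\lambda|$ arc estimate is standard but requires care to avoid the eigenvalues and turning points of $\psi_\lambda,\phi_\lambda$.
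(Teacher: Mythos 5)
Your proposal follows essentially the same route as the paper's proof: Bromwich inversion closed in the left half-plane, direct residue evaluation in the off-diagonal regions, and, in the diagonal regions, the key observation that the apparent poles of the free Green's function $G(x,y,\lambda)$ (resp.\ $G(x,y,\lambda+\alpha)$) at the eigenvalues of $-\G$ are exactly cancelled by the correction term via the proportionality $\phi_{-\lambda_n}=A_n\psi_{-\lambda_n}$, leaving only the residues at the zeros of $\wron{\phi}{\psi}{\lambda+\alpha}{\lambda}(\ell)$. The only differences are cosmetic: the correction term is already in symmetric bilinear form, so no re-factorization is needed at $\lambda=-\tilde\lambda_n$, and your explicit discussion of the arc estimates supplies a detail the paper leaves implicit.
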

\begin{proof} See Appendix \ref{subsect_A12}.
\end{proof}

By using the spectral expansions in Proposition \ref{propn_spectral} within the integral in (\ref{step_price}),
we are now able to compute the no-arbitrage prices of various proportional step options.
In this paper, we are interested in pricing proportional step options with call and put payoffs, where
$f(S_T)$ is either the vanilla call payoff $(S_T - K)^+$ or the vanilla put payoff $(K - S_T)^+$.
However, our method can be used for any well-behaved payoff function $f$.
From equation (\ref{step_price}), the respective pricing of the vanilla step (up/down) call and put options with level $L$ are given by the integrals
\begin{align}
C^\pm_{\mathrm{step}}(S,T) & = \e^{-\mathrm{r}T} \int_{x_{_K}}^r (\F(y) - K) \,\tilde{p}_\alpha^{\ell,\pm}(T;x,y) \,\d y, \label{CallValue} \\
P^\pm_{\mathrm{step}}(S,T) & = \e^{-\mathrm{r}T} \int_l^{x_{_K}} (K - \F(y)) \,\tilde{p}_\alpha^{\ell,\pm}(T;x,y) \,\d y, \label{PutValue}
\end{align}
where $x_{_K} = \X(K), x = \X(S), \ell = \X(L), l = \X(0), r = \X(\infty)$, $\X \equiv \F^{-1}$.
Note that the deltas (and gammas) of step options are also readily computed. For instance,
by differentiating equation (\ref{step_price}) with respect to $S$ we obtain the delta:
\begin{equation} \label{Delta}
\Delta_{\mathrm{step}}^{\pm} \equiv \frac{\partial}{\partial S} V_\text{step}^\pm (S,T)
                    = \e^{-\mathrm{r}T} \X'(S) \int_l^r f(\F(y))\frac{\partial}{\partial x} \tilde{p}_\alpha^{\ell,\pm}(T;x,y)  \,\d y .
\end{equation}

As an alternative to spectral expansions, one can make use of the numerical Laplace inversion when computing the no-arbitrage prices of step options. Such an approach leads to computing the following two-dimensional integral:
\begin{align*}
   V_\text{step}^\pm (S,T)
   &= \e^{-\mathrm{r}T} \int_l^r h(y) \,{\mathcal L}_\lambda^{-1}\big(\widetilde{G}_\alpha^{\ell,\pm}(x,y,\lambda)\big)(T) \,\d y\\
   &= \e^{-\mathrm{r}T} \int_l^r h(y) \,\left[ \frac{1}{2\pi i} \int_{c-i\infty}^{c+i\infty} \e^{\lambda t} \,\widetilde{G}_\alpha^{\ell,\pm}(x,y,\lambda)\,\d\lambda  \right] \d y \text{ \ \ (or)}\\
   &= \e^{-\mathrm{r}T} \frac{2}{\pi}\e^{cT}\int_l^r h(y) \,\left[ \int_0^\infty \cos(ut) \,\Re\left\{ \widetilde{G}_\alpha^{\ell,\pm}(x,y,c+iu)\right\}\,\d u  \right] \d y,
\end{align*}
where all singularities are to left of the Bromwich line $\Re \lambda =c$. The numerical inversion of the Laplace transform can be done by employing the Euler numerical algorithm. However, the Green's functions of models considered in this paper are formed of hypergeometric functions which, for complex arguments, are quite complicated to compute. This circumstance makes the numerical Laplace inversion possible, yet not as simple a computational procedure. The use of spectral expansions, which are uniform rapidly convergent series, gives us the added advantage of computing $\tilde{p}_\alpha^{\ell,\pm}$ to very high accuracy and efficiency, especially for larger values of $T$.


\section{Pricing under Solvable Models} \label{sec:4}

According to Proposition \ref{propn_spectral}, to apply the spectral expansion method and find no-arbitrage prices of step options under a~diffusion asset price model (specified by the infinitesimal generator $\G$ and defined on state space $\I$), we only need to know two
linearly independent fundamental solutions, $\phi_\lambda(x)$ and $\psi_\lambda(x)$, that solve the equation $(\G\, f)(x) = \lambda f(x)$,
$\lambda\in \C$, $x\in\I$,  subject to respective boundary conditions at the endpoints of $\I$. Other ingredients such as derivatives and
Wronskians of the fundamental solutions can be computed analytically or numerically via finite differences. Since the zeros
$\{\tilde{\lambda}_n\}_{n\geq 1}$ w.r.t. $\lambda$ of either Wronskian $\wron{\phi}{\psi}{-\lambda + \alpha}{-\lambda}(\ell)$ or
$\wron{\phi}{\psi}{-\lambda}{-\lambda + \alpha}(\ell)$, which can be computed numerically, converge to the zeros
$\{\lambda_n\}_{n\geq 1}$ of $\mathcal{W}_{-\lambda} = W[\phi_{-\lambda}, \psi_{-\lambda}](x) / \s(x)$ in the
limit $\alpha\to 0+$, we are also interested in the distribution of $\{\lambda_n\}_{n\geq 1}$. First, these zeros
can be used as initial guesses for computing $\{\tilde{\lambda}_n\}_{n\geq 1}$. Second, we analyze whether the zeros
grow linearly or quadratically. In the latter case, the spectral expansion series converges more rapidly than in the
former case and fewer terms are required to achieve a~high accuracy of computations.

In the sequel, we present four analytically solvable asset price models with state-dependent volatility functions.
The first model to be considered is the well-known constant elasticity of variance (CEV) diffusion model. Three other
alternative models are constructed by using the \textit{``diffusion canonical transformation''} (see \cite{CM08,CM10,CM11} for details).
In particular, we consider three examples of hypergeometric diffusion models namely the Bessel-K, confluent-U, and UOU models respectively
constructed from the squared Bessel, CIR, and Ornstein-Uhlenbeck processes by using the aforementioned method. For all four models considered
in the sequel, the discounted asset price process $\{\e^{-\mathrm{r}t} S_t\}_{t\geq 0}$ is a~martingale under the risk-neutral probability
measure $\mathbb{P}$ where we assume zero dividend on the stock.

\subsection{The CEV Diffusion Model}  \label{subsec:4.1}
The constant elasticity of variance (CEV) diffusion, $\mathbf{S}=\left\{S_t\right\}_{t\geq 0}\in\R_+$ is defined by the infinitesimal generator
$(\mathcal{G}_{_{\mathbf{S}}}\, f)(S) \equiv \frac{1}{2}\,\delta^2 S^{2\beta + 2} f''(S) + \mathrm{r} Sf'(S)$
with $\delta>0$ and $\mathrm{r}\in\R$. Here we take $\mathrm{r} > 0$ and assume $\beta < 0$. Hence, the point $S=\infty$ is a~ natural boundary.
For $\beta < -1/2$, the point $S=0$ is a~regular boundary, which we specify as killing, and for $-1/2 \le \beta < 0$ it is an exit boundary.

Recall that the Cox-Ingersoll-Ross (CIR) model (known also as the squared radial Ornstein-Uhlenbeck process, see \cite{BS02})
has the infinitesimal generator
\begin{equation} \label{GenCIR}
(\G\, f)(x) = \frac{1}{2}\nu^2 x f''(x) + (\gamma_0-\gamma_1 x)f'(x),
\end{equation}
with constant parameters $\gamma_0$, $\gamma_1$, and $\nu>0$. The strictly increasing mapping
$\textsf{X}(S) \equiv \delta^{-2} \beta^{-2} S^{-2\beta}$ reduces the~CEV process to the CIR model with the parameters
$\nu=2$, $\gamma_0=2+\frac{1}{\beta}$, and $\gamma_1=2\mathrm{r}\beta.$ It is convenient to define the parameter
$\mu\equiv \frac{2\gamma_0}{\nu^2}-1=\frac{\gamma_0}{2}-1 = {1\over 2\beta}$.
The respective speed and scale densities are $\m(x) = {1\over 2}x^\mu \e^{-\gamma_1 x/2}$ and $\s(x) = x^{-\mu-1} \e^{\gamma_1 x/2}$.
The left endpoint $l=0$ is regular killing for $\mu \in (-1,0)$ and is exit for $\mu \le -1$;
the right endpoint $r=\infty$ is NONOSC natural. Hence, both endpoints are NONOSC and Proposition~\ref{propn_spectral} is applicable.

The respective fundamental solutions for the CIR process are
\begin{equation} \label{CIRfs}
  \begin{split}
  \phi_\lambda(x)& =x^{|\mu|}\e^{\gamma_1 x/2}\,\U\left(1+\frac{\lambda}{|\gamma_1|},1+|\mu|,{|\gamma_1|\over 2} x\right),\\
  \psi_\lambda(x)&=x^{|\mu|}\e^{\gamma_1 x/2}\,\M\left(1+\frac{\lambda}{|\gamma_1|},1+|\mu|,{|\gamma_1|\over 2} x\right),
  \end{split}
\end{equation}
where $\M(a,b,z)$ and $\U(a,b,z)$ are the confluent hypergeometric functions (see \cite{AbrSteg}).
From known analytic properties of $M$ and $U$, we note that $\psi_\lambda(x)$ and $\phi_\lambda(x)$ are entire in $\lambda$.
These fundamental solutions satisfy the Wronskian relation $\wron{\phi}{\psi}{\lambda}{\lambda}(x) = \mathcal{W}_\lambda \s(x)$ with
\begin{equation} \label{Wron_CEV}
\mathcal{W}_\lambda = \frac{\Gamma(1+|\mu|)}{\Gamma\left(1+\frac{\lambda}{|\gamma_1|}\right)}.
\end{equation}
The corresponding Green's function in equation (\ref{greenfunc}) is meromorphic with simple poles given by the
simple zeros of $\mathcal{W}_\lambda$: $\lambda = -\lambda_n= -|\gamma_1|n$, $n=1,2,3,\ldots$. The zeros $\{\lambda_n\}_{n\ge 1}$ can be used as initial
guesses for finding the eigenvalues $\{\tilde\lambda_n\}_{n\ge 1}$. In particular, for $\tilde{p}_\alpha^{\ell,+}$
(or $\tilde{p}_\alpha^{\ell,-}$) these zeros are initial guesses for the zeros $\{\tilde\lambda_n\}_{n\ge 1}$ (w.r.t. $\lambda$) of
$\wron{\phi}{\psi}{-\lambda + \alpha}{-\lambda}(\ell)$ (or $\wron{\phi}{\psi}{-\lambda}{-\lambda + \alpha}(\ell)$). To compute the
Wronskians within the spectral expansion formula (\ref{spec_exp_a}) we require formulae for the derivatives of the fundamental functions in (\ref{CIRfs})
with respect to $x$. Such derivatives are obtained by using the differential recurrences: ${d\over dz}\M(a,b,z) = (a/b)\M(a+1,b+1,z)$ and
${d\over dz}\U(a,b,z) = -a \U(a+1,b+1,z)$.

\subsection{Alternative Diffusion Models}  \label{subsec:4.2}

The diffusion canonical transformation is defined as a~combination of a~change of measure and a~nonlinear mapping. Consider a~{\it solvable underlying time-homogeneous diffusion}, say $\mathbf{X}^{(0)}\equiv \{X^{(0)}_t\}_{t\ge 0}$,
defined on the state space $\I=(l,r)$, $-\infty\leq l<r\leq \infty$, and specified
by smooth drift, $a_0(x)$, and diffusion, $b(x)$, coefficients and two fundamental solutions
$\psi^{(0)}_\lambda\equiv \Psi_\lambda$ and $\phi^{(0)}_\lambda\equiv \Phi_\lambda$ of
$(\G^{(0)} f)(x) \deq \frac{b^2(x)}{2} f''(x) + a_0(x) f'(x) = \lambda f(x)$, $\lambda\in \C$, $x\in\I$, subject to appropriate boundary conditions.
$\Psi_\rho$ and $\Phi_\rho$ and are respectively increasing and decreasing positive functions of $x\in\I$ for real values of the parameter
$\rho>0$. By applying the change of measure, the solvable underlying diffusion is transformed into another diffusion
process\footnote{Such a~process $\mathbf{X}$ and its generator were denoted more explicitly by $\{X^{(\rho)}_t\}_{t\ge 0}$ and $\G^{(\rho)}$
in our previous papers \cite{CM08,CM10,CM11}. To avoid excessive use of notation, and to maintain consistency in this paper, we drop the superscript $\rho$
and write $X^{(\rho)}_t\equiv X_t$ and $\G^{(\rho)} \equiv \G$. Likewise we write $\m_\rho \equiv \m$ and $\s_\rho \equiv \s$.
The underlying process is denoted here by $\mathbf{X}^{(0)}$ and has respective speed and scale functions $\m_0$ and $\s_0$.}
$\mathbf{X}=\{X_t\in \I\}_{t\ge 0}$  with generator as in (\ref{Generator_X}):
\begin{equation} \label{Generator_rho}
(\G\,f)(x)\equiv\frac{1}{2}b^2(x)f^{\prime\prime}(x)
 +\underbrace{\left(a_0(x)+b^2(x)\frac{u'_\rho(x)}{u_\rho(x)}\right)}_{=a(x)}f^{\prime}(x)\,,
\end{equation}
where $u_\rho(x)\equiv q_1 \Psi_\rho(x) + q_2 \Phi_\rho(x)$ is a~strictly positive function with constants $q_{1,2}\geq 0,$ $q_1+q_2>0$.
For a~more detailed general discussion of this transformation to various solvable diffusions we refer to \cite{CM08,CM10}.
We simply note here that this corresponds to a~(time-homogeneous) Doob-$h$ transform which is generated by the so-called ($\rho$-excessive)
generating function which is here given by $h(x)\equiv u_\rho(x)$.

A transition density $p$ of the diffusion $\mathbf{X}$ relates to a~transition density
$p_0$ of an underlying diffusion $\mathbf{X}^{(0)}$ as follows:
\begin{equation}
p(t;x,y)=\e^{-\rho t}\frac{u_\rho(y)}{u_\rho(x)}
p_0(t;x,y),\;x,y\in\I\,,\;t>0\,. \label{XrhoPDF}
\end{equation}
The speed, $\m(x)$, and scale, $\s(x)$, densities for $\mathbf{X}$ are obtained from the
the speed, $\m_0(x)$, and scale, $\s_0(x)$, densities of the underlying diffusion via $\m(x) = \m_0(x)u^2_\rho(x)$ and $\s(x) = \s_0(x)/u^2_\rho(x)$.
The above relationship follows similarly when additional killing is introduced.
In particular, the transition PDFs $\tilde{p}^\pm\equiv \tilde{p}_\alpha^{\ell,\pm}$, for the diffusions $\tilde{\mathbf{X}}^{\ell,\pm}$,
are related to transition PDFs $\tilde{p}_0^\pm \equiv \tilde{p}_{0,\alpha}^{\ell,\pm}$ for underlying diffusions $\tilde{\mathbf{X}}^{(0)\ell,\pm}$,
defined by the generators $\tilde{\G}^{(0)+}_x \deq (\G^{(0)} - \alpha\indfunc_{x\ge\ell})$ and
$\tilde{\G}^{(0)-}_x \deq (\G^{(0)} - \alpha\indfunc_{x\le\ell})$, as
\begin{equation}
\tilde{p}^\pm(t;x,y)=\e^{-\rho
t}\frac{u_\rho(y)}{u_\rho(x)}
\tilde{p}_0^\pm(t;x,y). \label{XrhoPDF_2}
\end{equation}
The main important point of this formula is that closed-form transition PDFs, $\tilde{p}^\pm$,
are obtained directly from the closed-form transition PDFs, $\tilde{p}_0^\pm$, for the solvable underlying process.
The latter are given by the spectral expansion formulae in Proposition~\ref{propn_spectral} {\it applied to the simpler
analytically solvable underlying process} $\mathbf{X}^{(0)}$,
i.e. in all terms in equations (\ref{spec_exp}), (\ref{spec_exp_a}) and (\ref{spec_exp_b}), we make the obvious replacements:
$\psi\to \Psi, \phi\to\Phi, \m\to\m_0, \s\to\s_0, \wron{\phi}{\phi}{\lambda}{\lambda +\alpha}\to \wron{\Phi}{\Phi}{\lambda}{\lambda +\alpha},
\wron{\phi}{\psi}{\lambda}{\lambda +\alpha}\to \wron{\Phi}{\Psi}{\lambda}{\lambda +\alpha},
\wron{\phi}{\psi}{\lambda +\alpha}{\lambda}\to \wron{\Phi}{\Psi}{\lambda +\alpha}{\lambda},
\wron{\psi}{\psi}{\lambda}{\lambda +\alpha}\to \wron{\Psi}{\Psi}{\lambda}{\lambda +\alpha},
{\mathcal W}_\lambda \equiv \wron{\phi}{\psi}{\lambda}{\lambda}(x)/\s(x) \to
{\mathcal W}^{(0)}_\lambda \equiv \wron{\Phi}{\Psi}{\lambda}{\lambda}(x)/\s_0(x)$. Correspondingly,
the eigenvalues $\{\tilde\Lambda_n\}_{n\ge 1}$ are defined as the set of increasing simple zeros
(w.r.t. $\lambda$) of $\wron{\Phi}{\Psi}{-\lambda +\alpha}{-\lambda}(\ell)$ for
$\tilde{p}_0^+$ and of $\wron{\Phi}{\Psi}{-\lambda}{-\lambda +\alpha}(\ell)$ for $\tilde{p}_0^-$.
For instance, in case $x\le \ell,y \le \ell$ we have:
\begin{equation}
\tilde{p}_{0,\alpha}^{\ell,+}(t;x,y) = \m_0(y)\sum_{n=1}^\infty \e^{-\tilde\Lambda_n t}
\left( \displaystyle\frac{\wron{\Phi}{\,\Phi}{-\tilde\Lambda_n}{-\tilde\Lambda_n +\alpha}(\ell)}
{C^{\ell,+}_{n,\alpha}{\mathcal W}^{(0)}_{\!-\tilde\Lambda_n}}\right)
\Psi_{\!-\tilde\Lambda_n}(x) \Psi_{\!-\tilde\Lambda_n}(y)
\end{equation}
where $C^{\ell,+}_{n,\alpha} \deq {d\over d\lambda}\wron{\Phi}{\,\Psi}{\lambda +\alpha}{\lambda}(\ell)\vert_{\lambda=-\tilde\Lambda_n}$
with eigenvalues $\{\tilde\Lambda_n\}_{n\ge 1}$ as the set of increasing simple zeros (w.r.t. $\lambda$) of
$\wron{\Phi}{\,\Psi}{-\lambda +\alpha}{-\lambda}(\ell)$.

We remark that the formula in (\ref{XrhoPDF_2}) is also readily derived by making use of the fact that a~pair of fundamental solutions, $\psi_\lambda^{(\rho)}\equiv \psi_\lambda$
and $\phi_\lambda^{(\rho)}\equiv\phi_\lambda$ for process $\mathbf{X}$ with generator in (\ref{Generator_rho}), for any given $\rho > 0$,
is given by ratios of the known fundamental
solutions for the underlying diffusion:
\begin{equation}
  \label{FundSolXrho}
  \phi_\lambda(x) = \frac{\Phi_{\lambda+\rho}(x)}{u_\rho(x)},\quad
  \psi_\lambda(x) = \frac{\Psi_{\lambda+\rho}(x)}{u_\rho(x)}.
\end{equation}
Equation (\ref{XrhoPDF_2}) then follows from (\ref{spec_exp_a}) upon applying the basic Wronskian property:
$W\left[\frac{f}{u},\frac{g}{u}\right](x) = \frac{W[f,g](x)}{u^2(x)}.$
The eigenvalues are found by obtaining the zeros $\{\tilde\Lambda_n\}_{n\ge 1}$ and then adding
$\rho$ to them: $\tilde{\lambda}_n = \tilde\Lambda_n + \rho$, $n\geq 1$,
i.e. the eigenspectrum is simply shifted by the positive amount $\rho$ via the Dooh-$h$ transform as seen in (\ref{XrhoPDF_2}).

Finally, the second main step is to obtain a~solvable diffusion $\mathbf{S}=\{S_t \equiv \F
(X_t), t\ge 0\}$, which is used here as an asset price model. This process is defined by a~strictly monotonic real-valued mapping
$\F$ with $\F ', \F ''$ continuous on $\I$. The mapping $\F$ admits the general quotient form:
 \begin{equation} \label{FMAP}
 \F (x) = \frac{c_1 \Psi_{\rho+\mathrm{r}}(x)+c_2
 \Phi_{\rho+\mathrm{r}}(x)}{u_\rho(x)},\quad c_1,c_2\in\R.
 \end{equation}
The infinitesimal generator of the process $\mathbf{S}$ is given by
\begin{equation} \label{GenF}
(\mathcal{G}_{ _{\mathbf{S}}}\, f)(S) \equiv \frac{1}{2}\,\sigma^2(S)
f''(S) + \mathrm{r}Sf'(S),
\end{equation}
where $S\in\I_S=\left(\min\{\mathsf{F}(l+),\mathsf{F}(r-)\},\max\{\mathsf{F}(l+),\mathsf{F}(r-)\}\right)$. The diffusion coefficient (volatility) function is
\begin{equation} \label{sigmaF}
 \sigma(S)=b(x)|\F'(x)|=\frac{b(x)|W[u_\rho, c_1 \Psi_{\rho+\mathrm{r}} + c_2 \Phi_{\rho+\mathrm{r}}](x)|}{u^2_\rho(x)}\,,\quad x=\X (S)\,,
\end{equation}
and $\mathrm{r}$ is a~real constant such that $\rho+\mathrm{r}>0$. The parameter $\mathrm{r}$ is equal to the risk-free positive interest rate.
As above, $\X \equiv\F^{-1}$ denotes the inverse map.

\subsubsection{The Bessel-K Model with Killing at an Upper Boundary}  \label{subsubsec:4.2.1}
Here we specifically consider a~$4$-parameter Bessel $\mathsf{K}$-family that arises from an underlying
($\gamma_0$-dimen\-sional) squared Bessel process (SQB), where we shall assume positive parameters $\mu\equiv\frac{2\gamma_0}{\nu^2}-1>0$ and $\nu>0$.
By applying the Doob transform with $u_\rho(x)\equiv x^{-\mu/2}K_{\mu}\left(2\sqrt{2\rho x}/\nu\right)$
to the SQB process $\mathbf{X}^{(0)}\in \R_+$ we obtain a~diffusion $\mathbf{X}\in \R_+$ with generator
$(\G\,f)(x)\equiv\frac{1}{2}\nu^2 x f^{\prime\prime}(x) + \left( \gamma_0+\nu^2 x \frac{u'_\rho(x)}{u_\rho(x)} \right) f^{\prime}(x)$.
The SQB process has speed and scale densities $\m_0(x) = {2\over \nu^2}x^\mu$ and $\s_0(x) = x^{-\mu-1}$.
For the Bessel-K model the mapping in (\ref{FMAP}) is the strictly increasing function
\begin{equation} \label{FmapBesK}
  \F(x)= c\displaystyle\frac{I_{\mu} \left( 2\sqrt{2(\rho+\mathrm{r}) x}/\nu\right)}{K_{\mu} \left(2\sqrt{2\rho x}/\nu\right)},
\end{equation}
where $c$ and $\rho$ are independently adjustable positive parameters.
$\F$ (and its inverse $\X$) maps $x\in(0,\infty)$ and $S\in(0,\infty)$ into one another.
The functions $I_{\mu}$ and $K_{\mu}$ denote the modified Bessel functions (of order $\mu$) of the first
and second kind, respectively (see \cite{AbrSteg} for definitions and properties).

For such a~Bessel $\mathsf{K}$-family of processes on $\R_+$, the origin is regular for $\mu\in (0,1)$ and exit for $\mu \ge 1$ (i.e. NONOSC) and
the point at infinity is O-NO natural. Hence, to guarantee that we have a~process with NONOSC endpoints (i.e. so that we are in
Spectral Category I and Proposition~\ref{propn_spectral} applies)
we introduce additional killing at some upper level $\h  >0$ and consider $\mathbf{X}=\{X_t\}_{t\ge 0}\in (0,\h )$.
Hence, the transformation (\ref{FmapBesK}) leads to a~family of processes
$\{S_t=\F(X_t)\}_{t\geq 0}\in (0,\H )$, $\H =\F(\h )$.
The boundary $S=0$ is exit if $\mu\geq 1$ or is a~regular (specified as killing) boundary if $0<\mu<1$;
the boundary $S=\H =\F(\h )$ is a~killing boundary.
We note that one way to deal with the process on $S_t\in\R_+$ is to consider the limiting case
where the upper boundary is progressively increased to infinity. As $\H \to\infty$ (i.e. $\h \to\infty$), the spectral expansions of the
transition PDFs, and hence the prices of proportional step options, converge to those for the Bessel $\mathsf{K}$-family on $\R_+$.

Differentiating (\ref{FmapBesK}), and applying differential relations $zI^{\prime}_\mu(z) = \mu I_\mu(z) + zI_{\mu + 1}(z)$
and $zK^{\prime}_\mu(z) = \mu K_\mu(z) - zK_{\mu + 1}(z)$, gives the volatility function for the Bessel $\mathsf{K}$-family via equation (\ref{sigmaF}):
\begin{equation}\label{SigmaBesK}
  \sigma(S)= c\sqrt{2}\left(
     \textstyle\frac{\sqrt{\rho}\,I_\mu\left(2\sqrt{2(\rho + \mathrm{r}) x}/\nu\right)K_{\mu+1}\left(2\sqrt{2\rho x}/\nu\right)}
{K_{\mu}^2\left(2\sqrt{2\rho x}/\nu\right)}
+ \textstyle\frac{\sqrt{\rho + \mathrm{r}}\,I_{\mu+1}\left(2\sqrt{2(\rho + \mathrm{r})x}/\nu\right)}{K_{\mu}\left(2\sqrt{2\rho x}/\nu\right)} \right),
\end{equation}
where $x=\X(S) = \F^{-1}(S)$. The fundamental solutions in (\ref{FundSolXrho}) follow from those for the SQB process
with the above mentioned boundary conditions at the left and right endpoints $l=0$ and $r=\h $:
\begin{equation} \label{BKfs}
\begin{split}
  \Psi_\lambda(x) &= x^{-\mu/2}I_{\mu}\big(2\sqrt{2\lambda x}/\nu\big),\\
  \Phi_\lambda(x) &= x^{-\mu/2}[I_{\mu}\big(2\sqrt{2\lambda \h }/\nu\big)K_{\mu}\big(2\sqrt{2\lambda x}/\nu\big)
   - K_{\mu}\big(2\sqrt{2\lambda \h }/\nu\big)I_{\mu}\big(2\sqrt{2\lambda x}/\nu\big)].
\end{split}
\end{equation}
These solutions satisfy the Wronskian relation $\wron{\Phi}{\Psi}{\lambda}{\lambda}(x) =
\frac{1}{2}I_{\mu}\big(2\sqrt{2\lambda \h }/\nu\big)\s_0(x)$.
Other Wronskians $\wron{\Phi}{\Psi}{\lambda}{\lambda + \alpha}, \wron{\Phi}{\Phi}{\lambda}{\lambda + \alpha},
\wron{\Psi}{\Psi}{\lambda}{\lambda + \alpha}$ required in the spectral expansion formulas are also easily computed by applying the above differential relations.
To avoid computations of Bessel functions of complex argument, we use the following well-known identities:
\begin{equation*}
  I_\mu(ix) = i^\mu J_\mu(x),\quad
  K_\mu(ix)I_\mu(iy) - I_\mu(ix)K_\mu(iy) = \frac{\pi}{2}\left(J_\mu(x)Y_\mu(y) - Y_\mu(x)J_\mu(y)\right),\\
\end{equation*}
where $J_\mu$ and $Y_\mu$ are the (ordinary) Bessel functions (of order $\mu$) of the first and second kind, respectively.
Firstly, the zeros $\{\Lambda_n\}_{n\ge 1}$ of $\wron{\Phi}{\Psi}{-\lambda}{-\lambda}$ (w.r.t. $\lambda$) can be computed numerically.
These zeros are all positive and grow quadratically. In particular, $\Lambda_n = (\nu^2/8\h )j^2_{n,\mu}$ where
$j_{n,\mu}$ are the positive simple zeros of the ordinary Bessel function, i.e. $J_\mu(j_{n,\mu}) = 0$.
The set $\{\Lambda_n\}_{n\ge 1}$ is then used as initial guess for computing the eigenvalues $\{\tilde{\Lambda}_n\}_{n\ge 1}$ which are the
simple zeros (w.r.t. $\lambda$) for either Wronskian $\wron{\Phi}{\Psi}{-\lambda + \alpha}{-\lambda}(\ell)$ or
$\wron{\Phi}{\Psi}{-\lambda}{-\lambda + \alpha}(\ell)$. Combining all quantities gives us $\tilde{p}_{0,\alpha}^{\ell,\pm}$ and hence
$\tilde{p}_{\alpha}^{\ell,\pm}$ by (\ref{XrhoPDF_2}).

\subsubsection{The Confluent-U Model}  \label{subsubsec:4.2.2}

The confluent-U family of diffusions arises by considering the CIR diffusion as the underlying process $\mathbf{X}^{(0)}\in \R_+$
with generator $\G^{(0)}$ given by (\ref{GenCIR}). Although this family can be defined for a~larger set of parameters,
here we shall assume positive parameters $\gamma_0,\gamma_1$ and define
$\upsilon \equiv \frac{\rho}{\gamma_1}$, $\mu\equiv\frac{2\gamma_0}{\nu^2}-1>0$, $\kappa \equiv \frac{2\gamma_1}{\nu^2}$.
The speed and scale densities for the CIR process are $\m_0(x) = (\kappa/\gamma_1)x^\mu \e^{-\kappa x}$ and $\s_0(x) = x^{-\mu - 1}\e^{\kappa x}$.
Applying the Doob transform with generating function $u_\rho(x)\equiv\U(\upsilon,\mu+1,\kappa x)$ to the CIR process gives
us a~diffusion process $\mathbf{X}=\{X_t\}_{t\ge 0}\in \R_+$ with generator
$(\G\,f)(x)\equiv\frac{1}{2}\nu^2 x f^{\prime\prime}(x) + \left( \gamma_0-\gamma_1 x+\nu^2 x \frac{u'_\rho(x)}{u_\rho(x)} \right) f^{\prime}(x)\,,$
where $\rho > 0$. For the confluent-U family of models, the map $\F$ is given
by the strictly increasing map
\begin{equation} \label{FmapCIRU}
\F(x)=c\frac{\M(\upsilon+\frac{\mathrm{r}}{\gamma_1},\mu+1,\kappa x)}{\U(\upsilon,\mu+1,\kappa x)}
\end{equation}
where $c>0$. Differentiating (\ref{FmapCIRU}), while using ${d\over dz}\M(a,b,z) = (a/b)\M(a+1,b+1,z)$ and
${d\over dz}\U(a,b,z) = -a \U(a+1,b+1,z)$, the volatility function in (\ref{sigmaF})
for the confluent-U family of processes $\{S_t=\F(X_t)\}\in \R_+$ takes the form
\begin{equation} \label{SigmaCIRU}
  \begin{split}
    \sigma(S) &= c\kappa\nu\sqrt{x}\bigg(
    \displaystyle\frac{\upsilon \,\M(\frac{\rho+\mathrm{r}}{\gamma_1},\mu+1,\kappa x)\,\U(\upsilon + 1 ,\mu+2,\kappa x)} {\U^2(\upsilon,\mu+1,\kappa x)}\\ & +
    \displaystyle\frac{(\frac{\rho+\mathrm{r}}{\gamma_1})\,\M(\frac{\rho+\mathrm{r}}{\gamma_1}+1,\mu+2,\kappa x)}{(\mu + 1)\,\U(\upsilon,\mu+1,\kappa x)}\bigg),
  \end{split}
\end{equation}
where $x=\X(S) = \F^{-1}(S)$. For the confluent-U model the origin $S=0$ is exit if $\mu\geq 1$ and regular killing if $0<\mu<1$;
the point at infinity is natural. The fundamental solutions in (\ref{FundSolXrho}) follow from those for the CIR process:
\begin{equation} \label{CUfs}
\Psi_\lambda(x) = \M(\lambda/\gamma_1,\mu+1,\kappa x), \,\,\,\,
\Phi_\lambda(x) = \U(\lambda/\gamma_1,\mu+1,\kappa x).
\end{equation}

For the confluent-U model, both endpoints of the state space $(0,\infty)$ are NONOSC
so that Spectral Category I holds and Proposition~\ref{propn_spectral} applies.
The functions in (\ref{CUfs}) satisfy the Wronskian relation
$\wron{\Phi}{\Psi}{-\lambda}{-\lambda}(x) / \s_0(x) = \mathcal{W}^{(0)}_{-\lambda}$ with
\begin{equation} \label{Wron_CU}
\mathcal{W}^{(0)}_{-\lambda} = \frac{\kappa^{-\mu}\Gamma(1 + \mu)}{\Gamma\left(-\lambda/\gamma_1\right)}.
\end{equation}
The zeros $\{\Lambda_n\}_{n\ge 1}$ (w.r.t. $\lambda$) are simply given by the poles of the Gamma function in the denominator. Hence, the eigenvalues
grow linearly and are given by $\Lambda_n = \gamma_1(n-1),\, n=1,2,\ldots$.
The set $\{\Lambda_n\}_{n\ge 1}$ is then used as initial guess for computing the eigenvalues $\{\tilde{\Lambda}_n\}_{n\ge 1}$ which are the
simple zeros (w.r.t. $\lambda$) for either Wronskian $\wron{\Phi}{\Psi}{-\lambda + \alpha}{-\lambda}(\ell)$ or
$\wron{\Phi}{\Psi}{-\lambda}{-\lambda + \alpha}(\ell)$. The Wronskians $\wron{\Phi}{\Psi}{\lambda}{\lambda + \alpha}, \wron{\Phi}{\Phi}{\lambda}{\lambda + \alpha},
\wron{\Psi}{\Psi}{\lambda}{\lambda + \alpha}$ required in the spectral expansion formulas are computed by applying the
above differential relations for the confluent $M$ and $U$ functions.
Combining all quantities gives us $\tilde{p}_{0,\alpha}^{\ell,\pm}$ and finally $\tilde{p}_{\alpha}^{\ell,\pm}$ by (\ref{XrhoPDF_2}).

\subsubsection{The UOU Model} \label{subsubsec:4.2.3}

We now consider the regular Ornstein-Uhlenbeck (OU) process $\mathrm{X}^{(0)} \in (-\infty,\infty)$ with constant diffusion coefficient
$b(x)=\nu>0$ and linear drift coefficient $a_0(x) = -\gamma_1 x$, $\gamma_1>0$. The fundamental solutions for this OU process are
$$\Phi_\lambda(x)=\e^{\kappa x^2/4}D_{-\lambda/\gamma_1}(x\sqrt{\kappa})
\text{ and } \Psi_\lambda(x)=\Phi_\lambda(-x) = \e^{\kappa x^2/4}D_{-\lambda/\gamma_1}(-x\sqrt{\kappa})$$
where $D_{\nu}(\cdot)$ is Whittaker's parabolic cylinder function (see \cite{AbrSteg}).

We now apply the diffusion canonical transformation to the OU process with choice of parameters $q_1 = 1,q_2=0$, i.e. with generating function
$u_\rho(x) = \Psi_\rho(x) = \e^{\kappa x^2/4}D_{-\rho/\gamma_1}(x\sqrt{\kappa})$ to
obtain the process $\mathbf{X}=\{X_t\in\R\}_{t\geq 0}$ having the generator $(\G\,f)(x)\equiv\frac{1}{2}\nu^2 f^{\prime\prime}(x)
+ \left( -\gamma_1 x+\nu^2  \frac{u'_\rho(x)}{u_\rho(x)} \right) f^{\prime}(x)$. The choice of function in (\ref{FMAP}) given by
\begin{equation} \label{FmapUOU}
\mathsf{F}(x)=c\frac{\Psi_{\mathrm{r}+\rho}(x)}{\Phi_\rho(x)}=
c\frac{D_{- \frac{\rho + \mathrm{r}}{\gamma_1}}(-x\sqrt{\kappa})}{D_{-\frac{\rho}{\gamma_1}}(x\sqrt{\kappa})}\,, \,\,\,c>0\,,
\end{equation}
maps $x\in\R$ onto $S\in(0,\infty)$ and is monotonically increasing. This transformation gives us a~family of asset price processes with generator in
(\ref{GenF}) that is referred to as the unbounded Ornstein-Uhlenbeck (UOU) model with the diffusion coefficient function given by
\begin{equation} \label{SigmaUOU}
 \sigma(S) = c\nu\sqrt{\kappa} \left\{\frac{\rho + \mathrm{r}}{\gamma_1} \frac{D_{-(1 + \frac{\rho + \mathrm{r}}{\gamma_1})}(-x\sqrt{\kappa})}
{D_{-\frac{\rho}{\gamma_1}}(x\sqrt{\kappa})}
+ \frac{\rho}{\gamma_1} \frac{D_{-\frac{\rho + \mathrm{r}}{\gamma_1}}(-x\sqrt{\kappa})}{D_{-\frac{\rho}{\gamma_1}}(x\sqrt{\kappa})}
\frac{D_{-(1 + \frac{\rho}{\gamma_1})}(x\sqrt{\kappa})}
{D_{-\frac{\rho}{\gamma_1}}(x\sqrt{\kappa})}\right\},
\end{equation}
where $x=\mathsf{X}(S)\equiv \F^{-1}(S)$. Both endpoints, $S=0$ and $S=\infty$, of the UOU process are NONOSC natural boundaries.
Hence, Spectral Category I holds and Proposition~\ref{propn_spectral} applies.

The fundamental functions above satisfy the Wronskian relation
$\wron{\Phi}{\Psi}{-\lambda}{-\lambda}(x) / \s_0(x) = \mathcal{W}^{(0)}_{-\lambda}$ where
\begin{equation} \label{Wron_OU}
\mathcal{W}^{(0)}_{-\lambda} = \frac{\sqrt{2\kappa\pi}}{\Gamma(-\lambda/\gamma_1)}
\end{equation}
has simple zeros $\{\Lambda_n\}_{n\ge 1}$ given by $\Lambda_n = \gamma_1(n-1),\, n=1,2,\ldots$.
The set $\{\Lambda_n\}_{n\ge 1}$ is then used as initial guess for computing the eigenvalues $\{\tilde{\Lambda}_n\}_{n\ge 1}$ which are the
simple zeros (w.r.t. $\lambda$) for either Wronskian $\wron{\Phi}{\Psi}{-\lambda + \alpha}{-\lambda}(\ell)$ or
$\wron{\Phi}{\Psi}{-\lambda}{-\lambda + \alpha}(\ell)$. The other Wronskians
$\wron{\Phi}{\Psi}{\lambda}{\lambda + \alpha}, \wron{\Phi}{\Phi}{\lambda}{\lambda + \alpha},
\wron{\Psi}{\Psi}{\lambda}{\lambda + \alpha}$ are computed by applying the
differential relation ${d\over dz}D_{-\upsilon}(z) = -(z/2)D_{-\upsilon}(z) - \upsilon D_{-\upsilon - 1}(z)$.
Combining all quantities gives us $\tilde{p}_{0,\alpha}^{\ell,\pm}$ and hence $\tilde{p}_{\alpha}^{\ell,\pm}$ via (\ref{XrhoPDF_2}).

\section{Numerical Evaluation of Step Option Prices} \label{sec:5}

\subsection{Spectral Series Expansions} \label{subsec:5.1}
In this section, we discuss the computational details for computing step option prices using (\ref{step_price}) and the
spectral expansion formula for the transition PDFs $\tilde{p}\equiv \tilde{p}_\alpha^{\ell,\pm}(T;x,y)$.
Given a~discount factor $\alpha$, a~level $L$ and spot $S$ (hence $\ell = \X(L)$ and $x = \X(S)$), the PDF in~(\ref{step_price}) is to be computed for varying values of $y\in [y_{\min},y_{\max}]$; $\tilde{p}$ is computed on such values by truncating the spectral expansion in (\ref{spec_exp}) to the first $N$ terms, where $N$ is sufficiently large:
\begin{equation}\label{spec_expN}
\tilde{p}_\alpha^{\ell,\pm}(T;x,y) \approx \m(y)\sum_{n=1}^N \e^{-\tilde\lambda_n T}\tilde\phi_{n,\alpha}^{\ell,\pm}(x)\tilde\phi_{n,\alpha}^{\ell,\pm}(y).
\end{equation}

The numerical procedure consists of the following basic steps.
First, compute numerically the first $N$ eigenvalues, $\{\tilde{\lambda}_n\}_{1\leq n\leq  N}$ associated to the process $\mathrm{X}$. For the solvable asset price processes arising from the Doob transform we simply compute the first $N$ eigenvalues $\{\tilde{\Lambda}_n\}_{1\leq n\leq  N}$ associated to the underlying process $\mathrm{X}^{(0)}$   and thereby obtain $\tilde{\lambda}_n = \tilde{\Lambda}_n + \rho$.  Note that the eigenvalues grow linearly (for the CEV, confluent-U, and UOU models) or quadratically (for the Bessel-K model with killing) as $n$ increases.
The computations of the terms in the spectral series expansions can be split in three parts. That is, we can individually compute the parts that only depend on $y$, $x$, and $\ell$, respectively. Partial derivatives of Wronskians with respect to $\lambda$ can be calculated numerically by using a~central finite difference approximation. Upon completing the evaluation of the spectral series, a~quadrature rule (e.g. the adaptive Simpson rule) is applied to compute the integral in~(\ref{step_price}). Clearly, this computational scheme can be easily parallelized at different stages.

For most of the above models, including the CEV, the computation of the spectral expansion for the transition PDF $\tilde{p}$
requires many evaluations of the confluent hypergeometric function $\U(a,b,z)$
for negative values of $a$. To avoid introducing numerical errors while computing $U$, the rescaled version of the confluent hypergeometric function is used:
\begin{align*}
   &\frac{\U(a+n,b,z)}{\Gamma(-a)} = \frac{\sin(\pi(b-a))}{\sin(\pi b)} \frac{(-1)^n}{\B(-a,b)}\frac{\Gamma(b-a-n)}{\Gamma(b-a)} \M(a+n,b,z) \\
   & + (-1)^n\frac{\sin(\pi a)}{\pi}\Gamma(b-1)\frac{\Gamma(1-n-a)}{\Gamma(-a)}z^{1-b} \M(a-b+1+n,2-b,z),
\end{align*}
where $a<0$, $a\not\in\mathds{Z}$, and $n=0,1,2,\ldots$. Here $\B$ denotes the Beta function.
The parabolic cylinder function $D_\nu$ can be expressed in terms of the Kummer confluent hypergeometric
function $\M$ provided that $\nu\not\in\mathds{Z}$. The rescaled version of $D$ is as follows:
\[ \frac{D_{\nu+n}(z)}{\Gamma(\nu/2)\cdot 2^{\nu/2}} = 2^{n/2} \e^{-z^2/4} \sqrt{\pi} \cdot\left( \frac{\M(-(\nu+n)/2,1/2,z^2/2)}{\Gamma(\nu/2)\Gamma((1-\nu-n)/2)} - \frac{\sqrt{2} z \M((1-\nu-n)/2,3/2,z^2/2)}{\Gamma(\nu/2)\Gamma(-(\nu+n)/2)} \right),\]
where $\nu>0$, $\nu\not\in\mathds{Z}$, and $n\in\mathds{Z}$. To compute the products of gamma functions (for large values of $\nu$) in the aforementioned formula, we use Euler's reflection formula $\Gamma(1-z)\Gamma(z)=\frac{\pi}{\sin(\pi z)}$ and the following asymptotic series:
\[ \frac{\Gamma(x+\frac{1}{2})}{\Gamma(x)} = \sqrt{x} \left( 1-\frac{1}{8x}+\frac{1}{128x^2}+\frac{5}{1024 x^3} - \frac{21}{32768 x^4} + \cdots\right).\]

Since numerical errors can be introduced when directly computing $\phi_\lambda(x)$ and respective Wronskians for large negative values of $\lambda$,
we can define a~rescaled version of $\phi_\lambda$ denoted by $\hat{\phi}_\lambda$ that has a~better asymptotic behaviour as $\lambda\to-\infty$.
For the CEV model we define
\[ \hat{\phi}_\lambda(x) = \frac{\phi_\lambda(x)}{\Gamma\left(-\frac{\lambda}{|\gamma_1|}\right)} = x^{|\mu|}\e^{\gamma_1 x/2}\frac{\U\left(1+\frac{\lambda}
{|\gamma_1|},1+|\mu|,|\gamma_1| x/2\right)}{\Gamma\left(-\frac{\lambda}{|\gamma_1|}\right)},\quad \lambda<0. \]
For the Confluent-$\U$ model and UOU model, we respectively set
\[ \hat{\phi}_\lambda(x) = \frac{\phi_\lambda(x)}{\Gamma(-\frac{\rho+\lambda}{\gamma_1})} \text{ and }
\hat{\phi}_\lambda(x) = \frac{\phi_\lambda(x)}{\Gamma(-\frac{\rho+\lambda}{2\gamma_1})\cdot 2^{(\rho+\lambda)/(2\gamma_1)}},\quad \lambda<-\rho. \]

\subsection{Monte Carlo Bridge Approximation} \label{subsec:5.2}

We can compare numerical values obtained by using the analytical
spectral expansions of the previous section with Monte Carlo approximation values. In~\cite{MW11}, a~novel algorithm for
the exact simulation of occupation times for a~Brownian bridge is constructed. The method is used to
approximately sample occupation times for a~nonlinear solvable diffusion that admits an exact path simulation.
Such an approximation sampling algorithm can be applied to the CEV model and other solvable hypergeometric diffusions (i.e. the Bessel-K, confluent-U, and UOU models) considered in this paper for which an exact path simulation algorithm is available (see~\cite{MG10}).
For example, consider the CEV asset price process $\mathbf{S}$. There exists a~strictly increasing mapping $\X$ that maps
$\mathbf{S}$ into the CIR diffusion $\mathbf{X}$ whose volatility is a~square-root function, $\nu\sqrt{x}$. The
increasing mapping $\mathsf{Y}(x)=\frac{2}{\nu}\sqrt{x}$ reduces the process $\mathbf{X}$ to a~diffusion $Y_t=\mathsf{Y}(X_t)$
whose diffusion coefficient is equal to one. Thus, we have $Y_t=\mathsf{Y}(\X(S_t))$ or $S_t=\F(\mathsf{Y}^{-1}(Y_t)) =
\F({\nu^2\over 4}Y^2_t)$ for all $t\geq 0$, where $\F=\X^{-1}$.

On short time intervals $[t_1,t_2]$ such a~diffusion pinned at points $y_1$ and $y_2$ at respective times
$t_1$ and $t_2$ can be approximated by a~Brownian bridge from $y_1$ to $y_2$ over $[t_1,t_2]$. Therefore,
occupation times of the $\mathbf{S}$ bridge process on short time intervals can be well approximated by Brownian
bridge occupation times. Again, we use the fact that a~monotone transformation of a~random process does not change
the occupations times: $A^{L,\pm}_{t,\mathbf{S}}=A^{\X(L),\pm}_{t,\X(\mathbf{S})}$.

Our approach for the approximate sampling of occupation times $A^{L,\pm}_{T,\mathbf{S}}$ works as follows.
\begin{enumerate}
\item By using an algorithm from \cite{MG10}, draw a~sample path $S_{t_1},\ldots,S_{t_M}$ for a~given time partition $\{ 0=t_1<t_2<\ldots<t_M=T \}$.
\item Obtain the respective sample path of the underlying process with unit diffusion coefficient by using the transformation
$Y_{t_i} = \textsf{Y}(\textsf{X}(S_{t_i}))$ for each $i=0,1,\ldots,M$.
\item Sample the occupation times of $A_{[t_{i-1},t_i]}^{\ell,\pm}$ for the Brownian bridge from $Y_{t_{i-1}}$ to $Y_{t_i}$ over $[t_{i-1},t_i]$ for each $i=1,\ldots,M$. Here, $\ell = \textsf{Y}(\textsf{X}(L))$.
\item Obtain the approximations $A^{L,\pm}_{T,\mathbf{S}} \approx \sum_{i=1}^M A_{[t_{i-1},t_i]}^{\ell,\pm}$.
\end{enumerate}
Note that the algorithms developed in \cite{MG10} allow us to simultaneously sample the first hitting time at zero, $\tau_0$, and a~sample path. If $\tau_0<T$, then the option is worthless. So the simulation of the occupation time can be skipped whenever $\tau_0<T$.

\subsection{Numerical Results} \label{subsec:5.3}

Let us compute prices and deltas of proportional step-down call and step-down put options under four different asset price models. The call and put payoff functions are respectively $\e^{-\alpha A^{L,-}_T} (S_T-K)_+\indfunc_{\tau_0>T}$ and
$\e^{-\alpha A^{L,-}_T}(K-S_T)_+\indfunc_{\tau_0>T}$  with proportionality factor $\alpha=5$, level $L=90$, and maturity time $T=\frac{1}{2}$.

Let us begin with the CEV model. It is known from \cite{JR98} that the negative elasticity values $\beta$ are typical for stock index options such as
S\&P~500. We use $\beta=-2$. The value of $\delta$ is selected so that the local (instantaneous) volatility $\sigma_0\equiv\sigma(S_0)/S_0$ at
the spot value $S_0=100$ equals 25\%: $\delta=\sigma_0 S_0^{-\beta}=0.25\cdot 100^{-\beta}$. For $\beta=-2$, we have $\delta=2500$. The CEV model is compared with the other hypergeometric
diffusion models with state-dependent volatility function $\sigma(S)$. The parameters of the models are adjusted so that the local volatility $\sigma(S_0)$ at spot $S_0=100$
is fixed at 25\%. The parameters of all four asset price models are summarized in Table~\ref{tb:Models}. It should be clear that the set used in Table~\ref{tb:Models}
is {\it not} the only choice giving $\sigma(S_0) = 25\%$. In fact, there is a~continuum of parameter sets
for which we can have a~fixed value for the local volatility. The different parameter values allow us to adjust the steepness,
skewness or smile features afforded by the various models. This is one important attractive feature of these models, particularly for calibration purposes.
Figure~\ref{fig:LocVol} illustrates the variety of typical
shapes of the local volatility functions $\sigma(S)/S$ when choosing one model over another for given choice of parameters in Table~\ref{tb:Models}.
Within a~given model, we can also further adjust the shapes by varying the model parameters.

\noindent\textbf{Test 1.}\\
First, we calculated the values of step-down call and put options with fixed spot $S_0=100$ and varied strike $K\in\{80,90,100,110,120\}$ under the four models (as given in Table~\ref{tb:Models}). The computations were done in Matlab with the use the QUADV routine, which numerically evaluates integrals using the recursive adaptive Simpson quadrature rule. The absolute error tolerance was set at $10^{-8}$. Computation of the integrals in (\ref{CallValue}) and (\ref{PutValue}) with several strikes $K$ reduces to the numerical evaluation of integrals of the following two forms:
\[ \int_{x_i}^{x_{i+1}} \F(y) \,\tilde{p}_\alpha^{\ell,\pm}(T;x,y) \,\d y \text{ \ and \ }\int_{x_i}^{x_{i+1}} \tilde{p}_\alpha^{\ell,\pm}(T;x,y) \,\d y \]
on several intervals $(x_i,x_{i+1})$. The results of the numerical tests are presented in Table~\ref{table1A}. Figure~\ref{fig:PDFs} provides typical shapes of the stock price PDF $p(S)\deq\tilde{p}_\alpha^{\ell,-}(T;x,\X(S))\X'(S)$ (where $T=\frac{1}{2}$, $x=\X(S_0=100)$, and $\ell=\X(L=90)$) computed for the four asset price models.

All computations were done on a Hewlett-Packard(R) Notebook PC with a four-core Intel(R) Core(TM) i7 CPU Q720 @ 1.6GHz and 4 GB of memory. Some details regarding the computational time are provided in Table~\ref{table1B}. Here we use the following notation: $N$ is the number of terms of the truncated spectral expansion, $T_{\text{sp.exp.}}$ is the time required to compute the spectrum $\{\tilde\lambda_n\}_{1\leq n\leq N}$ and evaluate the functions $\tilde\phi_{n,\alpha}^{\ell,-}(x)$ in (\ref{spec_exp}), and $T_{\text{quad.}}$ is the time required to numerically evaluate the integrals in (\ref{CallValue})--(\ref{PutValue}). Note that the computational time for the Bessel-K model is much smaller than in the other models thanks to fast and robust numerical Matlab routines for computing Bessel functions. Computations could be drastically sped up if the Matlab routines were translated into the machine code, faster and more robust routines for computing confluent hypergeometric functions were available, and the code was further optimized.

\noindent\textbf{Test 2.}\\
We note that the spectral expansion algorithm allows us to efficiently and simultaneously compute option values and deltas for several different strikes and spots. Thus for the second numerical test we calculated option values and deltas for a range of spot values. To speed up the computations, we used the Simpson quadrature rule with a uniform grid. It allowed us to parallelize the computations of spectral expansions by computing individually the parts that only depend on $y$, $x$, and $\ell$, respectively. The model parameters and problem parameters used here were the same as those for the first test. Figures~\ref{fig1}--\ref{fig4} demonstrate the results obtained.

\noindent\textbf{Test 3.}\\
The results obtained for the CEV model are compared with the Monte Carlo (biased) estimates with $M=10^6$ sample paths and $\Delta t=0.05$. The results of the numerical tests are presented in Table~\ref{table2}. We observe good agreement between the results provided by the spectral expansion method and the Monte Carlo algorithm.

\noindent\textbf{Test 4.}\\
In another numerical test, we study the sensitivity of the step option price under the Bessel-K model as the local
volatility function changes its steepness. The steepness of the local volatility was controlled by varying the
parameter $\mu$ from 0.1 to 0.9. The parameter $c$ was calibrated so that the local volatility at $S_0=100$ is fixed to 25\% in all cases. Figure~\ref{fig:BKmu} contains both plots of local volatility functions and the step-down put option prices. We observe that an increase in the steepness of the local volatility tends to decrease the step-down put values. This seems rather intuitive as an increase in steepness tends to increase the occupation time below a~given level $L$.

\noindent\textbf{Test 5.}\\
Step option prices converge to a~barrier option price as $\alpha\to\infty$. In fact, the Green's functions, and hence the respective transition PDFs converge to the respective functions for the process having the given upper or lower killing barrier level $L$. In the next numerical example (Figure~\ref{fig:BKrho} and Table~\ref{table3}) we show, by computational implementation of the spectral expansions for the transition PDF, how the price and delta sensitivity of a~step-down call option changes as $\alpha$ increases. The computations were done for the Bessel-K model using the parameters in Table~\ref{tb:Models}. The spot price and strike are fixed at 100.

\noindent\textbf{Test 6.}\\
We studied the convergence of the spectral expansion method as $N$, the number of terms, increases. Figure~\ref{fig:convA} illustrates the convergence of the PDF $\tilde{p}_\alpha^{\ell,-}$ given by (\ref{spec_expN}) as $N$ increases. The computations were done for the Bessel-K model whose parameters are specified in Table~\ref{tb:Models}. The accompanying table in Figure~\ref{fig:convB} contains the step-down call and put option prices for $S_0=K=100$, corresponding to using the truncated spectral expansion in (\ref{spec_expN}) for relatively small $N$ number of terms. We hence observe typical rapid convergence in the computed prices with the use of the truncated spectral expansion formula.

\noindent\textbf{Test 7.}\\
Recall that the eigenvalues $\{\tilde\lambda_n\}_{n\geq 1}$ grow linearly (for the CEV, confluent-U, and UOU models) or quadratically (for the Bessel-K model with killing) as $n$ increases. Therefore, the introduction of a killing upper barrier in a hypergeometric diffusion model allows us to accelerate the convergence of spectral series expansions. Since the step options are here defined such that they become worthless if the upper level $\h $ is hit before the maturity time, the option values are biased. In this last numerical test we study how such a bias depends on the level $\h $. Figure~\ref{fig:levelA} contains the graph of the initial price $C^-_{\mathrm{step}}(S_0=100,T=0.5,K=100)$ of a step-down call option plotted as a function of the level $\h $. As $\h \nearrow\infty$, the probability of hitting the level $\h $ decreases and the option price increases, asymptotically approaching the option price for the model without killing at an upper level. Again, the computations were done for the Bessel-K model. Table~\ref{fig:levelB} gives the put and call option values when
the upper killing level $\h $ changes from 150 to 400.

\section{Conclusions} \label{sec:7}
One main contribution of this paper is the development of new analytically closed-form spectral expansion formulae for the transition probability
density function under the CEV model, and under various other solvable families of multi-parameter diffusion processes having nonlinear local volatility,
in the presence of killing at an exponential stopping time (independent of the process) of occupation above or below any fixed level.
The spectral expansions in Proposition~\ref{propn_spectral} are applicable to a~general class of diffusions. This paper has successfully implemented the spectral
expansions under the CEV model and three other main families of nonlinear local volatility models. As shown in recent papers, the nonlinear local volatility models are useful for describing asset price dynamics and for pricing standard, lookback and barrier options in finance. This paper further succeeds in providing an analytical framework for the risk-neutral pricing of classes of occupation-time options under these models.
In particular, numerical test results show that the spectral expansions are rapidly convergent and provide an efficient method to compute the prices of any proportional step-up and step-down options. Moreover, the option Greeks (e.g. delta sensitivity) are also readily and simultaneously calculated by simply taking analytical derivatives of the spectral expansions without any loss of precision. The spectral expansions converge more rapidly with increasing time and hence the prices and Greeks are computed even more efficiently for longer dated options. This offers a~significant computational advantage in comparison to any Monte Carlo method.

This paper also derives a~general analytical expression for the resolvent kernel (i.e. Green's function) of solvable diffusions with killing at an exponential stopping time. This result, by itself, is also useful for analytically computing the Laplace transform of certain conditional expectations involving functionals of the occupation time for various families of diffusion process above or below a~given level. In particular, Lemma \ref{Lemma_Greenfunc} gives analytical formulae for the conditional
expectations in equations (\ref{expectations-for-stopping1}) and (\ref{expectations-for-stopping2}) for any solvable diffusion model. For example, these formulae automatically generate the expressions tabulated in \cite{BS02} for various drifted Brownian motions, geometric Brownian motion, the (squared) Bessel process, the Ornstein-Uhlenbeck (OU) and radial OU processes. Moreover, by the Doob transform employed in this paper, the respective expressions for the Laplace transform of the conditional expectations now also extend readily to various newly solvable diffusions. The results follow simply from the known fundamental solutions for the underlying process and their Wronskians.

The theoretical development in this paper also sets the foundation for further analytical extensions and applications involving the occupation time of newly solvable diffusion processes. For example, Lemma \ref{Lemma_Greenfunc} can also be extended to cover expectation formula for the case of killing in proportion to the occupation time between two levels or a~linear combination of occupation times below and above a~ given level. In turn, Proposition~\ref{propn_spectral} can then be extended to cover such cases. As long as we are in Spectral Category I, the spectral expansion formulae for the relevant transition probability density functions will have a~series representation. The inclusion of additionally imposed killing, i.e. the usual restrictions on the supremum or infimum of the process by specifying one or two interior levels, is also readily handled via the fundamental solutions with appropriately posed boundary conditions.

\appendix

\section{Proofs} \label{sect_A1}
\subsection{The proof of Lemma \ref{Lemma_Greenfunc}} \label{subsect_A11}
The expectations in (\ref{expectations-for-stopping1}) and (\ref{expectations-for-stopping2}) are respectively given by
$$\E_x\left[ \e^{-\alpha A_{\tau,\mathbf{X}}^{\ell,\pm}}\,;\,X_\tau \in \d y\right]
\equiv {\partial \over \partial y} \E_x\big[ \e^{-\alpha A_{\tau,\mathbf{X}}^{\ell,\pm}}\indfunc_{X_\tau < y}\big]\,\d y
= \lambda\widetilde{G}_\alpha^{\ell,\pm}(x,y,\lambda)\,\d y.$$
Hence, by a~standard application of the Feynman-Kac formula [e.g. see pages 105-106 in \cite{BS02}, but here generalized to the diffusion with generator defined in (\ref{Generator_X}) above] the respective Green's functions satisfy the ordinary differential equations
$$(\G - (\lambda + \alpha\indfunc_{x\le \ell}))\widetilde{G}_\alpha^{\ell,-}(x,y,\lambda) = -\delta (x - y) \,\text{ and } \,
(\G - (\lambda + \alpha\indfunc_{x\ge \ell}))\widetilde{G}_\alpha^{\ell,+}(x,y,\lambda) = -\delta (x - y),$$
where $\delta(\cdot)$ is the Dirac delta function. These two equations are solved in the same manner as follows.

Consider the equation in $\widetilde{G}_\alpha^{\ell,-}$. For $x \le \ell$, the solution is a~linear combination of the
pair of fundamental solutions $\{\phi_{\lambda + \alpha},\psi_{\lambda + \alpha}\}$ of $(\G - (\lambda + \alpha))\varphi = 0$.
For $x > \ell$, the solution is a~linear combination of the pair of fundamental solutions $\{\phi_\lambda,\psi_\lambda\}$ of
$(\G - \lambda)\varphi = 0$. A pair $\{\tilde\psi_\lambda,\tilde\phi_\lambda\}$ of fundamental solutions
for the equation $(\G - (\lambda + \alpha\indfunc_{x\le \ell}))\widetilde{G}_\alpha^{\ell,-} = 0$,
with the same respective left and right boundary conditions as the pair $\{\psi_\lambda,\phi_\lambda\}$, is hence given by
\begin{equation}  \label{tilde_funcs}
\tilde{\psi}_\lambda(x) = \left\{\begin{array}{ll}
      \psi_{\lambda + \alpha}(x), & x \le \ell,\\[10pt]
      A\psi_\lambda(x) + B\phi_\lambda(x), & x > \ell,
   \end{array}\right.
   \quad
   \tilde{\phi}_\lambda(x) = \left\{\begin{array}{ll}
      C\psi_{\lambda + \alpha}(x) + D\phi_{\lambda + \alpha}(x), & x \le \ell,\\[10pt]
      \phi_\lambda(x), & x > \ell.
   \end{array}\right.
\end{equation}
The constants $A,B,C,D$ are uniquely determined by requiring that these functions are in $\mathcal{C}^1(\I)$, i.e.
at $x=\ell$: $\tilde{\psi}_\lambda(\ell -) = \tilde{\psi}_\lambda(\ell +)$,
$\tilde{\psi}_\lambda'(\ell -) = \tilde{\psi}_\lambda'(\ell +)$ and
$\tilde{\phi}_\lambda(\ell -) = \tilde{\phi}_\lambda(\ell +)$,
$\tilde{\phi}_\lambda'(\ell -) = \tilde{\phi}_\lambda'(\ell +)$. The first set of conditions is a~linear system of equations in $A,B$
with solution $A=\frac{\wron{\phi}{\psi}{\lambda}{\lambda + \alpha}(\ell)}{\wron{\phi}{\psi}{\lambda}{\lambda}(\ell)}$,
$B=\frac{\wron{\psi}{\psi}{\lambda + \alpha}{\lambda}(\ell)}{\wron{\phi}{\psi}{\lambda}{\lambda}(\ell)}$
and the second set of conditions is a~linear system in $C,D$ with solution
$C=\frac{\wron{\phi}{\phi}{\lambda + \alpha}{\lambda}(\ell)}{\wron{\phi}{\psi}{\lambda + \alpha}{\lambda + \alpha}(\ell)}$,
$D=\frac{\wron{\phi}{\psi}{\lambda}{\lambda + \alpha}(\ell)}{\wron{\phi}{\psi}{\lambda + \alpha}{\lambda + \alpha}(\ell)}$.
Computing the Wronskian for the pair in (\ref{tilde_funcs}) gives
\begin{eqnarray*}
W[\tilde{\phi}_\lambda, \tilde{\psi}_\lambda](x) = \left\{\begin{array}{ll}
      \frac{\wron{\phi}{\psi}{\lambda + \alpha}{\lambda + \alpha}(x)}{\wron{\phi}{\psi}{\lambda + \alpha}
      {\lambda + \alpha}(\ell)}\wron{\phi}{\psi}{\lambda}{\lambda + \alpha}(\ell), & x \le \ell,\\[10pt]
      \frac{\wron{\phi}{\psi}{\lambda}{\lambda}(x)}{\wron{\phi}{\psi}{\lambda}
      {\lambda}(\ell)}\wron{\phi}{\psi}{\lambda}{\lambda + \alpha}(\ell), & x > \ell
   \end{array}\right. = \frac{\wron{\phi}{\psi}{\lambda}{\lambda + \alpha}(\ell)}{\s(\ell)} \s(x) \equiv \widetilde{\mathcal W}_\lambda \, \s(x).
\end{eqnarray*}
Here we used the identity $\wron{\phi}{\psi}{\gamma}{\gamma}(x) = {\mathcal W}_\gamma \,\s(x)$ so that
$\wron{\phi}{\psi}{\gamma}{\gamma}(x) / \wron{\phi}{\psi}{\gamma}{\gamma}(\ell) = \s(x) / \s(\ell)$ for any $\gamma \in\C$.
The Green's function $\widetilde{G}_\alpha^{\ell,-}$ is then simply given by combining this Wronskian with (\ref{tilde_funcs}):
\begin{equation}
{\widetilde{G}_\alpha^{\ell,-}(x,y,\lambda) \over \m(y)}
= \frac{\tilde\psi_\lambda(x \wedge y)\tilde\phi_\lambda(x\vee y)}{\widetilde{\mathcal W}_\lambda}
= \s(\ell)\frac{\tilde\psi_\lambda(x \wedge y)\tilde\phi_\lambda(x\vee y)}{\wron{\phi}{\psi}{\lambda}{\lambda + \alpha}(\ell)}
\label{greenfunc_tilde1}
\end{equation}
for $x,y\in\I$. The Green's function in (\ref{greenfunc_tilde1}) can be recast as in equation (\ref{Greenfunc_minus}) by using
(\ref{tilde_funcs}), and the definition for $G$ in (\ref{greenfunc}), for the four different cases: $x,y\le\ell$, $x\le \ell \le y$, $x\ge \ell \ge y$
or $x,y\ge\ell$.

The derivation for $\widetilde{G}_\alpha^{\ell,+}$ in equation (\ref{Greenfunc_plus}) follows the same steps as above. A general solution to the equation
$(\G - (\lambda + \alpha\indfunc_{x\ge \ell}))\widetilde{G}_\alpha^{\ell,+} = 0$
is a~linear combination of the pair $\{\psi_\lambda,\phi_\lambda\}$, for $x < \ell$, and of the pair
$\{\phi_{\lambda + \alpha},\psi_{\lambda + \alpha}\}$ for $x \ge \ell$.
In analogy with equation (\ref{tilde_funcs}), a~pair of solutions $\{\tilde\psi_\lambda,\tilde\phi_\lambda\}$,
with the same left and right boundary conditions as the pair $\{\psi_\lambda,\phi_\lambda\}$ is as follows:
\begin{equation}  \label{tilde_funcs2}
\tilde{\psi}_\lambda(x) = \left\{\begin{array}{ll}
      \psi_\lambda(x), & x < \ell,\\[10pt]
      A\psi_{\lambda + \alpha}(x) + B\phi_{\lambda + \alpha}(x), & x \ge \ell,
   \end{array}\right.
   \quad
   \tilde{\phi}_\lambda(x) = \left\{\begin{array}{ll}
      C\psi_\lambda(x) + D\phi_\lambda(x), & x < \ell,\\[10pt]
      \phi_{\lambda + \alpha}(x), & x \ge \ell.
   \end{array}\right.
\end{equation}
The requirement that these functions are in $\mathcal{C}^1(\I)$ then uniquely gives
$A=\frac{\wron{\phi}{\psi}{\lambda + \alpha}{\lambda}(\ell)}{\wron{\phi}{\psi}{\lambda + \alpha}{\lambda + \alpha}(\ell)}$,
$B=\frac{\wron{\psi}{\psi}{\lambda}{\lambda + \alpha}(\ell)}{\wron{\phi}{\psi}{\lambda + \alpha}{\lambda + \alpha}(\ell)}$,
$C=\frac{\wron{\phi}{\phi}{\lambda}{\lambda + \alpha}(\ell)}{\wron{\phi}{\psi}{\lambda}{\lambda}(\ell)}$,
$D=\frac{\wron{\phi}{\psi}{\lambda + \alpha}{\lambda}(\ell)}{\wron{\phi}{\psi}{\lambda}{\lambda}(\ell)}$. The Wronskian of the two solutions
in (\ref{tilde_funcs2}) is readily computed to be
\begin{eqnarray*}
W[\tilde{\phi}_\lambda, \tilde{\psi}_\lambda](x)
 = \frac{\wron{\phi}{\psi}{\lambda + \alpha}{\lambda}(\ell)}{\s(\ell)} \s(x) \equiv \widetilde{\mathcal W}_\lambda \, \s(x).
\end{eqnarray*}
Combining this Wronskian with (\ref{tilde_funcs2}) gives the Green's function:
\begin{equation}
{\widetilde{G}_\alpha^{\ell,+}(x,y,\lambda) \over \m(y)}
= \frac{\tilde\psi_\lambda(x \wedge y)\tilde\phi_\lambda(x\vee y)}{\widetilde{\mathcal W}_\lambda}
= \s(\ell)\frac{\tilde\psi_\lambda(x \wedge y)\tilde\phi_\lambda(x\vee y)}{\wron{\phi}{\psi}{\lambda + \alpha}{\lambda}(\ell)}
\label{greenfunc_tilde2}
\end{equation}
for $x,y\in\I$. The Green's function takes the more explicit form in (\ref{Greenfunc_plus}) by using
(\ref{tilde_funcs2}), and the definition for $G$ in (\ref{greenfunc}), for the four different cases: $x,y\le\ell$, $x\le \ell \le y$, $x\ge \ell \ge y$
or $x,y\ge\ell$. This completes the proof.
\vskip 0.1in
[{\it Remark}: In the limit $\alpha\to\infty$, the Green's functions $\widetilde{G}_\alpha^{\ell,\pm}$
can be proven to converge to the respective Green's functions for the process with killing at an upper (or lower) level $\ell$.
We do not give a~proof here, as it is based on the $\alpha\to\infty$ formal asymptotic analysis of the fundamental solutions
$\psi_{\lambda + \alpha}(x)$ and $\phi_{\lambda + \alpha}(x)$. By the leading term asymptotics,
$\psi_{\lambda + \alpha}(x)/ \psi_{\lambda + \alpha}'(x)\to 0$  and $\phi_{\lambda + \alpha}(x)/ \phi_{\lambda + \alpha}'(x)\to 0$.
Hence, for example,
$\frac{\wron{\phi}{\phi}{\lambda}{\lambda+\alpha}(\ell)}{\wron{\phi}{\psi}{\lambda+\alpha}{\lambda}(\ell)}\to -\frac{\phi_\lambda(\ell)}{\psi_\lambda(\ell)}$ and
$\frac{\wron{\psi}{\psi}{\lambda+\alpha}{\lambda}(\ell)}{\wron{\phi}{\psi}{\lambda}{\lambda+\alpha}(\ell)}\to -\frac{\psi_\lambda(\ell)}{\phi_\lambda(\ell)}$.
Using these limits and the asymptotic properties we can arrive at the asymptotic forms for the Green's functions in equations
(\ref{Greenfunc_plus}) and (\ref{Greenfunc_minus}) of Lemma \ref{Lemma_Greenfunc}. In particular, as $\alpha\to\infty$:
\begin{equation*}
{\widetilde{G}_\alpha^{\ell,+}(x,y,\lambda) \over \m(y)} \to {G^{\ell,+}(x,y,\lambda) \over \m(y)} \equiv
\frac{\psi_\lambda(x \wedge y){\mathcal S}(x\vee y,\ell;\lambda)}{{\mathcal W}_\lambda \psi_\lambda(\ell)},\,\,\,\,x,y \le \ell,
\end{equation*}
with $G^{\ell,+}(x,y,\lambda)\equiv 0$ if $x > \ell$ or $y > \ell$, and
\begin{equation*}
{\widetilde{G}_\alpha^{\ell,-}(x,y,\lambda) \over \m(y)} \to {G^{\ell,-}(x,y,\lambda) \over \m(y)} \equiv
\frac{{\mathcal S}(\ell,x\wedge y;\lambda)\phi_\lambda(x \vee y)}{{\mathcal W}_\lambda \phi_\lambda(\ell)},\,\,\,\,x,y \ge \ell,
\end{equation*}
with $G^{\ell,-}(x,y,\lambda)\equiv 0$ if $x < \ell$ or $y < \ell$. Here, $G^{\ell,+}(x,y,\lambda)$, or $G^{\ell,-}(x,y,\lambda)$, are the
respective Green's function for the process $\mathrm{X} < \ell$, or $\mathrm{X} > \ell$, with killing imposed at the upper, or lower,
level $\ell$. The generalized cylinder function is defined by
$$
{\mathcal S}(x,y;\lambda) \deq \phi_\lambda(x)\psi_\lambda(y) - \psi_\lambda(x)\phi_\lambda(y).
$$
For real $\lambda >0$, ${\mathcal S}(x,\ell;\lambda)$ (${\mathcal S}(\ell,x;\lambda)$) is a~decreasing (increasing) positive function
for $x\le \ell$ ($x \ge \ell$).]

\subsection{The proof of Proposition \ref{propn_spectral}} \label{subsect_A12}

Due to the similar structure of the Green's functions $\widetilde{G}_\alpha^{\ell,+}$ and $\widetilde{G}_\alpha^{\ell,-}$, as observed in equations
(\ref{Greenfunc_plus}) and (\ref{Greenfunc_minus}), we will only present the proof for $\tilde{p}_\alpha^{\ell,+}$,
i.e. equations (\ref{spec_exp}) and (\ref{spec_exp_a}). The PDF is given by the Laplace inverse
$\tilde{p}_\alpha^{\ell,+}(t;x,y) = {\mathcal L}_\lambda^{-1}\left(\widetilde{G}_\alpha^{\ell,+}(x,y,\lambda)\right)(t)$
which can be computed for all four separate cases in equation (\ref{Greenfunc_plus})
with standard use of the Residue Theorem upon closing the Bromwich contour integral on the left-half of the complex $\lambda$ plane.
In particular,
$${\tilde{p}_\alpha^{\ell,+}(t;x,y)\over \m(y)} = \sum_{n=1}^\infty \e^{-\tilde\lambda_n t}
\,\text{Res}\left[ {\widetilde{G}_\alpha^{\ell,+}(x,y,\lambda)\over \m(y)} \,;\, \lambda = -\tilde{\lambda}_n\right]$$
where $\text{Res}\left[ {\widetilde{G}_\alpha^{\ell,+}(x,y,\lambda)\over \m(y)} \,;\, \lambda = -\tilde{\lambda}_n\right] =
\tilde\phi_{n,\alpha}^{\ell,+}(x)\tilde\phi_{n,\alpha}^{\ell,+}(y)$.

Consider the case where $x\le \ell \le y$, i.e.
${\widetilde{G}_\alpha^{\ell,+}(x,y,\lambda)\over \m(y)} = {\s(\ell)\over \wron{\phi}{\psi}{\lambda+\alpha}{\lambda}(\ell)}
\psi_\lambda(x)\phi_{\lambda+\alpha}(y)$. The only singularities are the real simple zeros at $\lambda = -\tilde{\lambda}_n, n\ge 1,$ of the Wronskian
$\wron{\phi}{\psi}{\lambda+\alpha}{\lambda}(\ell)$, where $\{\tilde\lambda_n\}_{n\ge 1}$ is an increasing sequence of eigenvalues solving
$\wron{\phi}{\,\psi}{-\tilde\lambda_n +\alpha}{-\tilde\lambda_n}(\ell) = 0$.
By analyticity of the fundamental functions (w.r.t. $\lambda$), the residue of the Green's function at these simple poles is then given by:
$$\text{Res}\left[{\widetilde{G}_\alpha^{\ell,+}(x,y,\lambda)\over \m(y)}\,;\,\lambda = -\tilde{\lambda}_n \right]
= \s(\ell)\,\text{Res}\left[{\psi_\lambda(x)\phi_{\lambda+\alpha}(y)\over \wron{\phi}{\psi}{\lambda+\alpha}{\lambda}(\ell)}
\,;\,\lambda = -\tilde{\lambda}_n \right] =
\s(\ell)\frac{\psi_{\!-\tilde\lambda_n}(x)
\phi_{\!-\tilde\lambda_n +\alpha}(y)}{{d\over d\lambda}\wron{\phi}{\,\psi}{\lambda +\alpha}{\lambda}(\ell)\vert_{\lambda=-\tilde\lambda_n}}.
$$
The case where $y\le \ell \le x$ is similar, where
$$\text{Res}\left[{\widetilde{G}_\alpha^{\ell,+}(x,y,\lambda)\over \m(y)}\,;\,\lambda = -\tilde{\lambda}_n \right]
= \s(\ell)\,\text{Res}\left[{\phi_{\lambda+\alpha}(x)\psi_\lambda(y)\over \wron{\phi}{\psi}{\lambda+\alpha}{\lambda}(\ell)}
\,;\,\lambda = -\tilde{\lambda}_n \right] =
\s(\ell)\frac{\phi_{\!-\tilde\lambda_n +\alpha}(x)\psi_{\!-\tilde\lambda_n}(y)
}{{d\over d\lambda}\wron{\phi}{\,\psi}{\lambda +\alpha}{\lambda}(\ell)\vert_{\lambda=-\tilde\lambda_n}}.
$$

We now consider the case where $x\le \ell, y \le \ell$, i.e.
$${\widetilde{G}_\alpha^{\ell,+}(x,y,\lambda)\over \m(y)} =
{G(x,y,\lambda)\over \m(y)} + \displaystyle\frac{\wron{\phi}{\phi}{\lambda}{\lambda+\alpha}(\ell)}{\wron{\phi}{\psi}{\lambda+\alpha}{\lambda}(\ell)}
{\psi_\lambda(x)\psi_\lambda(y) \over {\mathcal W}_\lambda},$$
with ${G(x,y,\lambda)\over \m(y)}$ given by equation (\ref{greenfunc}). Hence, this Green's function has {\it two sets of singularities}.
One is the set of simple poles corresponding to the simple zeros solving ${\mathcal W}_{\lambda = -\lambda_n} = 0$, i.e. $\{\lambda_n\}_{n\ge 1}$
denotes the eigenvalue set for the Sturm-Liouville problem with generator $\G$ for diffusion $\mathrm{X}$ on $\I$.
The other is the set of zeros $\lambda = -\tilde{\lambda}_n, n\ge 1,$ of $\wron{\phi}{\psi}{\lambda+\alpha}{\lambda}(\ell)$.
We now establish that the only nonzero residues are for the set $\lambda = -\tilde{\lambda}_n, n\ge 1$. Assume the set $\{\lambda_n\}_{n\ge 1}$
is isolated from the set $\{\tilde\lambda_n\}_{n\ge 1}$. Then, computing the residue at every simple pole $\lambda = -\lambda_n$ gives:
\begin{align*}
\text{Res}\left[{\widetilde{G}_\alpha^{\ell,+}(x,y,\lambda)\over \m(y)};\lambda = -\lambda_n \right] &=
\text{Res}\left[{G(x,y,\lambda)\over \m(y)};\lambda = -\lambda_n \right] +
\text{Res}\left[\displaystyle\frac{\wron{\phi}{\phi}{\lambda}{\lambda+\alpha}(\ell)}{\wron{\phi}{\psi}{\lambda+\alpha}{\lambda}(\ell)}
{\psi_\lambda(x)\psi_\lambda(y) \over {\mathcal W}_\lambda};\lambda = -\lambda_n \right].
\end{align*}
For $\lambda = -\lambda_n$, the fundamental functions $\psi_\lambda$ and $\phi_\lambda$ are proportional to each other, i.e.
$\phi_{-\lambda_n}(x) = A_n\psi_{-\lambda_n}(x)$, for some constant $A_n\ne 0$. Hence, the ratio of Wronskians in the above second residue term evaluates to
$\frac{\wron{\phi}{\phi}{\lambda}{\lambda+\alpha}(\ell)}{\wron{\phi}{\psi}{\lambda+\alpha}{\lambda}(\ell)}\bigg\vert_{\lambda = -\lambda_n} = -A_n$.
Denoting $C_n \deq {d\over d\lambda} {\mathcal W}_\lambda\big\vert_{\lambda = -\lambda_n}$, the second residue evaluates to
\begin{align*}
\text{Res}\left[\displaystyle\frac{\wron{\phi}{\phi}{\lambda}{\lambda+\alpha}(\ell)}{\wron{\phi}{\psi}{\lambda+\alpha}{\lambda}(\ell)}
{\psi_\lambda(x)\psi_\lambda(y) \over {\mathcal W}_\lambda};\lambda = -\lambda_n \right]
= -{A_n\over C_n}\psi_{-\lambda_n}(x)\psi_{-\lambda_n}(y) = - \phi_n(x)\phi_n(y)
\end{align*}
where $\phi_n(x) \deq \pm\sqrt{{A_n\over C_n}}\psi_{-\lambda_n}(x)$ is the $n$-th eigenfunction of $-\G$ for $x\in\I$.
The first residue has the standard eigenfunction product form:
\begin{align*}
\text{Res}\left[{G(x,y,\lambda)\over \m(y)};\lambda = -\lambda_n \right] = \phi_n(x)\phi_n(y).
\end{align*}
Adding the two terms gives a~zero residue at every $\lambda = -\lambda_n$, i.e.
$\text{Res}\left[{\widetilde{G}_\alpha^{\ell,+}(x,y,\lambda)\over \m(y)};\lambda = -\lambda_n \right] = 0$.

The only nonzero residues are hence due to the assumed simple poles $\lambda = -\tilde{\lambda}_n, n\ge 1$ and these are
\begin{align*}
\text{Res}\left[{\widetilde{G}_\alpha^{\ell,+}(x,y,\lambda)\over \m(y)};\lambda = -\tilde\lambda_n \right] &=
\text{Res}\left[\displaystyle\frac{\wron{\phi}{\phi}{\lambda}{\lambda+\alpha}(\ell)}{\wron{\phi}{\psi}{\lambda+\alpha}{\lambda}(\ell)}
{\psi_\lambda(x)\psi_\lambda(y) \over {\mathcal W}_\lambda};\lambda = -\tilde\lambda_n \right]
\\
&=\left[\frac{\wron{\phi}{\phi}{\lambda}{\lambda +\alpha}(\ell)}
{{\mathcal W}_{\lambda}\,{d\over d\lambda}\wron{\phi}{\,\psi}{\lambda +\alpha}{\lambda}(\ell)}\right]_{\lambda=-\tilde\lambda_n}
\psi_{\!-\tilde\lambda_n}(x)\psi_{\!-\tilde\lambda_n}(y)
\end{align*}
which is the form in equation (\ref{spec_exp_a}) for $x\le \ell, y \le \ell$. We note that if the $n$-th eigenvalue $\tilde\lambda_n$
happens to also coincide with an eigenvalue in the set $\{\lambda_n\}_{n\ge 1}$, say $\tilde\lambda_n = \lambda_m$ for some $m\ge 1$,
then the above formula is interpreted as a~limit $\lambda \to -\tilde\lambda_n$.
We remark that for a~coalescence of zeros (w.r.t. $\lambda$) of both Wronskians,
${\mathcal W}_{\lambda}$ and $\wron{\phi}{\,\psi}{\lambda +\alpha}{\lambda}(\ell)$, the point $\lambda = -\tilde\lambda_n$
is still a~first order pole since $\phi_\lambda = A_n\psi_\lambda$, and hence the Wronskian $\wron{\phi}{\phi}{\lambda}{\lambda +\alpha}(\ell)$
in the numerator is proportional to $\wron{\phi}{\,\psi}{\lambda +\alpha}{\lambda}(\ell)$ in the denominator, at $\lambda = -\tilde\lambda_n$.

The last case where $x\ge \ell, y \ge \ell$ follows in very similar fashion. Again, the residues for
the set $\lambda = -\lambda_n, n\ge 1,$ are all zero and the only nonzero residues are due to
the assumed simple poles $\lambda = -\tilde{\lambda}_n, n\ge 1,$ where
\begin{align*}
\text{Res}\left[{\widetilde{G}_\alpha^{\ell,+}(x,y,\lambda)\over \m(y)};\lambda = -\tilde\lambda_n \right] &=
\text{Res}\left[\displaystyle\frac{\wron{\psi}{\psi}{\lambda}{\lambda+\alpha}(\ell)}{\wron{\phi}{\psi}{\lambda+\alpha}{\lambda}(\ell)}
  {\phi_{\lambda+\alpha}(x)\phi_{\lambda+\alpha}(y) \over {\mathcal W}_{\lambda+\alpha}};\lambda = -\tilde\lambda_n \right]
\\
&=\left[\frac{\wron{\psi}{\psi}{\lambda}{\lambda +\alpha}(\ell)}
{{\mathcal W}_{\lambda+\alpha}\,{d\over d\lambda}\wron{\phi}{\,\psi}{\lambda +\alpha}{\lambda}(\ell)}\right]_{\lambda=-\tilde\lambda_n}
\phi_{\!-\tilde\lambda_n+\alpha}(x)\phi_{\!-\tilde\lambda_n+\alpha}(y).
\end{align*}
Again, the same above remarks apply here as for the previous expression just above.

\bibliographystyle{plain} 
\bibliography{rmakarov-bib}

\newpage

\begin{table}
  \centering
  \caption{Parameters of the four asset price models}\label{tb:Models}
  \begin{tabular}{ll}
    \textbf{Model} & \textbf{Parameters} \\
    \hline
    CEV & $\delta=2500$; $\beta=-2$; $\mathrm{r}=0.02$\\
    \hline
    Bessel-K (BK) & $\mu = 0.5$; $\gamma_0 = 2.2$; $\rho = 0.00001$;
      $c = 728.7467627$; $\h =500$; $\mathrm{r}=0.02$ \\
    \hline
    Confluent-U (CU) & $\mu = 0.5$; $c = 133.1173736$; $\rho = 0.01$; $\nu = \sqrt{2}$; $\gamma_1 = 0.1$; $\mathrm{r}=0.02$ \\
    \hline
    UOU & $\rho = 0.001$; $\nu = 2$; $c = 71.11606167$; $\gamma_1 = 0.2$; $\mathrm{r}=0.02$\\
    \hline
  \end{tabular}
\end{table}

\begin{table}
  \centering
  \setlength{\tabcolsep}{10pt}
  \caption{Values of the step-down call and put options computed for a~range of strikes under the four asset price models. The parameters used are $S_0 = 100$, $T=0.5$, $\alpha = 5$, $L=90$.}
  \label{table1A}
    \begin{tabular}{rr@{\;}r@{\;}r@{\;}rr@{\;}r@{\;}r@{\;}r}
      \hline\noalign{\smallskip}
      &\multicolumn{4}{c}{Step Calls}&\multicolumn{4}{c}{Step Puts}\\
      $K$ & CEV & BK & CU & UOU & CEV & BK & CU & UOU\\
      \hline\noalign{\smallskip}
80  &    20.364424  &  19.774476  &  20.657331  & 20.603226  &   0.295586   &   0.167632  &   0.340048  &     0.013590  \\
90  &    13.359199  &  12.953074  &  13.563118  & 13.043498  &   0.840873   &   0.748891  &   0.859860  &     0.379618  \\
100 &     7.336247  &   7.351327  &   7.403872  &  7.244763  &   2.368432   &   2.549805  &   2.314636  &     2.506640  \\
110 &     3.130114  &   3.610598  &   3.107478  &  3.844225  &   5.712811   &   6.211737  &   5.632266  &     7.031858  \\
120 &     0.948158  &   1.548517  &   0.947478  &  2.057045  &  11.081367   &  11.552317  &  11.086289  &    13.170435  \\
      \noalign{\smallskip}\hline\noalign{\smallskip}
    \end{tabular}
\end{table}

\begin{table}
  \centering
  \caption{Computational costs of numerical evaluation of step-down call and put options (for one spot value and five strike prices) under the CEV, CU, UOU, and BK asset price models. The Matlab code was run on a Hewlett-Packard(R) Notebook PC with a four-core Intel(R) Core(TM) i7 CPU Q720 @ 1.6GHz and 4 GB of memory.}
  \label{table1B}
  \setlength{\tabcolsep}{10pt}
    \begin{tabular}{rrrr}
      \hline\noalign{\smallskip}
      Model& $N$ & $T_{\text{sp.exp.}}$ & $T_{\text{quad.}}$ \\
      \hline\noalign{\smallskip}
      CEV  & 150 & 8.81~sec  &  60.70~sec\\
      BK   & 50  & 0.77~sec  &  2.94~sec\\
      CU   & 300 & 20.96~sec & 148.68~sec \\
      UOU  & 150 & 5.36~sec  & 52.63~sec \\
      \noalign{\smallskip}\hline\noalign{\smallskip}
    \end{tabular}
\end{table}

\begin{table}
  \centering
  \caption{The Monte Carlo biased estimates of proportional step-down call and put prices under the CEV model using the Brownian bridge interpolation  method are tabulated for various values of strike $K$.  Monte Carlo estimates are compared with analytical approximations provided by the spectral expansion method. Here, $s_M$ denotes the stochastic error. The parameters used are $S_0 = 100$, $T=1$, $\mathrm{r}=0.1$, $\delta = 2.5$, $\beta = -0.5$, $\alpha = 0.5$, $L=90$, $\Delta t=0.05$. The number of sample paths is $M=10^6$.}
  \label{table2}
  \setlength{\tabcolsep}{10pt}
    \begin{tabular}{rr@{$\,\pm\,$}l@{\;}rr@{$\,\pm\,$}l@{\;}r}
      \hline\noalign{\smallskip}
      &\multicolumn{3}{c}{Step Calls}&\multicolumn{3}{c}{Step Puts}\\
      $K$ & MCM~Estimate & $s_M$   & Analyt.~Approx.    &  MCM~Estimate & $s_M$   &  Analyt.~Approx. \\
      \hline\noalign{\smallskip}
      90  & 20.9950  & 0.0009  & 20.993325  &  2.0416   & 0.0006  &  2.039807\\
      100 & 14.8192  & 0.0006  & 14.817208  &  4.1988   & 0.0010  &  4.195828\\
      110 & 9.8621   & 0.0004  & 9.860234   &  7.5750   & 0.0017  &  7.570991\\
      \noalign{\smallskip}\hline\noalign{\smallskip}
    \end{tabular}
\end{table}

\begin{table}
  \centering
  \caption{Step-down call and put option values are computed under the Bessel-K model for increasing values of~$\alpha$. The option parameters are $K=100$, $T=0.5$, $L=90$. The model parameters are specified in Table~\ref{tb:Models}. The case with $\alpha=\infty$ corresponds to the double knock-out barrier option with barriers $L=90$ and $U=400$. When $\alpha=0$, the step call and put options reduce to the European call and put options, respectively.}
  \label{table3}
  \setlength{\tabcolsep}{10pt}
    \begin{tabular}{rrr}
      \hline\noalign{\smallskip}
      $\alpha$ & \multicolumn{1}{c}{Call Value} & \multicolumn{1}{c}{Put Value} \\
      \hline\noalign{\smallskip}
            0 & 7.525593 & 6.530576\\
            1 & 7.483054 & 5.213809\\
            5 & 7.351327 & 2.549805\\
            10 & 7.240869 & 1.469595\\
            25 & 7.060945 & 0.752496\\
            50 & 6.925459 & 0.524287\\
            100 & 6.806920 & 0.404356\\
            200 & 6.710443 & 0.336245\\
            500 & 6.615400 & 0.285627\\
            1000 & 6.563974 & 0.263218\\
        $\infty$ & 6.494245  & 0.218820  \\
      \noalign{\smallskip}\hline\noalign{\smallskip}
    \end{tabular}
\end{table}

\begin{figure}[ht]
  \centering
\begin{subfigure}[b]{0.475\linewidth}
  \centering
  \includegraphics[width=\linewidth]{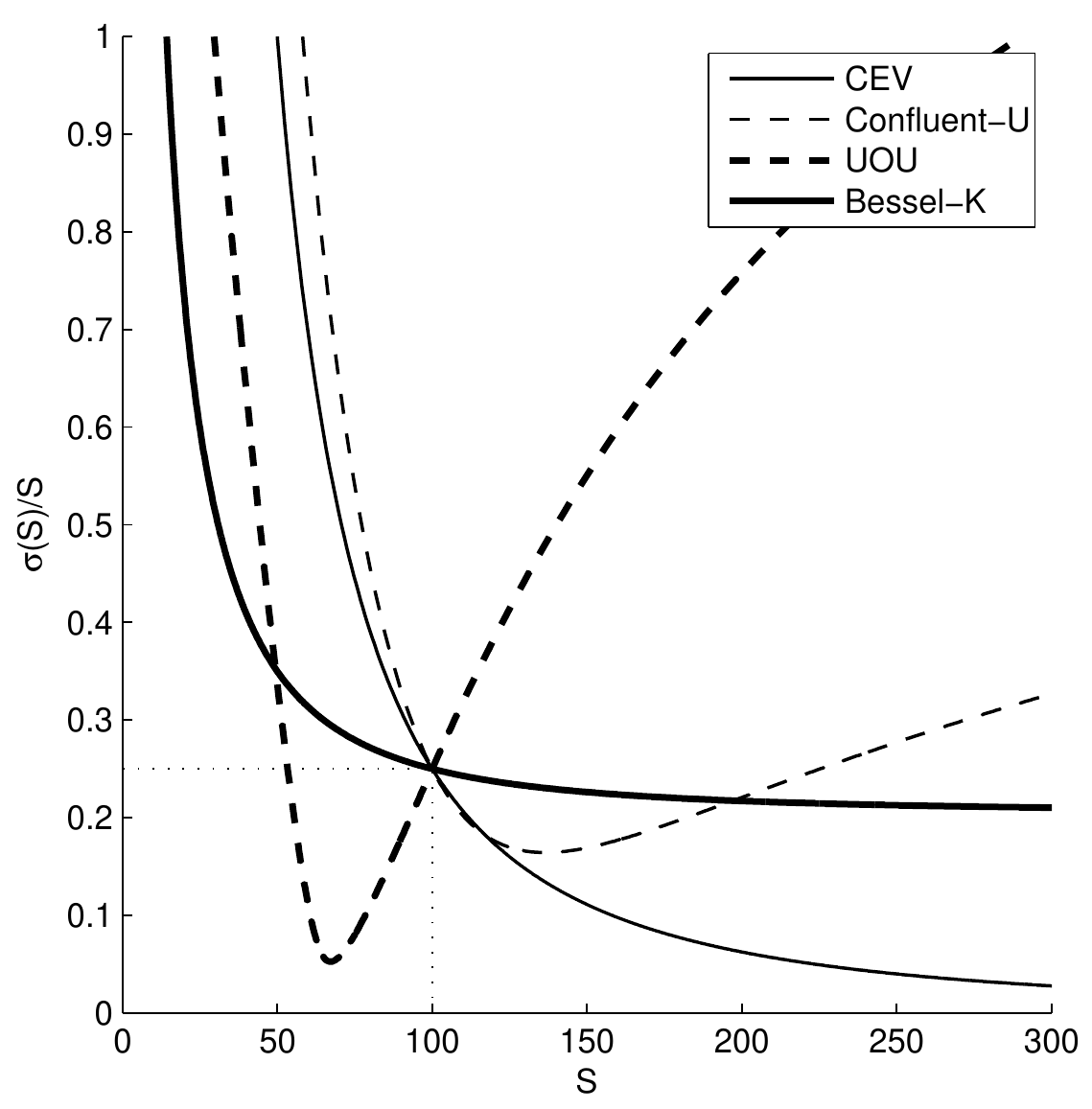}
  \caption{The local volatility $\sigma(S)/S$.}\label{fig:LocVol}
\end{subfigure}
\hfill
\begin{subfigure}[b]{0.475\linewidth}
  \centering
  \includegraphics[width=\linewidth]{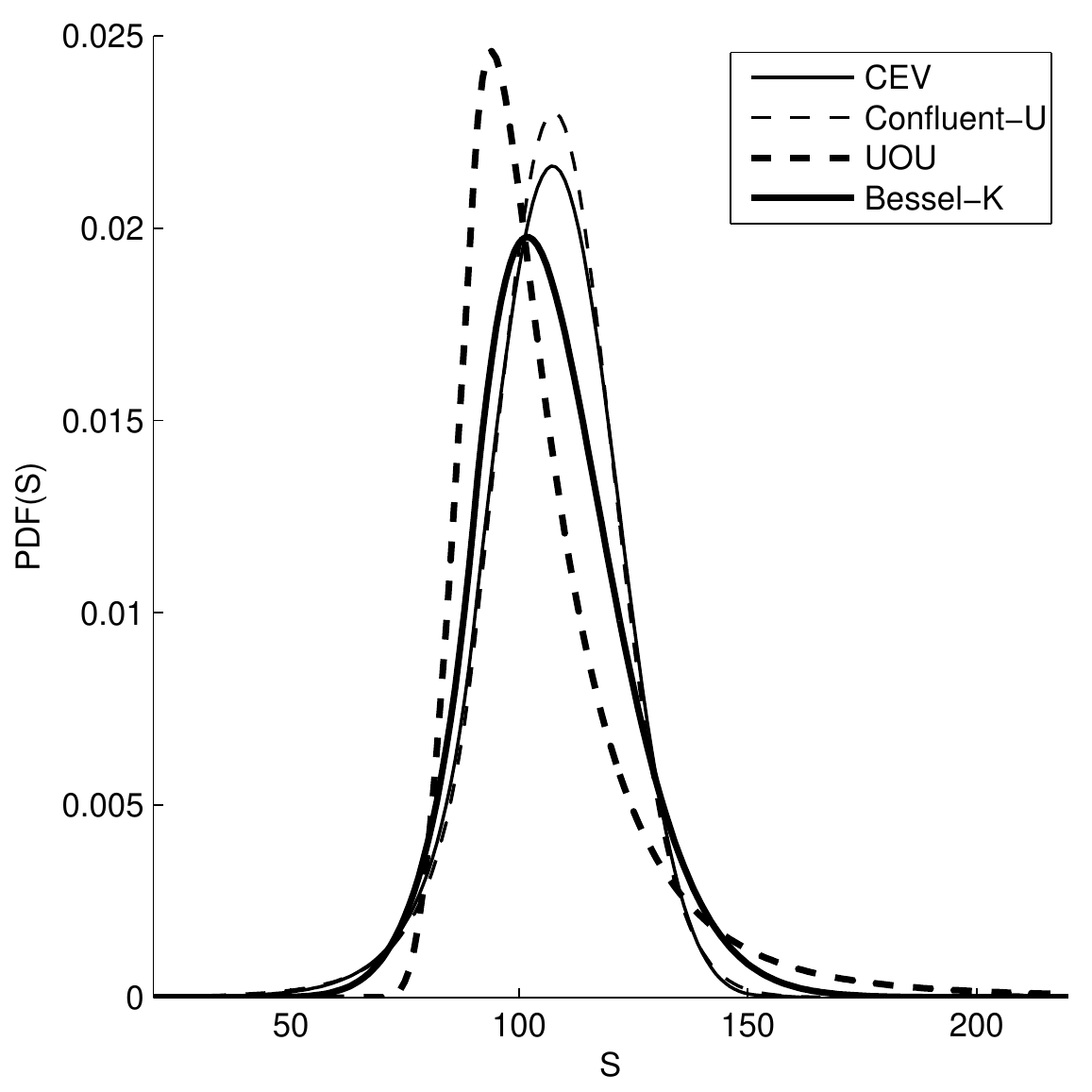}
  \caption{The PDF $\tilde{p}_\alpha^{\ell,-}$.}\label{fig:PDFs}
\end{subfigure}
\caption{The local volatility function $\sigma(S)/S$ and corresponding transition PDFs $\tilde{p}_\alpha^{\ell,-}$, as function of spot $S$, for the asset price process with killing at an exponential stopping time of occupation below a~fixed level $L$, are computed for four asset price models specified in Table~\ref{tb:Models}. The PDFs $\tilde{p}_\alpha^{\ell,-}$ are computed for the following parameters: $S_0=100$, $\alpha=5$, $L=90$, and $T=\frac{1}{2}$.}%
\label{fig:LocVolandPDFs}
\end{figure}

\begin{figure}[ht]
  \centering
\begin{subfigure}[b]{0.475\linewidth}
  \centering
  \includegraphics[width=\linewidth]{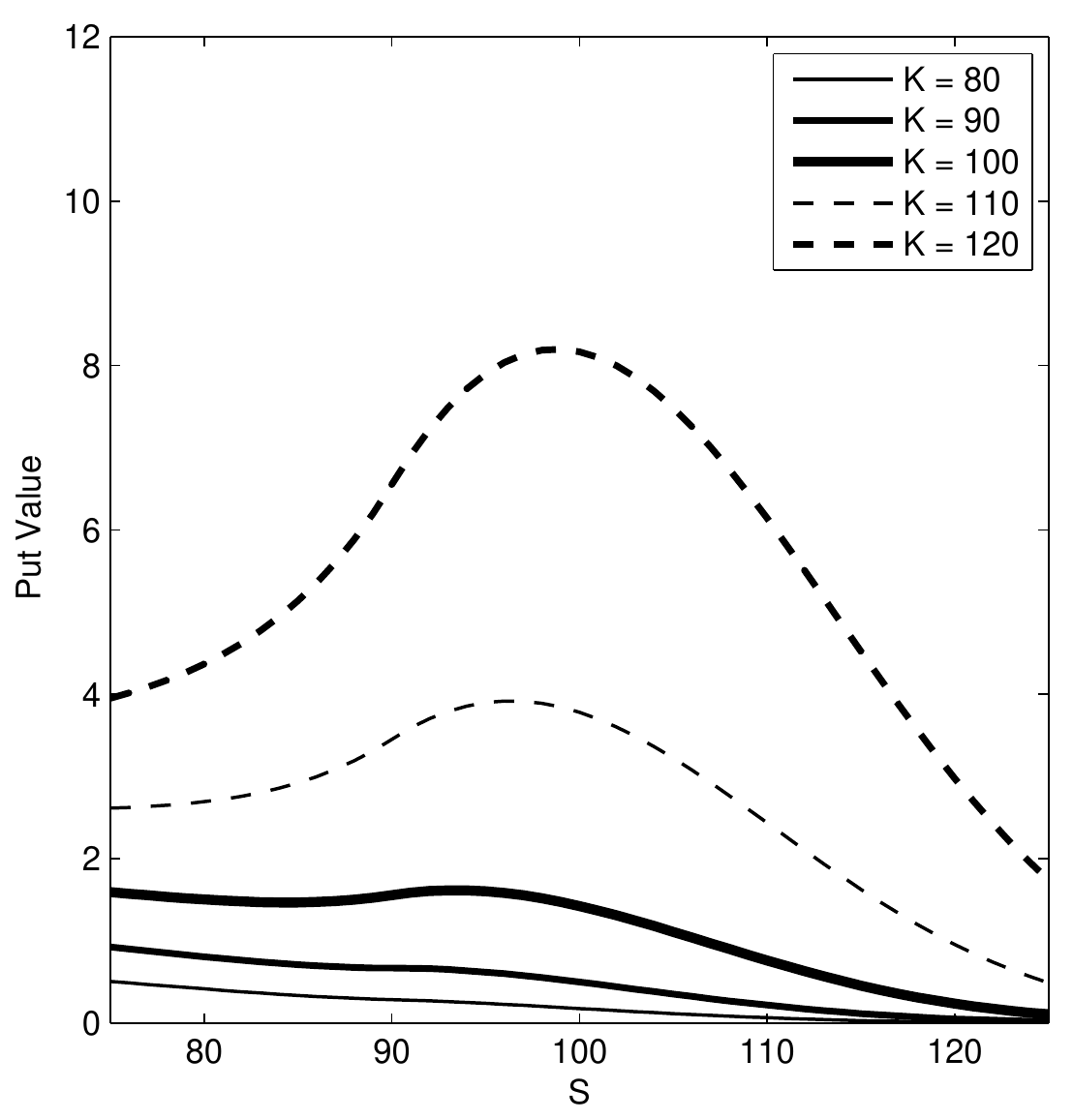}
  \caption{Put values under the CEV model.}\label{fig1a}
\end{subfigure}
\hfill
\begin{subfigure}[b]{0.475\linewidth}
  \centering
  \includegraphics[width=\linewidth]{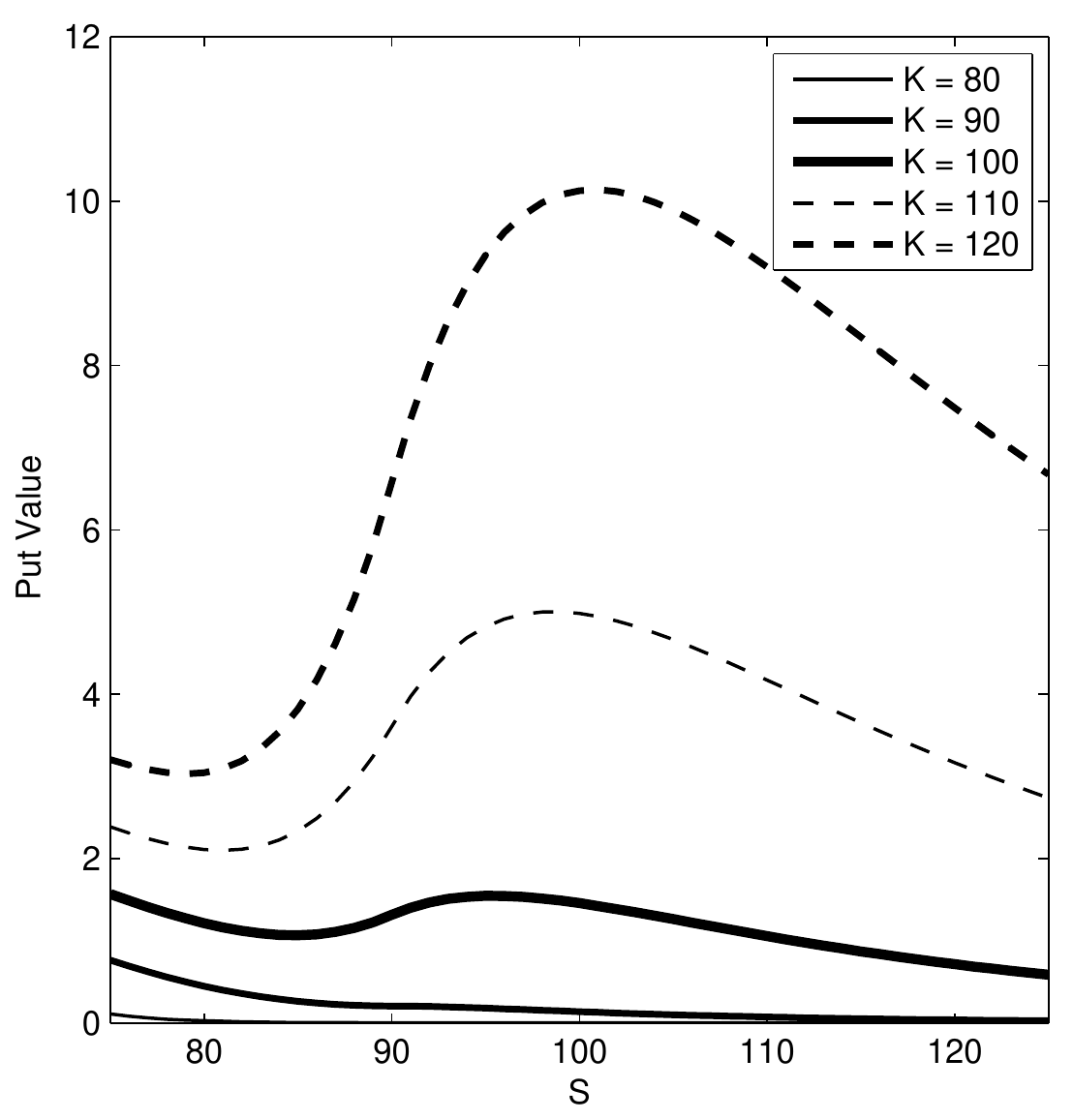}
  \caption{Put values under the UOU model.}\label{fig1b}
\end{subfigure}
\caption{Values of the step-down put option, as function of spot $S$, computed for a~range of strikes under the CEV and UOU models.}%
\label{fig1}
\end{figure}

\begin{figure}[ht]
  \centering
\begin{subfigure}[b]{0.475\linewidth}
  \centering
  \includegraphics[width=\linewidth]{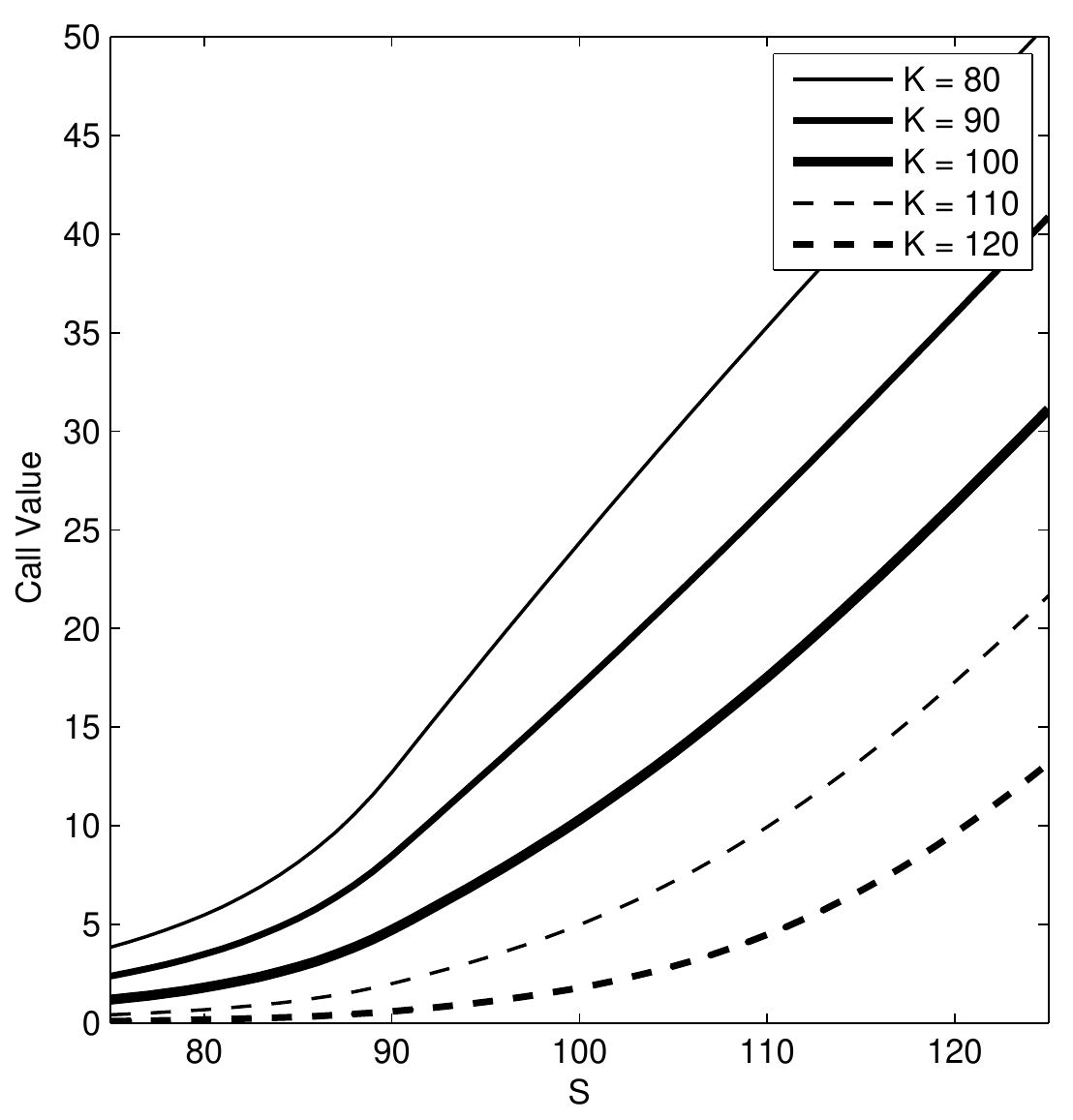}
  \caption{Call values under the CU model.}\label{fig2a}
\end{subfigure}
\hfill
\begin{subfigure}[b]{0.475\linewidth}
  \centering
  \includegraphics[width=\linewidth]{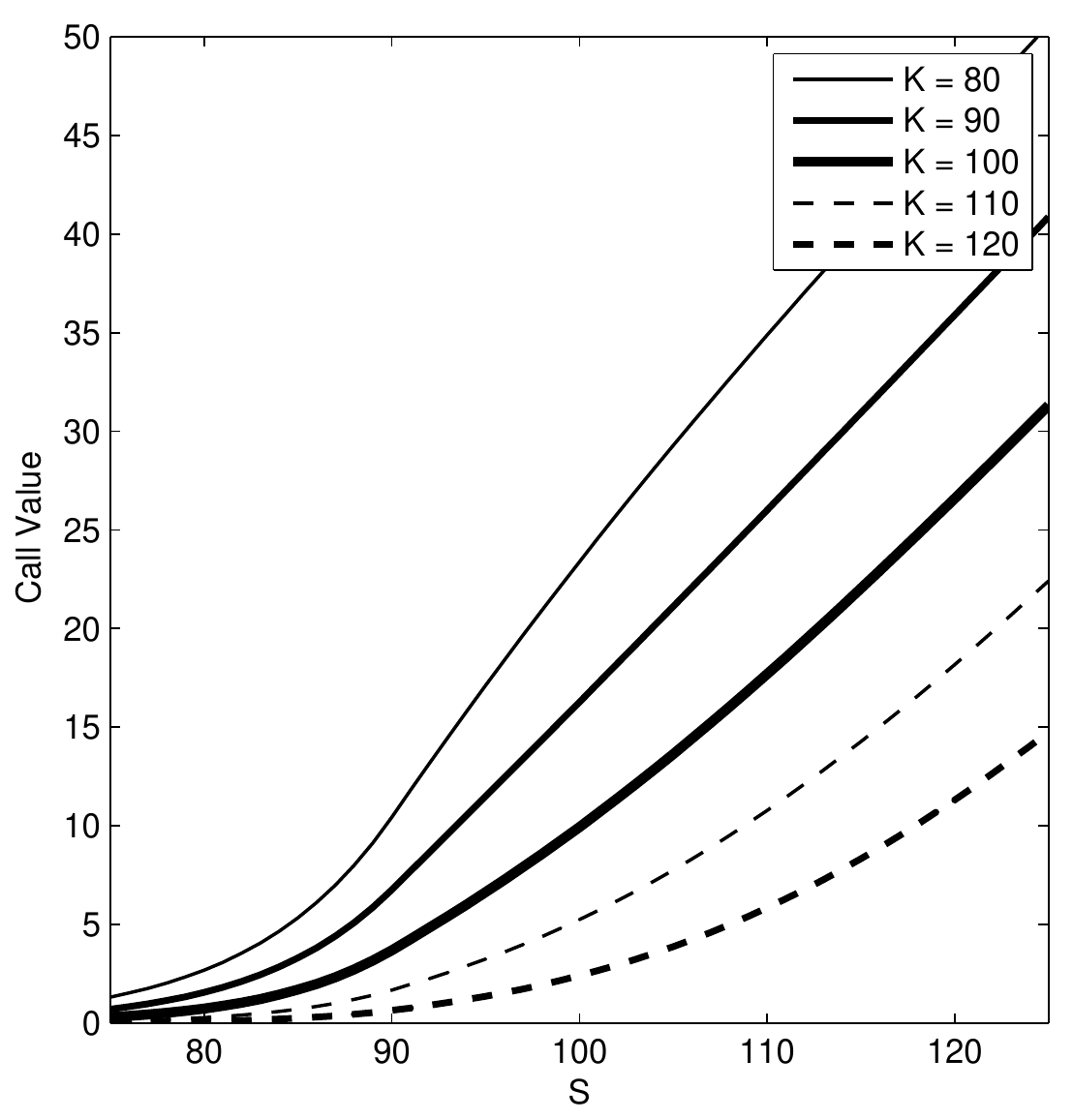}
  \caption{Call values under the BK model.}\label{fig2b}
\end{subfigure}
\caption{Values of the step-down call option, as function of spot $S$, computed for a~range of strikes under the confluent-U and Bessel-K models.}%
\label{fig2}
\end{figure}

\begin{figure}[ht]
  \centering
\begin{subfigure}[b]{0.475\linewidth}
  \centering
  \includegraphics[width=\linewidth]{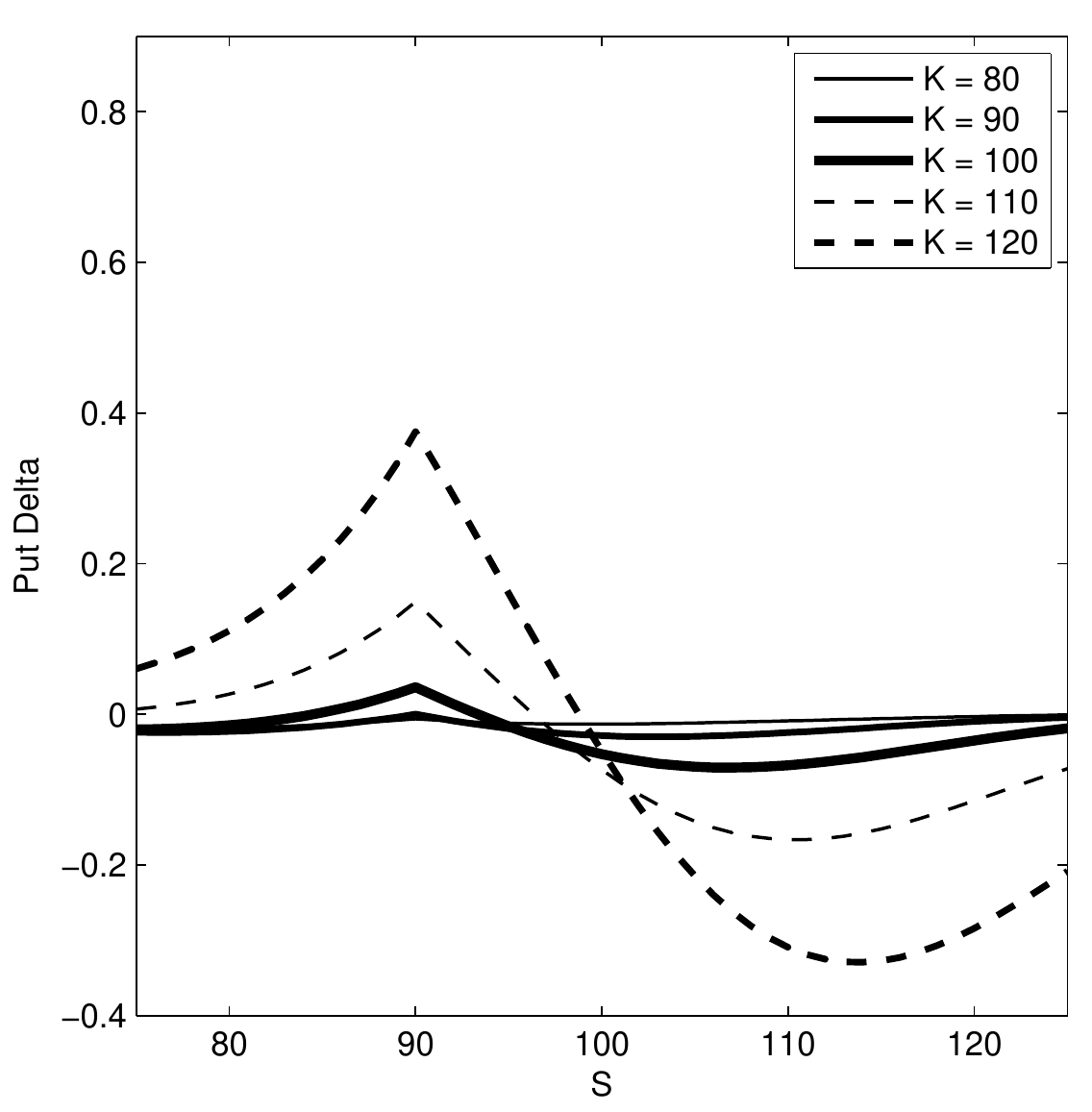}
  \caption{Put deltas under the CEV model.}\label{fig3a}
\end{subfigure}
\hfill
\begin{subfigure}[b]{0.475\linewidth}
  \centering
  \includegraphics[width=\linewidth]{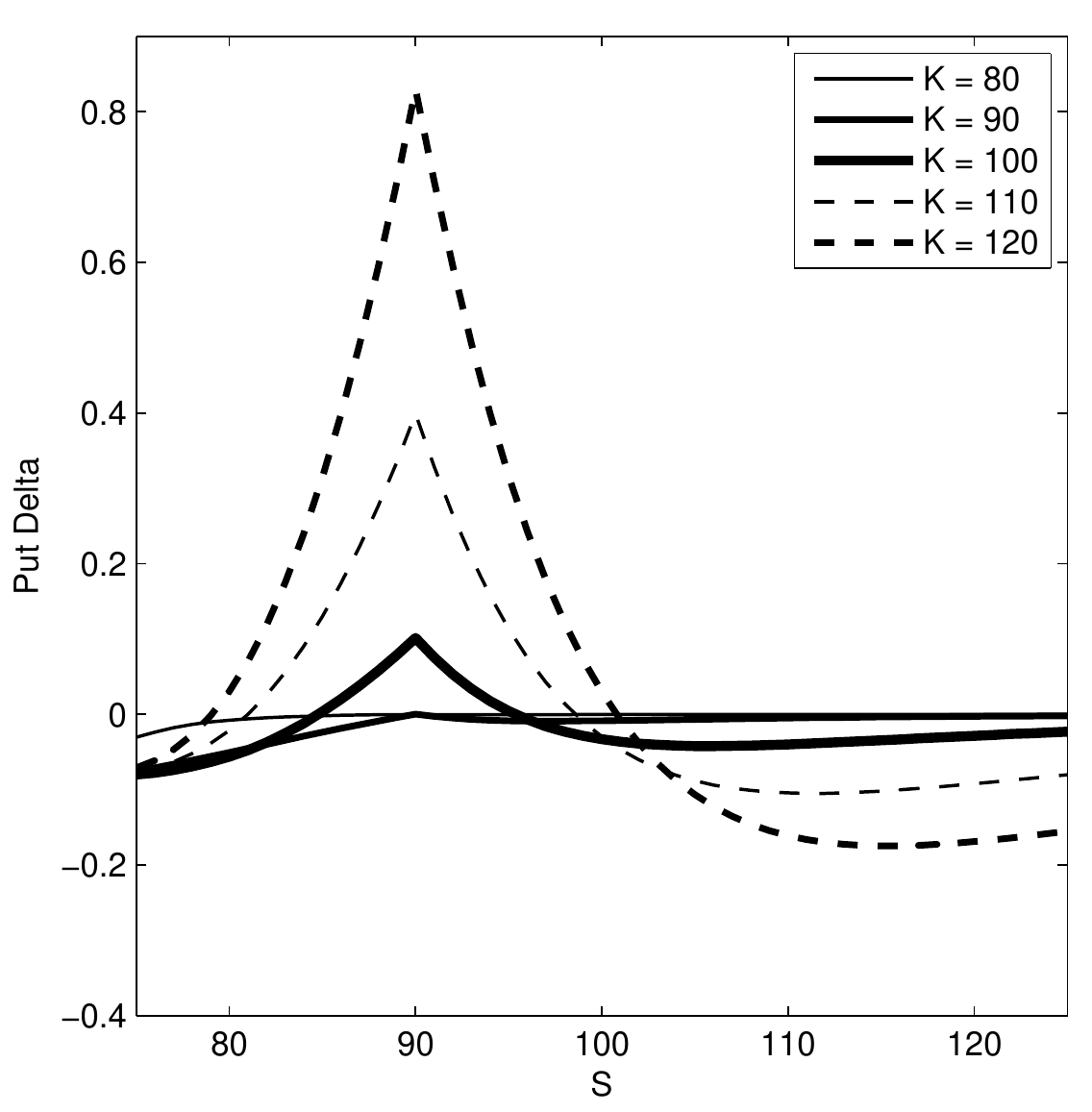}
  \caption{Put deltas under the UOU model.}\label{fig3b}
\end{subfigure}
\caption{Deltas of the step-down put option, as function of spot $S$, computed for a~range of strikes under the CEV and UOU models.}%
\label{fig3}
\end{figure}

\begin{figure}[ht]
  \centering
\begin{subfigure}[b]{0.475\linewidth}
  \centering
  \includegraphics[width=\linewidth]{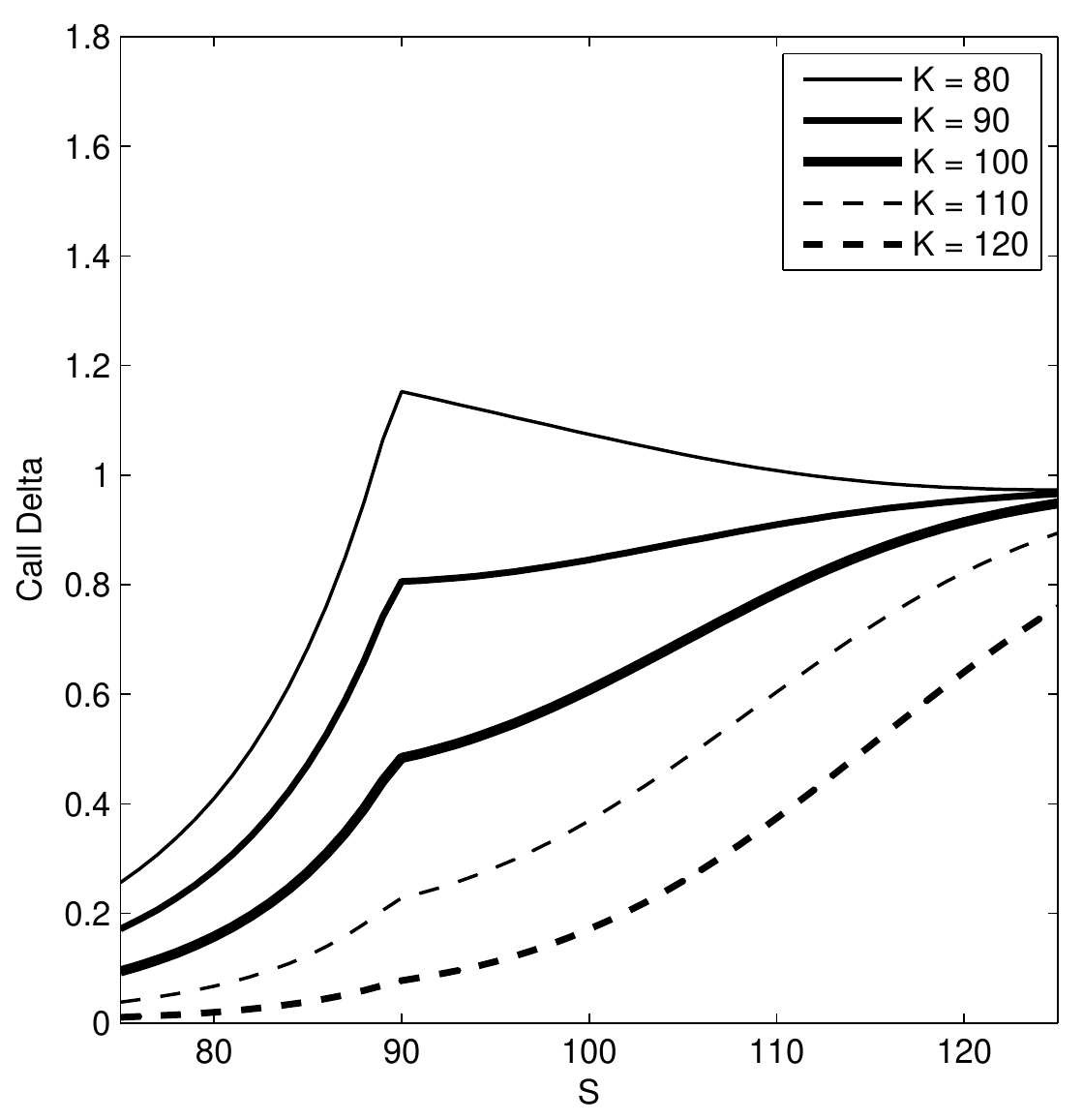}
  \caption{Call deltas under the CU model.}\label{fig4a}
\end{subfigure}
\hfill
\begin{subfigure}[b]{0.475\linewidth}
  \centering
  \includegraphics[width=\linewidth]{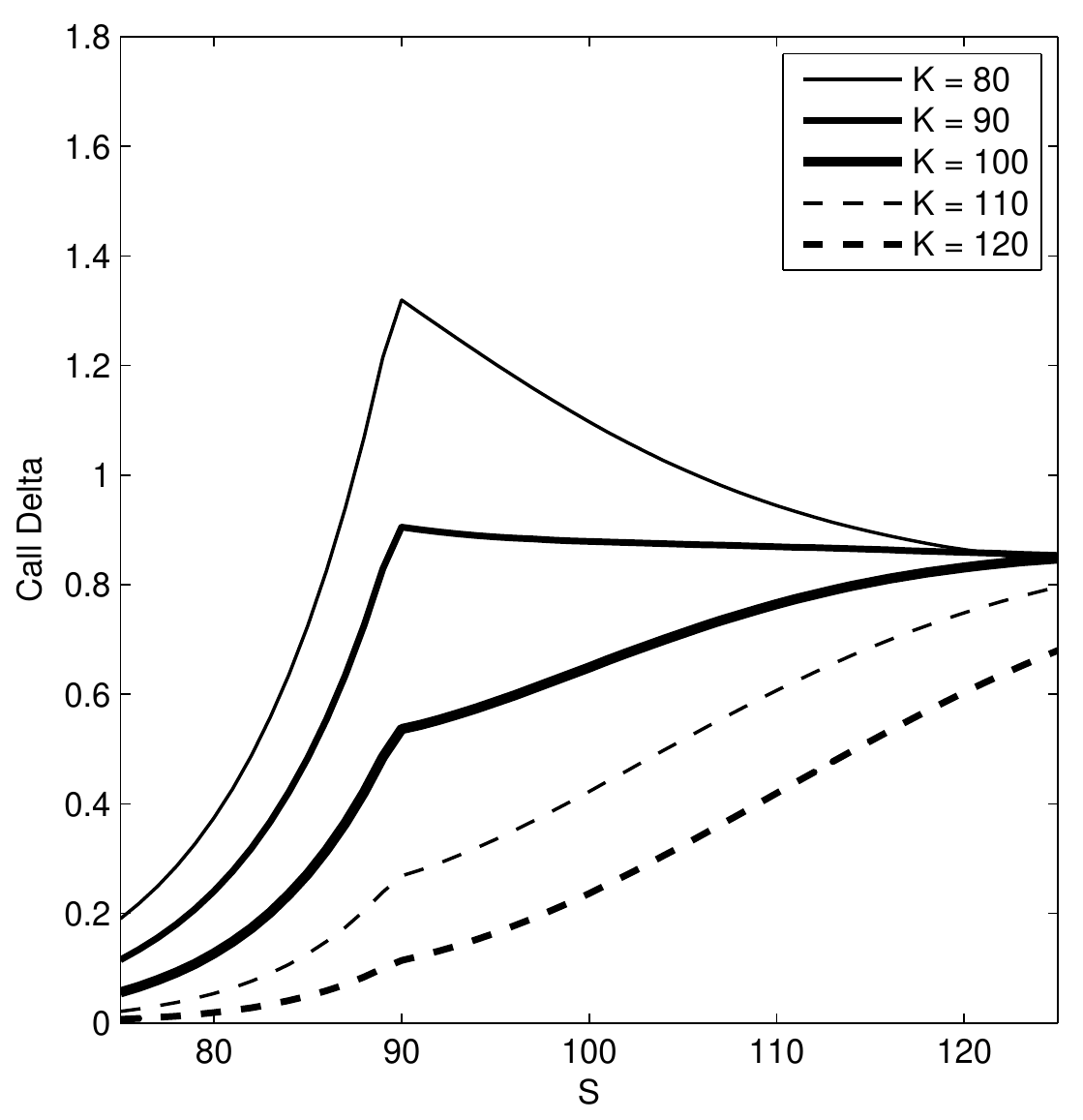}
  \caption{Call deltas under the BK model.}\label{fig4b}
\end{subfigure}
\caption{Deltas of the step-down call option, as function of spot $S$, computed for a~range of strikes under the (a) confluent-U and (b) Bessel-K models.}%
\label{fig4}
\end{figure}

\begin{figure}[ht]
  \centering
\begin{subfigure}[b]{0.475\linewidth}
  \centering
  \includegraphics[width=\linewidth]{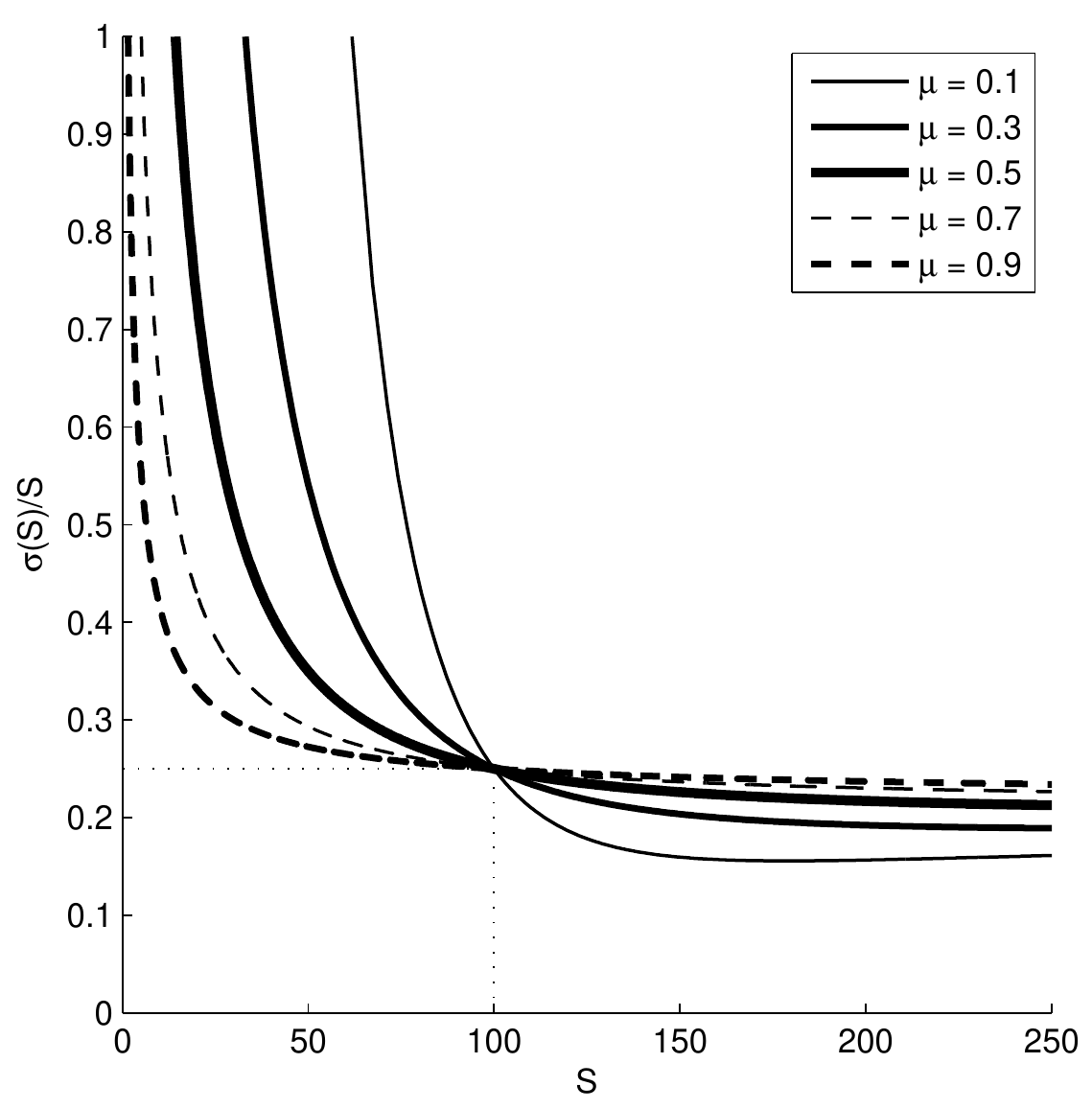}
  \caption{Local volatility functions.}\label{fig:BKmuA}
\end{subfigure}
\hfill
\begin{subfigure}[b]{0.475\linewidth}
  \centering
  \includegraphics[width=\linewidth]{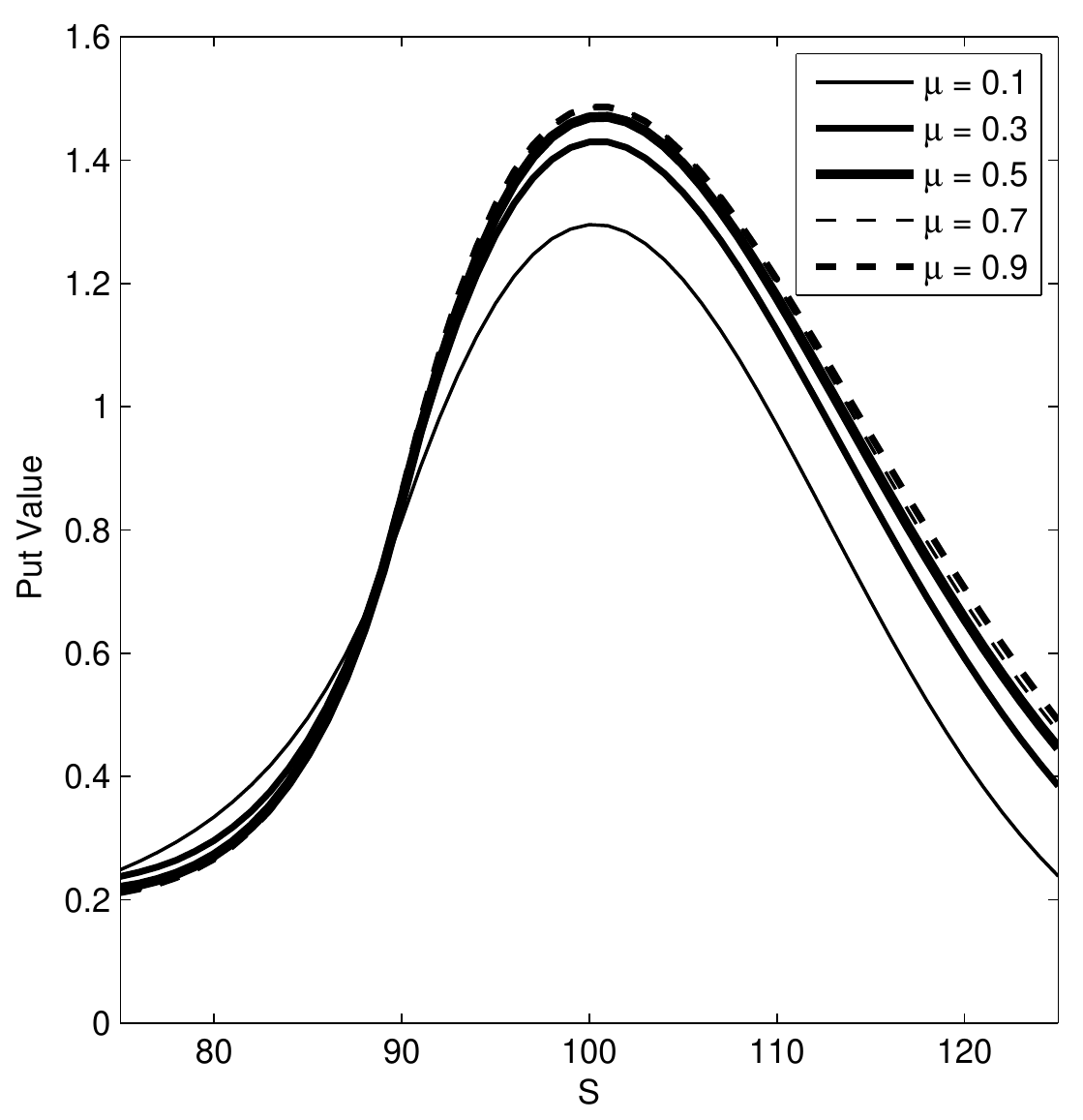}
  \caption{Step put values.}\label{fig:BKmuB}
\end{subfigure}
\caption{Local volatility functions and put option prices computed, as function of spot $S$, under the Bessel-K model for a~range of values of $\mu$. The parameter $c$ varies accordingly so that $\sigma(100)/100=0.25$ for all choices of $\mu$.
The other model parameters are as specified in Table~\ref{tb:Models}. The other parameters are $T=\frac{1}{2}$, $\alpha=5$, $L=90$, and $K=S_0=100$.}
\label{fig:BKmu}
\end{figure}

\begin{figure}[ht]
  \centering
\begin{subfigure}[b]{0.475\linewidth}
  \centering
  \includegraphics[width=\linewidth]{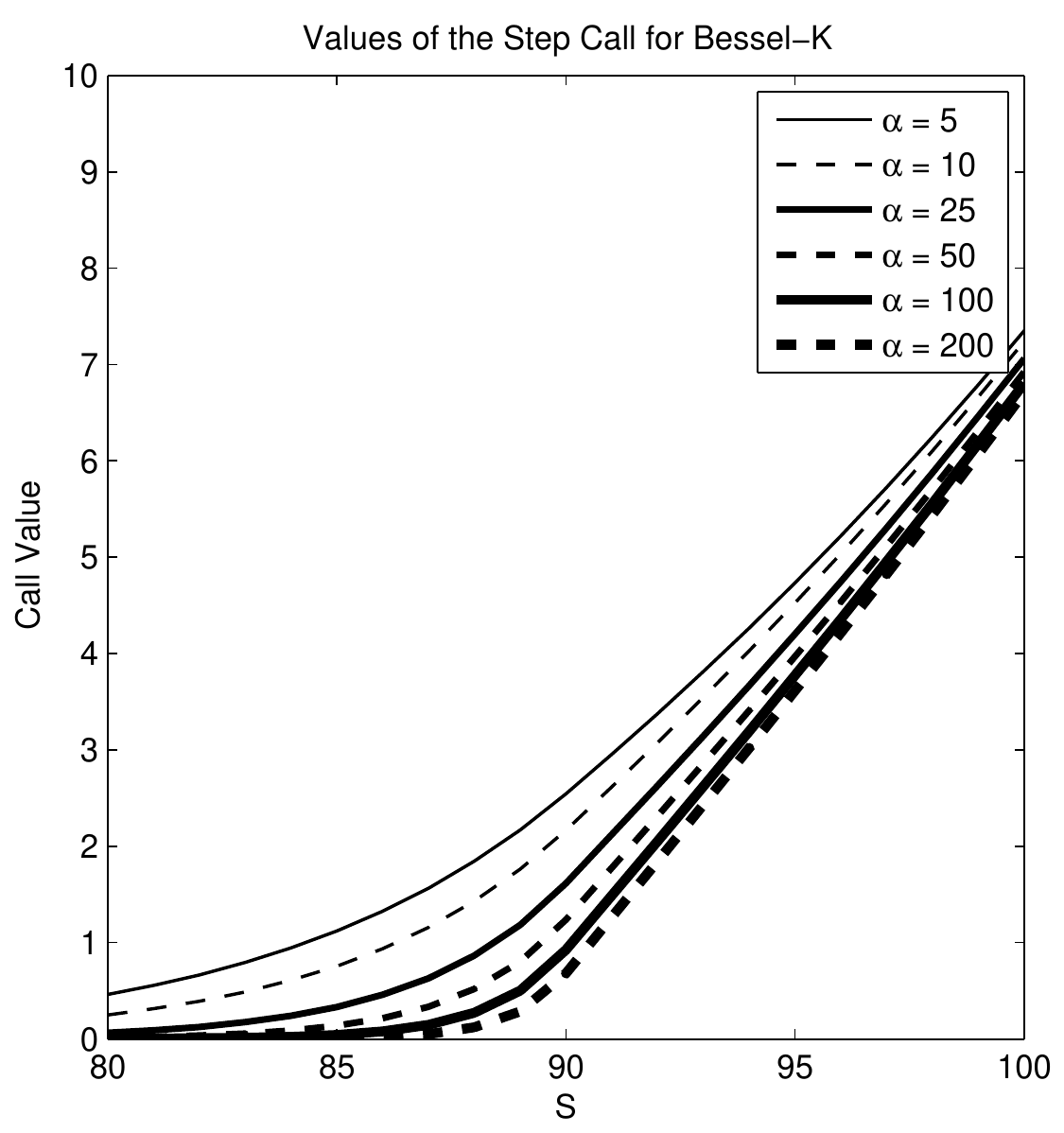}
  \caption{Call values.}\label{fig:BKrhoA}
\end{subfigure}
\hfill
\begin{subfigure}[b]{0.475\linewidth}
  \centering
  \includegraphics[width=\linewidth]{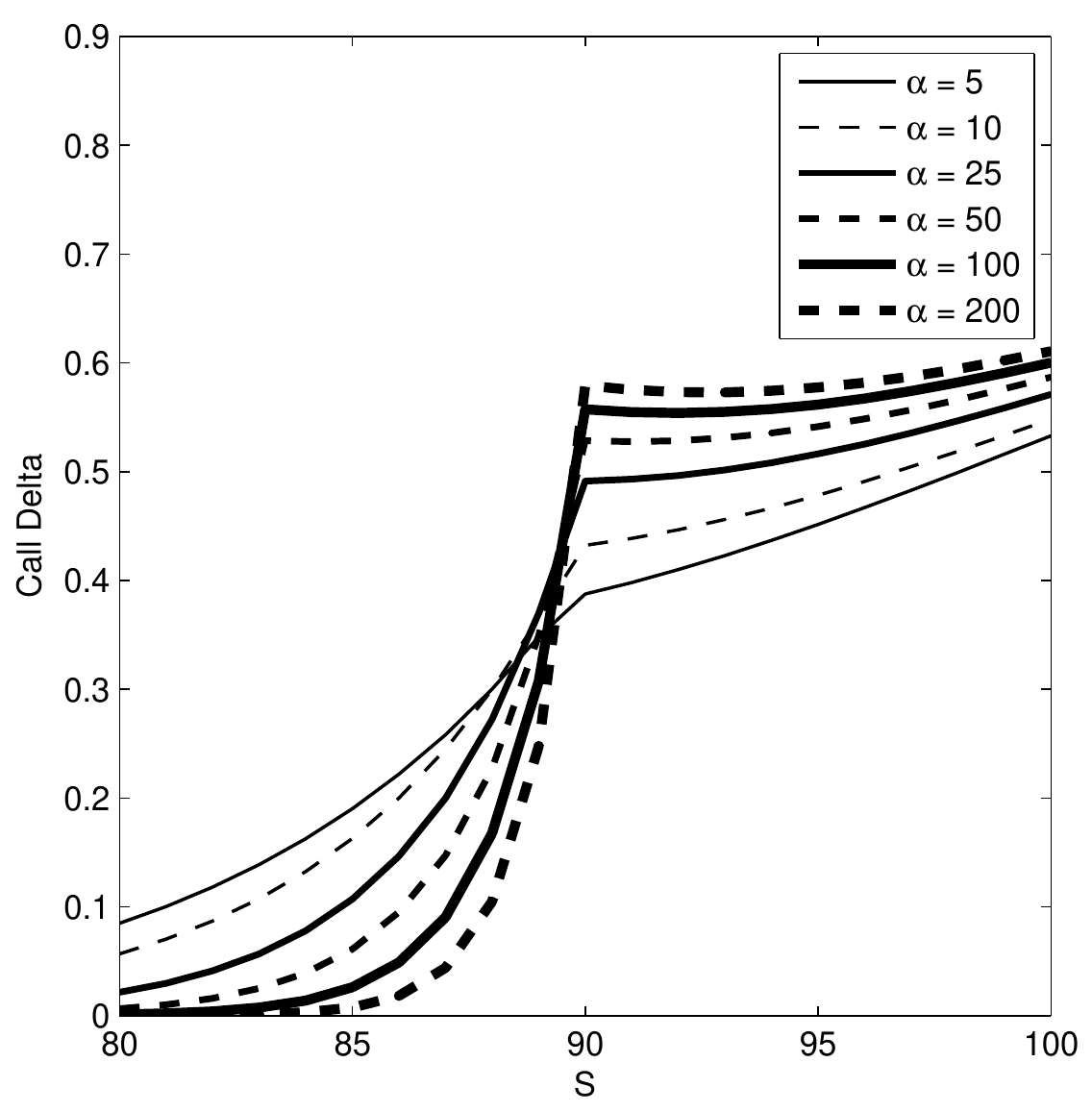}
  \caption{Call deltas.}\label{fig:BKrhoB}
\end{subfigure}
\caption{Step-down call option values and deltas are computed under the Bessel-K model (as specified in Table~\ref{tb:Models}) for increasing values of~$\alpha$. The option parameters are $K=100$, $T=0.5$, $L=90$.}%
\label{fig:BKrho}
\end{figure}

\begin{figure}[ht]
  \centering
\begin{subfigure}[b]{0.475\linewidth}
  \centering
  \includegraphics[width=\linewidth]{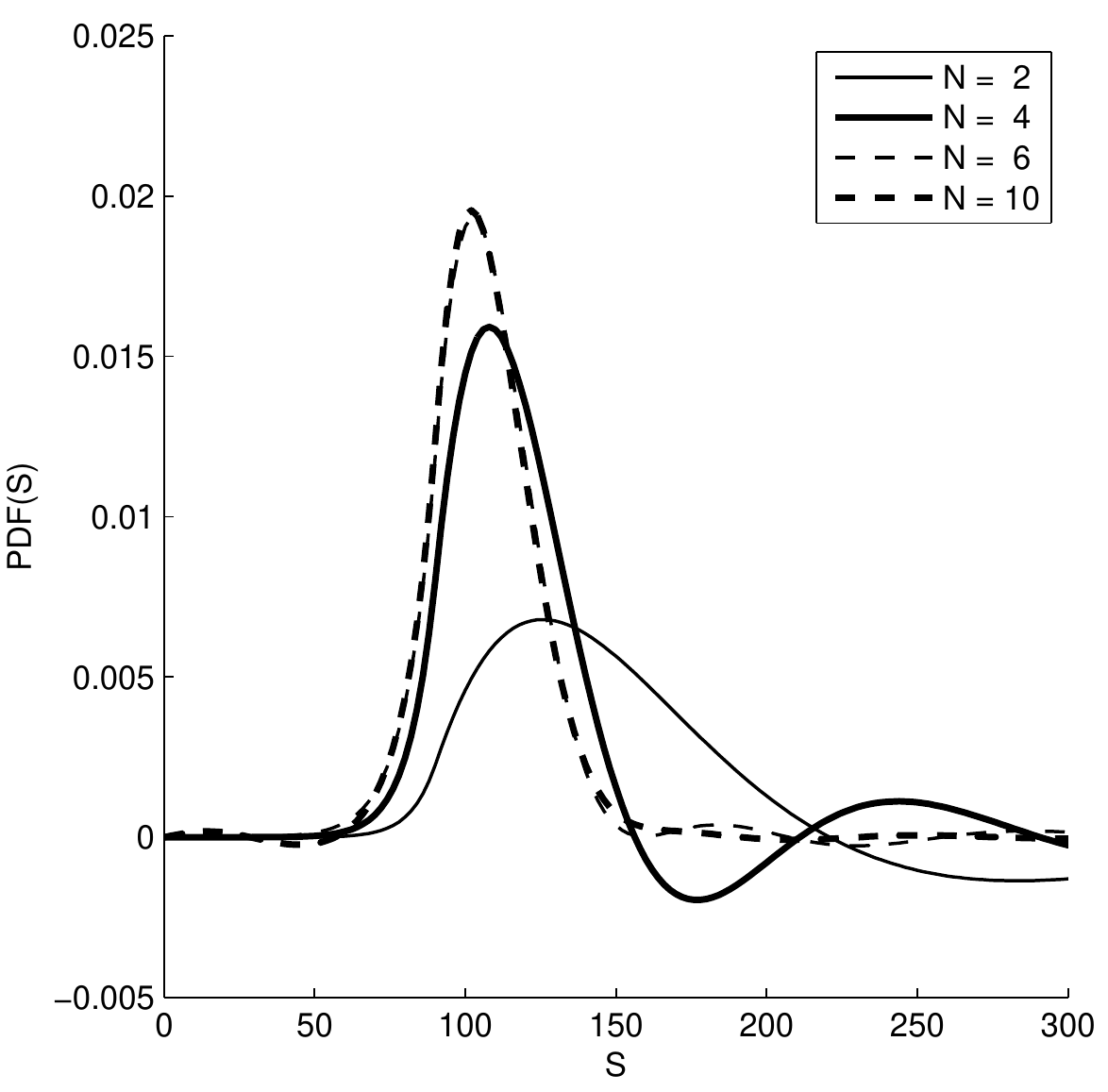}
  \caption{Approximations of the PDF $\tilde{p}_\alpha^{\ell,-}$.}\label{fig:convA}
\end{subfigure}
\hfill
\begin{subfigure}[b]{0.5\linewidth}
  \centering
    \begin{tabular}{rrr}
      \hline\noalign{\smallskip}
      $N$ & Call Value & Put Value \\
      \hline\noalign{\smallskip}
       4 & 0.818540 & 1.705722\\
       6 & 6.290608 & 2.867394\\
       10 & 7.593736 & 2.745855\\
       15 & 7.377383 & 2.539072\\
       20 & 7.351545 & 2.550244\\
       50 & 7.351327 & 2.549805\\
      \noalign{\smallskip}\hline\noalign{\smallskip}
    \end{tabular}
  \vspace{1in}
  \caption{Call and put values.}\label{fig:convB}
\end{subfigure}
\caption{The convergence of truncated series approximations of the PDF $\tilde{p}_\alpha^{\ell,-}$, as the number of terms $N$ increases. The computations were done for the Bessel-K model whose parameters are specified in Table~\ref{tb:Models}. The other parameters are $T=\frac{1}{2}$, $\alpha=5$, $L=90$, and $S_0=100$. The step-down option values are computed for the strike price $K=100$.}%
\label{fig:conv}
\end{figure}

\begin{figure}[ht]
  \centering
\begin{subfigure}[b]{0.475\linewidth}
  \centering
  \includegraphics[width=\linewidth]{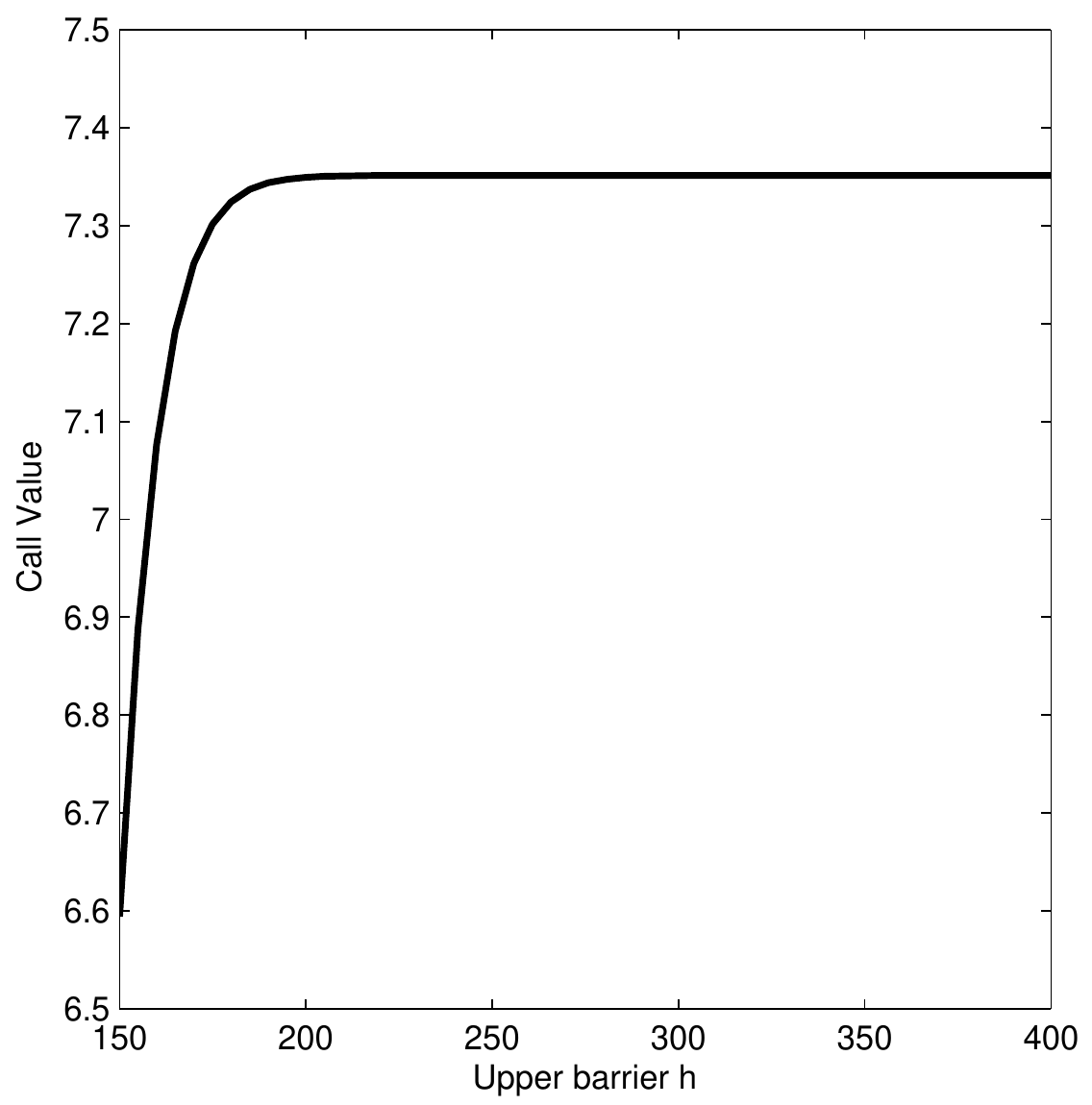}
  \caption{The call value as function of $\h $.}\label{fig:levelA}
\end{subfigure}
\hfill
\begin{subfigure}[b]{0.5\linewidth}
  \centering
    \begin{tabular}{rrr}
      \hline\noalign{\smallskip}
      $\h $ & Call Value & Put Value \\
      \hline\noalign{\smallskip}
      150 & 6.594098 & 2.549803\\
      160 & 7.076232 & 2.549805\\
      170 & 7.261562 & 2.549805\\
      180 & 7.324466 & 2.549805\\
      190 & 7.343837 & 2.549805\\
      200 & 7.349354 & 2.549805\\
      225 & 7.351269 & 2.549805\\
      250 & 7.351326 & 2.549805\\
      275 & 7.351327 & 2.549805\\
      300 & 7.351327 & 2.549805\\
      350 & 7.351327 & 2.549805\\
      400 & 7.351327 & 2.549805\\
      \noalign{\smallskip}\hline\noalign{\smallskip}
    \end{tabular}
  \vspace{0.5in}
  \caption{Call and put prices.}\label{fig:levelB}
\end{subfigure}
\caption{Convergence of the prices $C^-_{\mathrm{step}}(S_0=100,T=0.5,K=100)$ of the step-down call option as the imposed killing level $\h $ increases. The computations are for the Bessel-K model whose parameters are specified in Table~\ref{tb:Models}. The other parameters are $\alpha=5$ and $L=90$.}%
\label{fig:level}
\end{figure}

\end{document}